\newtheorem{thm}{Theorem}
\newtheorem{lem}{Lemma}
\newtheorem{prop}{Proposition}
\newcommand{\R}{\mathbb{R}}
\newcommand{\C}{\mathbb{C}}
\renewcommand{\Re}[1]{\operatorname{Re}\left\{#1\right\}}
\newcommand{\E}{\mathbb{E}}
\renewcommand{\d}[1]{d#1}
\newcommand{\vct}[1]{\boldsymbol{#1}}
\newcommand{\mtx}[1]{\boldsymbol{#1}}
\newcommand{\<}{\langle}
\renewcommand{\>}{\rangle}
\newcommand{\vctr}{\operatorname{vec}}
\newcommand{\set}[1]{\mathcal{#1}}
\DeclareMathOperator*{\minimize}{\text{minimize}}
\newcommand{\vc}{\vct{c}}
\newcommand{\ve}{\vct{e}}
\newcommand{\vf}{\vct{f}}
\newcommand{\vg}{\vct{g}}
\newcommand{\vh}{\vct{h}}
\newcommand{\vm}{\vct{m}}
\newcommand{\vp}{\vct{p}}
\newcommand{\vq}{\vct{q}}
\newcommand{\vr}{\vct{r}}
\newcommand{\vs}{\vct{s}}
\newcommand{\vu}{\vct{u}}
\newcommand{\vv}{\vct{v}}
\newcommand{\vw}{\vct{w}}
\newcommand{\vx}{\vct{x}}
\newcommand{\vy}{\vct{y}}
\newcommand{\vz}{\vct{z}}
\newcommand{\vzero}{\vct{0}}
\newcommand{\mA}{\mtx{A}}
\newcommand{\mC}{\mtx{C}}
\newcommand{\mF}{\mtx{F}}
\newcommand{\mG}{\mtx{G}}
\newcommand{\mH}{\mtx{H}}
\newcommand{\mI}{\mtx{I}}
\newcommand{\mR}{\mtx{R}}
\newcommand{\mS}{\mtx{S}}
\newcommand{\mX}{\mtx{X}}
\newcommand{\mZ}{\mtx{Z}}
\newcommand{\setA}{\set{A}}
\newcommand{\setC}{\set{C}}
\newcommand{\setH}{\set{H}}
\newcommand{\setI}{\set{I}}
\newcommand{\setN}{\set{N}}
\newcommand{\setO}{\set{O}}
\newcommand{\setP}{\set{P}}
\newcommand{\setS}{\set{S}}
\newcommand{\setT}{\set{T}}
\newcommand{\setX}{\set{X}}
\newcommand{\setZ}{\set{Z}}
\newcommand{\PP}{\mathbb{P}}
\def\gf{\nabla F}
\def\gg{\nabla G}
\def\gfh{\gf_{\vh}}
\def\ggh{\gg_{\vh}}
\def\gfx{\gf_{\vx}}
\def\ggx{\gg_{\vx}}
\def\tf{\tilde{F}}
\def\gtf{\nabla \tilde{F}}
\def\gtfh{\gtf_{\vh}}
\def\gtfx{\gtf_{\vx}}
\begin{document}
	

\title{Blind Deconvolution using Modulated Inputs}

\author{Ali~Ahmed
	\thanks{A. Ahmed is an Assistant Professor at the Department of Electrical Engineering, Information Technology University (ITU), Lahore 54000, Pakistan. Email: \texttt{alikhan@mit.edu.} This work was supported by the Higher Education Commission (HEC), Pakistan under the National Research Program for Universities (NRPU), Project no. 6856. Manuscript submitted on July 16, 2019.}}




\maketitle

\begin{abstract}
    This paper considers the blind deconvolution of multiple modulated signals/filters, and an arbitrary filter/signal. Multiple inputs $\boldsymbol{s}_1, \boldsymbol{s}_2, \ldots, \boldsymbol{s}_N =: [\boldsymbol{s}_n]$ are modulated (pointwise multiplied) with random sign sequences $\boldsymbol{r}_1, \boldsymbol{r}_2, \ldots, \boldsymbol{r}_N =: [\boldsymbol{r}_n]$,  respectively, and  the resultant inputs $(\boldsymbol{s}_n \odot \boldsymbol{r}_n) \in \mathbb{C}^Q, \ n \in [N]$ are convolved against an arbitrary input $\boldsymbol{h} \in \mathbb{C}^M$ to yield the measurements $\boldsymbol{y}_n = (\boldsymbol{s}_n\odot \boldsymbol{r}_n)\circledast \boldsymbol{h}, \ n \in [N] := 1,2,\ldots,N,$ where $\odot$ and $\circledast$ denote pointwise multiplication, and circular convolution. Given $[\boldsymbol{y}_n]$, we want to recover the unknowns $[\boldsymbol{s}_n]$ and $\boldsymbol{h}$.  We make a structural assumption that unknowns $[\boldsymbol{s}_n]$ are members of a known $K$-dimensional (not necessarily random) subspace, and prove that the unknowns can be recovered from sufficiently many observations using a regularized gradient descent algorithm  whenever the modulated inputs $\boldsymbol{s}_n \odot \boldsymbol{r}_n$ are long enough, i.e, $Q \gtrsim KN+M$ (to within logarithmic factors,  and signal dispersion/coherence parameters).  
	
	 Under the bilinear model, this is the first result on multichannel  ($N\geq 1$) blind deconvolution 
	 with provable recovery guarantees under near optimal (in the $N=1$ case) sample complexity estimates, and comparatively lenient structural assumptions on the convolved inputs. A neat conclusion of this result is that modulation of a bandlimited signal protects it against an unknown convolutive distortion. We discuss the applications of this result in passive imaging, wireless communication in unknown environment, and image deblurring.  A thorough numerical investigation of the theoretical results is also presented using phase transitions, image deblurring experiments, and noise stability plots.

\end{abstract}

\begin{IEEEkeywords}
	Blind deconvolution, gradient descent, passive imaging, modulation, random signs, multichannel blind deconvolution, random mask imaging, channel protection, image deblurring. 
\end{IEEEkeywords}


\section{Introduction}

This paper considers blind deconvolution of a single input convolved against multiple modulated inputs. We observe the convolutions of  $\vh\in \C^M$, and\footnote{The notation $[\vs_n]$ along with other notations in the paper is introduced in Notations section below.}  $[\vs_n] \in \C^Q$  modulated in advance with known $[\vr_n] \in \R^Q$, respectively. Modulation $(\vs_n \odot \vr_n)$ is simply the pointwise multiplication (Hadamard product) of the entries of $\vs_n$, and $\vr_n$. Mathematically, the observed convolutions are
\begin{align}\label{eq:model}
\vy_n = \vh \circledast (\vr_n\odot \vs_n), \ n \in [N], 
\end{align}
where $\circledast$ denotes an $L$-point circular convolution\footnote{Two vectors in $\C^M$, and $\C^Q$ are zero-padded to length $L$ and then circularly convolved to return an $L$-point circular convolution.} operator giving $[\vy_n] \in \C^L$. We always set $L \geq \max(Q,M)$. We want to recover $[\vs_n]$, and $\vh$ from the circular convolutions $[\vy_n]$. We make a structural assumption that the entries of the modulating sequences $[\vr_n]$ are random, in particular,  binary $\pm1$. This implicitly means that the signs of the inputs $\vs_n\odot\vr_n, \ n \in [N]$ are random/generic. In applications, this may either be justified by analog signal modulations with binary waveforms (easily implementable) prior to convolutions or might implicitly hold when the inputs are naturally sufficiently diverse (dissimilar signs).

 This problem arises in passive imaging. An ambient uncontrollable source generates an unstructured signal that drives multiple convolutive channels, and one aims to recover the source signal, and channel impulse responses (CIRs) from the recorded convolutions. Recovering CIRs reveals important information about the structure of the environment such as in seismic interferometry \cite{curtis2006seismic,bharadwaj2018focused}, and passive synthetic aperture radar imaging \cite{marron1995passive}. Recovering the source signal is required in, for example, underwater acoustics to classify the identity of a submerged source \cite{sabra2004blind,sabra2010ray,tian2017multichannel}, and in speech processing to clean the reverberated records \cite{yoshioka2012making}. In general, the problem setup \eqref{eq:model} is of interest  in signal processing, wireless communication, and system theory.  Apart from other applications, one specific and interesting  result of general interest is that randomly modulating a bandlimited analog signal protects it against an unknown linear time invariant (or convolutive) system; this will also be discussed in detail below.

We assume that $[\vs_n]$ live in known subspaces, that is, each $\vs_n$ can be written as the multiplication of a known $Q \times K$ tall orthonormal matrix\footnote{The model and all the results in the paper can be easily be generalized to different matrix $\mC_n$ for each $n$.} $\mC$, and a short vector $\vx_n$ of expansion coefficients. Mathematically, 
\begin{align}\label{eq:subspaces}
\vs_n = \mC\vx_n, \ n \in [N]. 
\end{align}
 
  It is important to point out critical differences in the structural assumptions compared to the contemporary recent literature \cite{ahmed2015convex}, and \cite{ahmed2012blind,li2016rapid,lee2018spectral,ahmed2016leveraging,ahmed2018convex} on blind deconvolution, where at least one of the convolved signals is assumed to live in a known subspace spanned by the columns of a random Gaussian matrix.  Signal subspace in actual applications is often poorly described by the Gaussian model. We relinquish the restrictive Gaussian subspace assumption, and give a provable blind deconvolution result by only assuming random/generic sign (modulated) signals that reside in realistic subspaces spanned by DCT, wavelets, etc. 
 
 Additionally, we take $\vh$ to be completely arbitrary, and do not assume any structure such as a known subspace or sparsity in a known basis; this is unlike some of the other recent works  \cite{ahmed2016leveraging,ahmed2012blind,li2016rapid,lee2018spectral}.  In general, we take $M \leq L$ as already assumed in the observation model \eqref{eq:model}. Equivalently,  we assume length $M$ vector $\vh$ to be zero-padded to length $L$ before the $L$-point circular convolution. 
 In the particular case of multiple ($N>1$) convolutions, as will be shown later, no zero-padding is assumed, i.e., one can set $M$ as large as $L$. This is important in applications such as passive imaging, where often the source signal is uncontrolled, and unstructured, and hence cannot be realistically assumed to be zero-padded; see, the discussion in Section \ref{sec:applications}.
 
 Let $\mF$ be an $L \times L$ DFT matrix with entries $F[\omega,n] = \tfrac{1}{\sqrt{L}}e^{-\iota 2\pi \omega n/L}, \ (\omega,n) \in [L]\times[L].$  Formally, we take the measurements in the Fourier domain 
\begin{align}\label{eq:measurements}
\hat{\vy}_n &= \sqrt{L}\big(\mF_M\vh_0 \odot \mF_Q(\vr_n\odot \vs_{0,n})\big) + \hat{\ve}_n\notag\\
&= \sqrt{L}(\mF_M\vh_0 \odot \mF_Q\mR_n\mC\vx_{0,n}) + \hat{\ve}_n,
\end{align}
where $\mR_n := \text{diag}(\vr_n)$ is a $Q\times Q$ diagonal matrix, $\vh_0$, $\vx_0 : = \vctr([\vx_{0,n}])$ are the ground truths, $\hat{\vy}_n = \mF \vy_n$, and $[\hat{\ve}_n] \in \C^L$ denote the additive noise in the Fourier domain. To deconvolve, we minimize the measurement loss by taking a gradient step in each of the unknowns $\vh_0$, and $[\vx_{0,n}]$ while keeping the others fixed. This paper details a particular set of conditions on the sample complexity, subspace dimensions, and the signals/filters under which this computationally feasible gradient descent scheme provably succeeds.

\subsection{Notations}\label{sec:Notation}
Standard notation for matrices (capital, bold: $\mC$, $\mF$, etc.),  column vectors (small, bold: $\vx$, $\vy$, etc.), and scalars ($\alpha$, $c$, $C$, etc.) holds. Matrix and vector conjugate transpose is denoted by $*$ (e.g. $\mA^*$, $\vx^*$). A bar over a column vector $\bar{\vx}$ returns the same vector with each entry complex conjugated. 

In general, the notation $[\vx_n]$ refers to the set of vectors $\{\vx_1, \vx_2, \ldots, \vx_N\}$. Moreover, $[\vx_n] \in \C^Q$ means that $\vx_n \in \C^Q$ for every $n$. For a scalar $N$, we define $[N] := \{1,2,3,\ldots,N\}$. The notation $\vctr([\vx_n])$ refers to a  concatenation of $N$ vectors $\vx_1, \vx_2, \ldots, \vx_N$, i.e., $\vctr([\vx_n]) := [\vx_1^*, \vx_2^*, \ldots,\vx_N^*]^*$. 

For a matrix $\mC$, we denote by $\mC^{\otimes N}$, a block diagonal matrix $\sum_{n=1}^N \ve_n \ve_n^* \otimes \mC$, where $\otimes$ is the standard Kronecker product, and $[\ve_n]$ are the standard $N$-dimensional basis vectors. Building on this notation, we define, for example,  $(\mR_n\mC)^{\otimes N} := \sum_{n=1}^N\ve_n \ve_n^* \otimes \mR_n\mC$ for a sequence of matrices $\mR_1, \mR_2, \ldots, \mR_N$, and $\mC$.  We denote by $\mF_J$, a submatrix formed by first $L \times J$ columns of an $L \times L$ matrix $\mF$; e.g. $(\mF_J)^*$ denotes a $J \times L$ matrix. We denote by $\mI_K$, a $K \times K$ identity matrix. Absolute constants will be denoted by $C$, $c_1, c_2, \ldots $, their value might change from line to line. We will write $ A \gtrsim B$ if there is an absolute constant $c_1$ for which $A \geq c_1 B$. We use $\|\cdot\|_2$, $\|\cdot\|_\infty$ to denote standard $\ell_2$, and $\ell_\infty$ norms, respectively. Moreover,  $\|\cdot\|_*$, $\|\cdot\|_{2 \rightarrow 2}$, and $\|\cdot\|_F$  signify matrix nuclear, operator, and Frobenius norms, respectively. 
\subsection{Coherence Parameters}
Our main theoretical results depend on some signal dispersion measures that characterize how diffuse signals are in the Fourier domain. Intuitively, concentrated (not diffuse) signals in the Fourier domain \textit{annihilate}  the measurements in \eqref{eq:measurements} making it relatively difficult (more samples required) to recover such signals. We refer to the signal diffusion measures as coherence parameters, defined and discussed below. 

For arbitrary vectors $\vh \in \C^M$, $\vx := \vctr([\vx_n])\in \C^{KN}$, where $[\vx_n] \in \C^{K}$ for every $n$, we define coherences 
\begin{align}\label{eq:muh-nux}
\mu_h^2 : = L \frac{\|\mF_M\vh\|_\infty^2}{\|\vh\|_2^2}, \  \nu_x^2 := QN \frac{\|\mC^{\otimes N}\vx\|_\infty^2 }{\|\vx\|_2^2}, \ \text{and}  \ \nu_{\max}^2 := Q\|\mC\|_\infty^2,
\end{align}
where as noted in the notations section above, $\mC^{\otimes N}$ is a block diagonal matrix formed by stacking $N$ matrices $\mC$. 

Similar coherence parameters appear in the related recent literature on blind deconvolution \cite{ahmed2012blind,ahmed2016leveraging}, and elsewhere in compressed sensing \cite{candes09ex,candes2007sparsity}, in general. Without loss of generality, we only assume that $\|\vh_0\|_2 = \sqrt{d_0}$, and $\|\vx_0\|_2 = \sqrt{d_0}$. For brevity, we will denote the coherence parameters $\mu^2_{h_0}$, and $\nu^2_{x_0}$ of the fixed ground truth vectors $(\vh_0,\vx_0)$ by
\begin{align}\label{eq:mu-nu}
\mu^2:=\mu^2_{h_0}, \ \text{and} \ \nu^2 := \nu^2_{x_0}. 
\end{align}

In words, coherence parameter $\mu_h^2$ is the peak value of the frequency spectrum of a fixed norm vector $\vh$. A higher value roughly indicates a concentrated spectrum and vice versa. It is easy to check that $1 \leq \mu_h^2 \leq L$. 

On the other hand, $\nu_x^2$ quantifies the dispersion (not in the Fourier domain) of the signals $\vs_n = \mC\vx_n$. A signal concentrated in time (mostly zero) remains somewhat oblivious to the random sign flips $\vr_n \odot \vs_n$, and as a result is not as well-dispersed in the frequency domain. Let $\vc_{q,n}^*$ be the rows of $\mC^{\otimes N}$. By definition, $\nu_x^2\|\vx\|_2^2 \geq QN |\vc_{q,n}^*\vx|^2$ for any $(q,n) \in [Q] \times [N]$. Summing over $(q,n) \in [Q]\times[N]$ on both sides, and using the isometry of $\mC$ gives us the inequality $\nu_x^2 \geq 1$. The upper bound $\nu_x^2 \leq QN$ is easy to see using Cauchy Schwartz inequality, hence, $ 1 \leq \nu_x^2 \leq QN$. 

The third coherence parameter $\nu_{\max}^2$ in \eqref{eq:muh-nux} ensures that the subspace of vectors $\vs_n$ is generic---it is spanned by well-dispersed vectors. One can easily check that $1 \leq \nu_{\max}^2 \leq Q$, and the upper bound is achieved for $\mC = \begin{bmatrix} \mI_K \\ \mathbf{0} \end{bmatrix}$.

Our results indicate that for successful recovery, the sample complexity or the number $LN$ of measurements, and the length of the modulated signals $QN$ scale with $\mu^2$, $\nu^2$, and $\nu_{\max}^2$. 

\subsection{Signal Recovery via Regularized Gradient Descent}\label{sec:gradient-descent}
Notice that the measurements in \eqref{eq:measurements} are non-linear in the unknowns $(\vh_0,\vx_0)$, however, are linear in the rank-1 outer-product $\vh_0\bar{\vx}_0^*$.  To see this, let $\vf^*_\ell \in \C^M$ be the $\ell$th row of $\mF_M$, an $L \times M$ submatrix of the $L \times L$ normalized DFT matrix $\mF$, and $\hat{\vc}_{\ell,n} \in \C^{KN}$ be the $\ell$th row in the $n$th block-row of the  $LN \times KN$ block-diagonal matrix $\sqrt{L}(\mF_Q\mR_n\mC)^{\otimes N}$. The $\ell$th entry $\hat{y}_n[\ell]$ of measurements $\hat{\vy}_n$ in \eqref{eq:measurements} is then simply 
\begin{align}\label{eq:entrywise-measurements}
\hat{y}_n[\ell] = \vf_\ell^*\vh_0\bar{\vx}_0^*\hat{\vc}_{\ell,n}+\hat{e}_n[\ell] = \< \vf_\ell\hat{\vc}_{\ell,n}^*,\vh_0\bar{\vx}_0^*\>+\hat{e}_n[\ell];
\end{align}
the linearity of the measurements in $\vh_0\bar{\vx}_0^*$ is clear from the last inequality above. We also define a linear map $\setA: \C^{M \times KN} \rightarrow \C^{LN}$ that maps $\vh_0\vx_0^*$ to the vector $\hat{\vy} := \vctr([\hat{\vy}_n])$. The action of $\setA$ on a rank-1 matrix $\vh\vx^*$ returns 
\begin{align}\label{eq:linear-map}
&\setA(\vh\vx^*) := \{\vf_\ell^*\vh \bar{\vx}^*\hat{\vc}_{\ell,n}\}_{\ell,n}, \ (\ell, n)  \in [L] \times [N], \notag\\
&\text{and therefore,} \ \hat{\vy} = \setA(\vh_0\vx_0^*) + \ve,
\end{align}
where in the last display above $\ve := \vctr([\hat{\ve}_n])$, and we used the definition of $\setA$ to compactly express \eqref{eq:entrywise-measurements}.  It also shows that multichannel blind deconvolution with a shared input $\vh_0$ can be treated jointly as a rank-1 matrix $\vh_0\vx_0^* \in \C^{M \times KN}$ recovery problem, and the observations $[\hat{\vy}_n]$ in all the channels are the linear measurements of this common rank-1 object.
 
Given measurements $\hat{\vy}$ of the ground truth $(\vh_0,\vx_0)$, we employ a regularized gradient descent algorithm that aims to minimize a loss function:
\begin{align}\label{eq:tF-def}
\tf(\vh,\vx) := F(\vh,\vx)+G(\vh,\vx).
\end{align}
w.r.t. $\vh$, and $\vx$, where the functions $F(\vh,\vx)$, and $G(\vh,\vx)$  account for the measurement loss, and regularization, respectively; and are defined below 
\begin{align}\label{eq:F-def}
& F(\vh,\vx) := \|\setA(\vh\vx^*) - \hat{\vy}\|_2^2 =  \|\setA(\vh\vx^*-\vh_0\vx_0^*)-\ve\|_2^2= \notag \\
&\|\setA(\vh\vx^*-\vh_0\vx_0^*)\|_2^2 + \|\ve\|_2^2 - 2 \text{Re}(\<\setA^*(\ve),\vh\vx^*-\vh_0\vx_0^*\>),
\end{align}
and
\begin{align}\label{eq:G-def}
& G(\vh,\vx) := \rho\Bigg[ G_0\left( \frac{\|\vh\|_2^2}{2d} \right)+ G_0\left(\frac{\|\vx\|_2^2}{2d} \right)+\sum_{\ell=1}^L G_0\left( \frac{L |\vf_\ell^* \vh|^2}{8d \mu^2}\right) \notag\\
&\qquad\qquad + \sum_{q=1}^Q\sum_{n=1}^N G_0\left( \frac{QN |\vc_{q,n}^* \vx|^2}{8d \nu^2}\right)\Bigg], 
\end{align}
where $G_0 (z ) = \max\{z-1,0\}^2$. Conforming to the choice in the proofs below, we set $\rho \geq d^2+\|\ve\|_2^2$, and $ 0.9 d_0 \leq d \leq 1.1 d_0$ (proof of Theorem \ref{thm:initialization} below). Together, the first and third term in the regularizer $G(\vh,\vx)$ penalize any $\vh$ for which $\|\vh\|_2^2 > 2d$, and $\|\vh\|_2^2\mu_h^2 = L \|\mF_M\vh\|_\infty^2 > 8d \mu^2$. Similarly, the second and fourth term penalize the coherence $\nu_x^2$ and norm of $\vx$. In words, the regularizer keeps the coherences $\mu_h^2$, $\nu_x^2$; and norms of $(\vh,\vx)$ from deviating too much from those of the ground truth $(\vh_0,\vx_0)$.

The proposed regularized gradient descent algorithm takes alternate Wirtinger gradient (of the loss function $\tf{(\vh,\vx)}$) steps in each of  the unknowns $\vh$, and $\vx$ while fixing the other; see Algorithm \ref{algo:gradient-descent} below for the pseudo code. The Wirtinger gradients are defined as\footnote{For a complex function $f(\vz)$, where $\vz = \vu + \iota \vv \in \C^L$, and $\vu, \vv \in \R^L$, the Wirtinger gradient is defined as $
\frac{\partial f}{\partial \bar{\vz}} = \frac{1}{2}\left( \frac{\partial f}{\partial \vu} + \iota \frac{\partial f}{\partial \vv}  \right).$} 
\begin{align}\label{eq:gFh-gFx-def}
\nabla \tilde{F}_{\vh} := \frac{\partial \tilde{F}}{\partial \bar{\vh}} = \frac{\overline{\partial \tilde{F}}}{\partial {\vh}}, \ \text{and} \ \nabla \tilde{F}_{\vx} := \frac{\partial \tilde{F}}{\partial \bar{\vx}} = \frac{\overline{\partial \tilde{F}}}{\partial {\vx}}.
\end{align}

We explicitly write here the gradients of $F(\vh,\vx)$ in \eqref{eq:F-def} w.r.t. $\vh$, and $\vx_n$ to shed a bit more light on how the multichannel problem is jointly solved across all the channels for the same $\vh$.
Recall from \eqref{eq:linear-map} that by definition 
\begin{align*}
\setA(\vh\vx^*) = \begin{bmatrix} 
\mF_M\vh\odot \mF_Q \mR_1 \mC\vx_1\\
\mF_M\vh\odot \mF_Q \mR_2 \mC\vx_2\\
\vdots\\
\mF_M\vh\odot \mF_Q \mR_N \mC\vx_N
\end{bmatrix}.
\end{align*}
It is then easy to see that the gradients of $\mF(\vh,\vx)$ w.r.t. $\vh$, and $[\vx_n]$ are
\begin{align*}
&\gfh =\sum_{n=1}^N\mF_M^*\big[(\overline{\mF_Q\mR_n\mC\vx_n}) \odot (\mF_M\vh\odot \mF_Q \mR_n \mC\vx_n - \hat{\vy}_n)\big]\\
& \nabla F_{\vx_n}  = (\mF_Q \mR_n \mC)^*\big[(\overline{\mF_M\vh}) \odot (\mF_M\vh\odot \mF_Q \mR_n \mC\vx_n - \hat{\vy}_n)\big],
\end{align*}
where recall that bar notation $\bar{\vz}$ represents the entry wise conjugate of a vector $\vz$.
The gradient $\gfx$ is obtained by stacking $[\nabla F_{\vx_n}]$ in a column vector. However, as $\vh$ is fixed across channels, its gradient update is jointly computed as an average of the contributions from all the $N$ channels. Jointly solving for $\vh$  enables recovery of arbitrary $\vh$ with no additional assumption such as $\vh$ lying in a  known subspace as is the case for single-channel blind deconvolution \cite{ahmed2012blind,li2016rapid}.

Similar algorithms with provable recovery results appeared earlier beginning with \cite{jain2013low,sun2016guaranteed} for matrix completion, and \cite{candes2015phase} for phase retrieval, and in \cite{li2016rapid} for blind deconvolution, however, with observation model different from \eqref{eq:model} considered here.

\begin{algorithm}[H]
	\caption{Wirtinger gradient descent with a step size $\eta$}
	\label{algo:gradient-descent}
	\begin{algorithmic}\small
	    \State \textbf{Input:} Obtain $(\vu_0,\vv_0)$ via Algorithm \ref{algo:initialization} below.
		\For{$t = 1, \ldots$}
	    \State $ \vu_t \gets  \vu_{t-1} - \eta \nabla \tf_{\vh} (\vu_{t-1},\vv_{t-1} )$
		\State $ \vv_t \gets  \vv_{t-1} - \eta \nabla \tf_{\vm} (\vu_{t-1},\vv_{t-1} )$
		\EndFor
	\end{algorithmic}
\end{algorithm}

Finally, a suitable initialization $(\vu_0,\vv_0)$ for Algorithm \ref{algo:gradient-descent} is computed using Algorithm \ref{algo:initialization} below. In short, the left and right singular vectors of $\setA^*(\hat{\vy})$ when projected in the set of sufficiently incoherent (measured in terms of the coherence $\mu$, $\nu$ of the original vectors $\vh_0$, and $\vx_0$) vectors supply us with the initializers $(\vu_0, \vv_0)$. 
\begin{algorithm}[H]
	\caption{Initialization}\small
	\label{algo:initialization}
	\begin{algorithmic}
		\State \textbf{Input:} Compute $\setA^*(\hat{\vy})$, and find the leading singular value $d$, and the corresponding left and right singular vectors $\hat{\vh}_0$, and $\hat{\vx}_0$, respectively. 
		\State Solve the following optimization programs 
		\State $\vu_0 \gets \underset{\vh}{\text{argmin}} \ \|\vh - \sqrt{d} \hat{\vh}_0\|_2, \ \text{subject to} \  \sqrt{L}\|\mF_M\vh\|_\infty \leq 2 \sqrt{d} \mu,$\ \text{and} \ 
		\State $\vv_0 \gets \underset{\vx}{\text{argmin}} \ \|\vx - \sqrt{d} \hat{\vx}_0\|_2, \ \text{subject to} \ \sqrt{QN}\|\mC^{\otimes N}\vx\|_\infty \leq 2 \sqrt{d} \nu.$
		\State \textbf{Output:} $(\vu_0,\vv_0)$. 
	\end{algorithmic}
\end{algorithm}

\subsection{Main Results} 
Our main result shows that given the convolution measurements \eqref{eq:measurements}, a \textit{suitably initialized} Wirtinger gradient-descent Algorithm \ref{algo:gradient-descent} converges to the true solution, i.e., $(\vu_t, \vv_t) \approx (\vh_0, \vx_0)$ under an appropriate choice of $Q$, $N$, and $L$. To state the main theorem, we need to introduce some neighborhood sets. For vectors $\vh \in \C^M$, and $\vx \in \C^{KN}$, we define the following sets of neighboring points of $(\vh,\vx)$ based on either, magnitude, coherence, or the distance from the ground truth.
\begin{align}
&\setN_{d_0} := \{(\vh,\vx) | \|\vh\|_2 \leq 2\sqrt{d_0}, \ \|\vx\|_2\leq 2 \sqrt{d_0} \},\label{eq:setNd-def}\\
&\setN_\mu := \{ (\vh,\vx) | \sqrt{L}\|\mF_M\vh\|_\infty \leq 4\mu\sqrt{d_0} \},\label{eq:setNmu-def}\\
&\setN_\nu : = \{(\vh,\vx)|  \sqrt{QN}\|\mC^{\otimes N}\vx\|_\infty \leq 4\nu\sqrt{d_0} \},\label{eq:setNnu-def}\\
&\setN_{\varepsilon} := \{(\vh,\vx) | \|\vh\vx^*-\vh_0\vx_0^*\|_{F} \leq \varepsilon d_0 \}\label{eq:setNe-def}.
\end{align}

Our main result on blind deconvolution from modulated inputs \eqref{eq:measurements} is stated below.  
\begin{thm}\label{thm:convergence}
	 Fix $0 < \varepsilon \leq 1/15$. Let $\mC \in \R^{Q \times K}$ be a tall basis matrix, and set $\vs_{0,n} = \mC\vx_{0,n}$; and $\vx_{0,n} \in \C^K$ for every $n = 1,2,3, \ldots, N$, and $\vh_0 \in \C^M$ be arbitrary vectors. Let the coherence parameters of $\mC$ and  $(\vh_0,\vx_0)$ be as defined in \eqref{eq:mu-nu}.  Let $[\vr_n]$ be independently generated $Q$-vectors with standard iid Rademacher entries. We observe the $L$-point circular convolutions of the random sign vectors $\vr_n\odot\vs_{0,n}$ with $\vh_0$, where $ L \geq \max(Q,M)$, leading to observations \eqref{eq:measurements} contaminated with additive noise $\ve$. Assume that the initial guess $(\vu_0,\vv_0)$ of $(\vh_0, \vx_0)$ belongs to $\tfrac{1}{\sqrt{3}}\setN_{d_0} \cap \tfrac{1}{\sqrt{3}}\setN_{\mu}\cap \tfrac{1}{\sqrt{3}}\setN_{\nu} \cap \setN_{\frac{2}{5}\varepsilon},$ then Algorithm \ref{algo:gradient-descent} will create a sequence $(\vu_t,\vv_t) \in \setN_{d_0}\cap \setN_{\mu} \cap \setN_{\nu} \cap \setN_{\varepsilon}$,  which converges geometrically (in the noiseless case, $\ve = \vzero$) to $(\vh_0,\vx_0)$, and there holds 
	 \begin{align}\label{eq:stable-recovery-bound}
	 \|\vu_{t+1}\vv^*_{t+1}-\vh_0\vx_0^*\|_F &  \leq  \tfrac{2}{3}(1-\eta\omega)^{(t+1)/2}\varepsilon d_0
	 + 50 \|\setA^*(\ve)\|_{2\rightarrow 2}
	 \end{align}
	 with probability at least 
	 \begin{align}\label{eq:probability-main-thm}
	 1-2\exp\left(-c\delta_t^2QN/\mu^2\nu^2\right),
	 \end{align}
	  whenever 
	 \begin{align}\label{eq:sample-complexity-main-thm}
	 QN \geq \frac{c}{\delta_t^2} (\mu^2\nu_{\max}^2 KN^2 + \nu^2 M)\log^4(LN),
	 \end{align}
     where $\delta_t := \|\vu_t\vv^*_t-\vh_0\vx_0^*\|_F/d_0$, $\omega > 0$, and $\eta$ is the fixed step size. Fix $\alpha \geq 1$. For noise $\ve \sim \text{Normal}(\mathbf{0},\frac{\sigma^2 d_0^2}{2LN}\mI_{LN}) + \iota \text{Normal}(\mathbf{0},\frac{\sigma^2 d_0^2}{2LN}\mI_{LN})$, $\|\setA^*(\ve)\|_{2\rightarrow 2} \leq \frac{2\varepsilon}{50} d_0$ with probability at least $1-\setO((LN)^{-\alpha})$ whenever 
	\begin{align}\label{eq:sample-complexity-LN}
	LN \geq  c_\alpha\frac{\sigma^2}{\varepsilon^2}  \max(M,KN\log(LN))\log(LN).
	\end{align}
\end{thm}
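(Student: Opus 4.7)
The proof decomposes into two essentially independent arguments. For the main claim (geometric convergence together with the stability bound \eqref{eq:stable-recovery-bound}), the plan is to follow the local regularity framework of \cite{candes2015phase,li2016rapid}: establish that throughout the restricted neighborhood $\setN := \setN_{d_0}\cap \setN_{\mu}\cap \setN_{\nu}\cap \setN_{\varepsilon}$, the objective $\tf(\vh,\vx)$ obeys a one-point convexity inequality of the form
\begin{align*}
\text{Re}\langle \gtfh, \vh - \beta\vh_0 \rangle + \text{Re}\langle \gtfx, \vx - \tfrac{1}{\bar{\beta}}\vx_0 \rangle \geq \omega \|\vh\vx^* - \vh_0\vx_0^*\|_F^2 - c\|\setA^*(\ve)\|_{2\rightarrow 2}^2,
\end{align*}
for the scalar $\beta$ that absorbs the $(\vh,\vx)\mapsto(\beta\vh,\vx/\bar{\beta})$ scale ambiguity, paired with a matching gradient upper bound $\|\gtf\|_2^2 \lesssim \|\vh\vx^*-\vh_0\vx_0^*\|_F^2 + \|\setA^*(\ve)\|_{2\rightarrow 2}^2$. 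A standard contraction calculation on the Wirtinger gradient recursion with a small fixed step $\eta$ then supplies both the geometric factor $(1-\eta\omega)$ and the additive noise term $50\|\setA^*(\ve)\|_{2\rightarrow 2}$. Invariance of the iterates inside $\setN$ is precisely what the four $G_0$-summands of the regularizer $G$ enforce: $G_0(z)=\max\{z-1,0\}^2$ vanishes on $\setN$ (so $G$ does not corrupt descent on the data term), and its gradient on the exterior pulls the next iterate back toward the admissible set.

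The heart of the argument, and the main technical obstacle, is the associated local restricted-isometry estimate
\begin{align*}
(1-\delta_t)\|\vh\vx^* - \vh_0\vx_0^*\|_F^2 \leq \|\setA(\vh\vx^* - \vh_0\vx_0^*)\|_2^2 \leq (1+\delta_t)\|\vh\vx^* - \vh_0\vx_0^*\|_F^2,
\end{align*}
valid uniformly over the low-coherence rank-$\leq 2$ perturbations that can appear in $\setN$. Writing $\setA^*\setA$ as the sum of rank-one random operators $(\vf_\ell\hat{\vc}_{\ell,n}^*)(\vf_\ell\hat{\vc}_{\ell,n}^*)^*$, the independence and Rademacher nature of $[\vr_n]$ decouple the $\mF_Q\mR_n\mC$ blocks and reduce its expectation to a scaled identity on coherence-bounded rank-1 matrices. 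The deviation is then controlled by combining a chaining/$\varepsilon$-net argument over the admissible rank-1 set with a matrix Bernstein (or Rudelson-type) inequality applied to the resulting subgaussian chaos; the coherence parameters $\mu$, $\nu$, $\nu_{\max}$ enter as bounds on the per-summand operator norms and variance proxies and produce exactly the terms $\mu^2\nu_{\max}^2KN^2$ and $\nu^2M$ appearing in \eqref{eq:sample-complexity-main-thm}, together with the tail $\exp(-c\delta_t^2QN/\mu^2\nu^2)$ in \eqref{eq:probability-main-thm}; the $\log^4(LN)$ overhead absorbs net resolutions and the suprema of the induced chaoses. Placing $(\vu_0,\vv_0)$ into the shrunken neighborhood $\tfrac{1}{\sqrt{3}}\setN_{d_0}\cap \tfrac{1}{\sqrt{3}}\setN_\mu\cap\tfrac{1}{\sqrt{3}}\setN_\nu \cap \setN_{2\varepsilon/5}$ is handled as a separate consequence of the same concentration estimate applied to $\setA^*(\hat{\vy})\approx\vh_0\bar{\vx}_0^*+\text{noise}$ followed by the coherence projection in Algorithm \ref{algo:initialization}.

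For the concluding noise tail bound, the strategy is a direct application of matrix Bernstein to
\begin{align*}
\setA^*(\ve) = \sum_{\ell=1}^{L}\sum_{n=1}^{N} \overline{\hat{e}_n[\ell]}\, \vf_\ell \hat{\vc}_{\ell,n}^*,
\end{align*}
an $M \times KN$ sum of independent rank-1 matrices with mean-zero complex-Gaussian coefficients of variance $\sigma^2 d_0^2/LN$. After a standard Gaussian truncation of $\hat{e}_n[\ell]$ at the typical level $O(\sigma d_0\sqrt{\log(LN)/LN})$, evaluation of the per-summand spectral norm and of the two matrix variance proxies $\|\sum \E[\bullet\bullet^*]\|_{2\rightarrow 2}$, $\|\sum\E[\bullet^*\bullet]\|_{2\rightarrow 2}$ using the Parseval identities $\mF_M^*\mF_M=\mI_M$, $\mF_Q^*\mF_Q=\mI_Q$, and isometry $\mC^*\mC=\mI_K$ produces the dimension-balanced scaling $O(\sigma^2 d_0^2\max(M,KN)/LN)$. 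Matrix Bernstein then delivers $\|\setA^*(\ve)\|_{2\rightarrow 2}\leq (2\varepsilon/50)d_0$ with probability at least $1-\setO((LN)^{-\alpha})$ precisely when $LN$ satisfies \eqref{eq:sample-complexity-LN}, the additional $\log(LN)$ factor being the standard dimension penalty in Bernstein.
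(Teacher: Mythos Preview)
Your proposal follows essentially the same architecture as the paper: the regularity/smoothness framework of \cite{li2016rapid}, the regularizer $G$ keeping iterates inside $\setN$, and matrix concentration for the noise term. The paper's proof of Theorem~\ref{thm:convergence} is in fact very short and simply invokes Lemmas~\ref{lem:main-lemma} and~\ref{lem:local-regularity}, which package exactly the smoothness and gradient-dominance ingredients you describe (the inner-product lower bounds you write are precisely Lemmas~\ref{lem:local-regularity-F} and~\ref{lem:local-regularity-G}).

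The one place where your sketch diverges from the paper is the local RIP. You propose to write $\setA^*\setA$ as the sum $\sum_{\ell,n}(\vf_\ell\hat{\vc}_{\ell,n}^*)(\vf_\ell\hat{\vc}_{\ell,n}^*)^*$ and combine an $\varepsilon$-net with a matrix-Bernstein bound. The obstacle is that for fixed $n$ the vectors $\hat{\vc}_{\ell,n}$ all depend on the \emph{same} Rademacher vector $\vr_n$, so the summands over $\ell$ are not independent and Bernstein does not apply directly. The paper sidesteps this by rewriting $\|\setA(\vh\vx^*-\vh_0\vx_0^*)\|_2^2 = \|(\mH_{\vh}\mX_{\vx}-\mH_{\vh_0}\mX_{\vx_0})\vr\|_2^2$ as a second-order chaos in the stacked Rademacher vector $\vr$ and then applying the Krahmer--Mendelson--Rauhut suprema-of-chaos result (Theorem~\ref{thm:Mendelson}), bounding the required $\gamma_2$-functional via Dudley's integral over covers of $\setH$ and $\setX$ in the norms $\|\cdot\|_f$ and $\|\cdot\|_c$ of \eqref{eq:exotic-norms}. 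This is where $\mu^2\nu_{\max}^2KN^2$, $\nu^2M$, the $\log^4$ factor, and the tail \eqref{eq:probability-main-thm} actually arise. Your phrase ``subgaussian chaos'' suggests you may have this in mind, but the specific reformulation as a chaos in $\vr$ indexed by the matrix class $\{\mH_{\vh}\mX_{\vx}-\mH_{\vh_0}\mX_{\vx_0}\}$, rather than an operator-norm bound on $\setA^*\setA-\setI$, is what makes the argument go through.

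A minor point on the noise bound: the paper avoids Gaussian truncation altogether by applying Tropp's matrix-Gaussian-series inequality (Proposition~\ref{prop:conc_ineq}) directly to $\setA^*(\ve)=\sum_{\ell,n}\hat{e}_n[\ell]\,\hat{\vc}_{\ell,n}\vf_\ell^*$, conditionally on the Rademachers, after first bounding $\max_{\ell,n}\|\hat{\vc}_{\ell,n}\|_2$ via the same proposition applied to the Rademacher sum. This is cleaner than truncation plus Bernstein and yields the same $\max(M,KN\log(LN))\log(LN)$ scaling.
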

The above theorem claims that starting from a good enough initial guess the gradient descent algorithm converges super linearly to the ground truth in the noiseless case. The theorem below guarantees that the required good enough initialization: $(\vu_0,\vv_0) \in \frac{1}{\sqrt{3}}\setN_{d_0} \cap \frac{1}{\sqrt{3}}\setN_{\mu}  \cap \frac{1}{\sqrt{3}}\setN_{\nu}\cap \setN_{\frac{2}{5}\varepsilon}$ is supplied by Algorithm \ref{algo:initialization}. 
\begin{thm}\label{thm:initialization}
	The initialization obtained via Algorithm \ref{algo:initialization} satisfies $
	(\vu_0,\vv_0) \in\frac{1}{\sqrt{3}}\setN_{d_0} \cap \frac{1}{\sqrt{3}}\setN_{\mu}  \cap \frac{1}{\sqrt{3}}\setN_{\nu}\cap \setN_{\frac{2}{5}\varepsilon},$ and $0.9 d_0 \leq d \leq 1.1 d_0$ 	holds with probability at least $1-2\exp\left(-c\varepsilon^2 QN/\mu^2\nu^2\right)$ whenever 
	\[
	QN \geq \frac{c}{\varepsilon^2} \left(\mu^2\nu_{\max}^2 KN^2 + \nu^2 M\right) \log^4(LN).
	\]
\end{thm}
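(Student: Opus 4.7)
The plan is to combine a spectral concentration statement for $\setA^*(\hat{\vy})$ with a Weyl/Wedin perturbation argument and a non-expansiveness property of the convex projection step in Algorithm \ref{algo:initialization}. Write $\setA^*(\hat{\vy}) = \setA^*\setA(\vh_0\vx_0^*) + \setA^*(\ve)$. The noise term contributes at most $\tfrac{2\varepsilon}{50}d_0$ by the last statement of Theorem \ref{thm:convergence}, so the main task is to control $\|\setA^*\setA(\vh_0\vx_0^*) - \vh_0\vx_0^*\|_{2\to 2}$.

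\textbf{Step 1 (Concentration).} Using the explicit form of $\setA$ in \eqref{eq:linear-map}, I would verify that $\E[\setA^*\setA(\vh_0\vx_0^*)] = \vh_0\vx_0^*$ by exploiting the orthonormality of $\mC$, the isometry of the relevant partial DFT, and the identity $\E[\mR_n \mathbf{A} \mR_n] = \mathrm{diag}(\mathbf{A})$ for Rademacher $\mR_n$. Writing the deviation as a sum of $LN$ independent rank-one summands in the variables $\vf_\ell \hat{\vc}_{\ell,n}^*$, I would then invoke the matrix Bernstein inequality after a standard truncation of heavy tails. The coherence $\mu^2$ enters via $\sqrt{L}\,|\vf_\ell^* \vh_0| \leq \mu\sqrt{d_0}$; the coherence $\nu_{\max}^2$ bounds the rows of $\mC$ uniformly; and the matrix variance splits additively into an $\vh$-side piece scaling like $\mu^2\nu_{\max}^2 KN^2/(QN)$ and an $\vx$-side piece scaling like $\nu^2 M/(QN)$. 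The sample complexity \eqref{eq:sample-complexity-main-thm} (with an extra $\log^4(LN)$ absorbed from tail-trimming) then yields $\|\setA^*\setA(\vh_0\vx_0^*) - \vh_0\vx_0^*\|_{2\to 2} \leq \tfrac{\varepsilon}{10} d_0$ with the claimed probability.

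\textbf{Step 2 (Perturbation and projection).} With this bound in hand, Weyl's inequality forces $|d - d_0| \leq \tfrac{\varepsilon}{10}d_0$, placing $d$ in $[0.9 d_0, 1.1 d_0]$. Wedin's theorem on the perturbation of singular subspaces, after fixing the inherent global phase of the rank-one factorization, then produces $\|\sqrt{d}\,\hat{\vh}_0\,(\sqrt{d}\,\hat{\vx}_0)^* - \vh_0\vx_0^*\|_F \leq \tfrac{2\varepsilon}{5}d_0$ as well as the individual-factor distances. The ground truth itself satisfies $\sqrt{L}\|\mF_M\vh_0\|_\infty = \mu\sqrt{d_0} \leq 2\mu\sqrt{d}$ and $\sqrt{QN}\|\mC^{\otimes N}\vx_0\|_\infty = \nu\sqrt{d_0} \leq 2\nu\sqrt{d}$ (since $d \geq 0.9 d_0$), so $\vh_0$ and $\vx_0$ lie in the convex feasible sets of Algorithm \ref{algo:initialization}. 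Projection onto a convex set is non-expansive, hence $\|\vu_0 - \vh_0\|_2 \leq \|\sqrt{d}\,\hat{\vh}_0 - \vh_0\|_2$ and similarly for $\vv_0$, which gives $(\vu_0, \vv_0) \in \setN_{2\varepsilon/5}$. The coherence constraints enforced by the algorithm and the explicit norm bounds on $\sqrt{d}\,\hat{\vh}_0$, $\sqrt{d}\,\hat{\vx}_0$ then yield the remaining memberships $(\vu_0, \vv_0) \in \tfrac{1}{\sqrt{3}}\setN_{d_0} \cap \tfrac{1}{\sqrt{3}}\setN_\mu \cap \tfrac{1}{\sqrt{3}}\setN_\nu$ for $\varepsilon$ small enough.

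\textbf{Main obstacle.} The technical heart is the matrix Bernstein step in Step 1. The summands mix a deterministic partial DFT of $\vh_0$ with a Rademacher-modulated partial DFT of $\vx_0$, so their entries have only sub-exponential tails. Obtaining the additive split $\mu^2\nu_{\max}^2 KN^2 + \nu^2 M$ (rather than a weaker product bound) demands careful \emph{separate} variance estimates for the left ($\vh$-block) and right ($\vx$-block) contributions of the lifted matrix, ensuring that neither side pays for both coherences simultaneously. The $\log^4(LN)$ penalty then comes from chaining/truncation needed to control the uniform bound on the summands and to union-bound over the grid $(\ell, n) \in [L] \times [N]$; this is the same logarithmic overhead encountered in related single-channel blind deconvolution analyses.
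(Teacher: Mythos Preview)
Your Step 1 contains a genuine gap: the $LN$ rank-one matrices $\{\vf_\ell\hat{\vc}_{\ell,n}^*\}_{\ell,n}$ are \emph{not} independent, so the matrix Bernstein inequality does not apply as you describe. By definition $\hat{\vc}_{\ell,n} = \sqrt{L}\sum_{q=1}^Q f_\ell[q]\, r_n[q]\, \vc_{q,n}$, so for a fixed channel index $n$ every $\hat{\vc}_{\ell,n}$, $\ell\in[L]$, is a deterministic linear function of the \emph{same} Rademacher vector $\vr_n$. The only independence available is across the $N$ channels, and $N$ alone is far too small to drive a Bernstein variance down to the required level. This is precisely the ``limited and structured randomness'' the paper emphasizes in Section \ref{sec:related-work} and in the paragraph following Lemma \ref{lem:local-Delh-Delx-RIP}, and is the reason the concentration step cannot be carried by a standard matrix tail bound over $LN$ summands.

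The paper instead obtains the bound on $\|\setA^*\setA(\vh_0\vx_0^*) - \vh_0\vx_0^*\|_{2\to2}$ by specializing the local-RIP Lemma \ref{lem:local-RIP} to the single point $(\vh,\vx) = (\vzero,\vzero)$ (so $\delta = 1$). That lemma is proved via the suprema-of-chaos-processes result (Theorem \ref{thm:Mendelson}) together with a Dudley entropy computation for the sets $\setH$, $\setX$ in \eqref{eq:setX-setH}; the $\log^4(LN)$ factor and the additive split $\mu^2\nu_{\max}^2 KN^2 + \nu^2 M$ you correctly anticipate are outputs of that chaining argument, not of Bernstein. Your Step 2 is essentially right and close to the paper's route: the paper uses the projection inequality of Lemma \ref{lem:convex-set-projection} in place of a named Wedin theorem, and it explicitly tracks the scalars $\alpha_0 = \sqrt{d}\,\vh_0^*\hat{\vh}_0$ and $\beta_0 = \sqrt{d}\,\hat{\vx}_0^*\vx_0$ to absorb the global phase/scale ambiguity before passing from factor-wise bounds to the Frobenius bound on $\vu_0\vv_0^* - \vh_0\vx_0^*$; your ``after fixing the inherent global phase'' hides exactly this step.
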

\noindent Proofs of Theorem \ref{thm:convergence}, and \ref{thm:initialization} are given in Section \ref{sec:convergence}, and Appendix \ref{sec:Theorem2-Proof}, respectively. 
\subsection{Discussion}\label{sec:discussion}

Theorem \ref{thm:convergence}, and \ref{thm:initialization} together prove that randomly modulated unknown  $Q$-vectors $[\vs_{0,n}]$, and unknown $M$-vector $\vh_0$ can be recovered with desired accuracy from their $N$ circular convolutions $\vh_0 \circledast (\vr_n\odot\vs_{0,n}), \ n \in [N]$ under suitably large $Q$, $N$, and $L$. We will refer to the bounds in \eqref{eq:sample-complexity-main-thm}, \eqref{eq:sample-complexity-LN}, and $L \geq \max(Q,M)$ as \textit{sample complexity bounds}. Together these give
\begin{align}\label{eq:final-sample-complexity}
LN \geq QN \gtrsim (\mu^2\nu_{\max}^2 KN^2 + \nu^2 M)\log^4 (LN)
\end{align}
for a fixed desired accuracy $\delta_{t+1}$ of the recovery. We want to remark here that the result only guarantees an approximate recovery as is the case in some earlier works \cite{keshavan2012efficient,jain2013low,hardt2014understanding} on matrix completion using non-convex methods. For example, [16] uses fresh independent samples to compute a stochastic gradient update for technical reasons of avoiding dependencies among the iterates; this leads to an approximate recovery. On the other hand, our measurement model gives rise the dependent scalar measurements \eqref{eq:entrywise-measurements} across index $\ell$, and does not give way to a natural splitting of measurements in batches of independent samples. Fortunately, we do not need batch splitting in this proof method, however, we are still only able to guarantee approximate recovery; the main technical reason being a limited structured randomness in the linear map $\setA$, which does not lead to strong concentration bounds. 
For "exact" recovery, i.e., with error $\delta_{t+1} = 0$ the method requires infinitely many samples. We leave this mainly a technical challenge of improving the results to finite sample complexity to the future work. All the remaining discussion in this section will be assuming a fixed accuracy $\delta_t$, and hence  a constant factor $1/\delta^2_t$ in the sample complexity bound. 

 We now provide a discussion on the interpretation of these sample complexity bounds in  several interesting scenarios such as single ($N=1$) and multiple ($N > 1$) convolutions to facilitate the understanding of the reader. 

\textbf{Sample Complexity:} Observe that the number of unknowns in the system of equations \eqref{eq:measurements} is $KN+M$. Combining $L \geq \max(Q,M)$,  \eqref{eq:sample-complexity-main-thm}, and \eqref{eq:sample-complexity-LN}, it becomes clear that number $LN$ of measurements required for successful recovery scale with $KN^2+M$ (within coherences, and log factors). This shows that the sample complexity results are off by a factor of $N$ compared to optimal scalings. We believe this is mainly a limitation of the proof technique; the phase transitions in Section \ref{sec:numerics} show that \eqref{eq:sample-complexity-main-thm} is a conservative bound, and successful deconvolution generally occurs when $LN\geq QN \gtrsim KN+M$; see phase transitions in numerical simulations Section \ref{sec:phase-transitions}.

In general, for multiple convolution, we require $LN \geq QN \gtrsim (\mu^2\nu_{\max}^2 KN^2 + \nu^2 M)\log^4 (LN).$ In passive imaging problem, where an ambient source drives multiple CIRs, the above bound places a minimum required length $QN$ of CIRs for a successful blind deconvolution from the recorded data.

 In the case of single ($N=1$) convolution, the above bound reduces to $L \geq Q \gtrsim (\mu^2\nu_{\max}^2 K + \nu^2 M)\log^4 L$. The number of unknowns in this case are only $K +M$. Unlike the multiple convolutions case above,  for a desired accuracy of the recovered estimate, the bound on $L$ above is information theoretically optimal (within log factors and coherence terms). This sample complexity result almost matches the results in \cite{ahmed2012blind,li2016rapid} except for an extra log factor. However, the important difference is, as mentioned in the introduction, that unlike  \cite{ahmed2012blind,ahmed2016leveraging,li2016rapid,ahmed2018convex,lee2018spectral}, the inputs are not assumed to reside in Gaussian subspaces rather only have random signs, which if not given can also be enforced through random modulation in some applications; see, Section \ref{sec:applications}. 

\textbf{No zero-padding of $\vh_0$:} 
Assume that the unknown filter $\vh_0$  is completely arbitrary, that is, its length $M$ can be as large as $L$. Equivalently, no zero-padding or, in general, no known subspace is assumed.  Even in the case of linear systems of equations, the recovery is only possible in this case whenever $LN \geq KN+L$, i.e., the number $LN$ of measurements exceeds the number $KN+L$ of unknowns. Evidently, $LN \geq KN+L$ can never be achieved in the single channel scenario ($N=1$). However, in the multichannel scenario ($N$ strictly bigger than $1$)  $LN \geq KN+L$  is possible by setting $N$, and $L$ to be sufficiently large, and hence successful recovery may also be possible. In light of \eqref{eq:final-sample-complexity}, we have that inputs $[\vx_{0,n}]$, and a filter  $\vh_0$ with length $M$, possibly as large as $L$, can be recovered whenever $L$, and $N$ are chosen to be sufficiently large such that $LN \geq QN \gtrsim (\mu^2 \nu^2_{\max} KN^2 + \nu^2 L)\log^4 (QN)$ holds. The numerics consistently show that the successful recovery occurs at a near optimal sample complexity, i.e., $LN \geq QN \gtrsim (KN+L)$ indicating a probable room of improvement in the derived performance bounds.

No zero padding has practical importance in passive imaging, where an unstructured and uncontrollable source signal with no discernible on, or off time is driving the CIRs \cite{sabra2010ray}, and hence, one cannot realistically assume any zero-padding.

Finally, the bound in \eqref{eq:sample-complexity-main-thm} might appear contradictory to a reader as it guarantees recovery for a longer filters/signals (large enough $Q$) whereas one should expect that deconvolution must be easier if the convolved signals are shorter (fewer overlapping copies); for example, deconvolution is immediate in the trivial case of one tap $Q=1$ filters. However, it is important to note that the bound \eqref{eq:sample-complexity-main-thm} only gives a range of $Q$, and $N$ under which recovery is certified, and in no way eliminates the possibility of a successful blind deconvolution when it is violated. Roughly speaking in our case, longer length $Q$ only introduces more sign randomness and makes the inverse problem (blind deconvolution) well-conditioned. 

 \section{Applications}\label{sec:applications}
The measurement model in \eqref{eq:measurements} finds many applications owing to minimal structural assumptions on the signals. We present three applications scenarios including an implementable modulation system to protect an analog signal against convolutive interference using a real time preprocessing, channel protection in wireless communications, random mask imaging, and passive imaging. 

\subsection{Channel Protection using Random Modulators}\label{sec:channel-protection-application}

One of the important results of this paper is that binary modulations of an analog bandlimited signal protect it against unknown linearly convolutive channels. For illustration, consider a simple scenario of wireless communication of a periodic\footnote{We restrict the discussion to a periodic signal mainly to reduce the mathematical clutter. A non-periodic signal can be handled within a time limited window and smoothing around the edges.} signal $s(t)$ in $t \in [0,1)$, bandlimited to $B$ Hertz. Expansion of $s(t)$ using Fourier basis functions is 
\begin{align*}
s(t) = \sum_{k = - B}^{B} x[k] \mathrm{e}^{\iota 2 \pi k t}.
\end{align*}
The signal $s(t)$ can be captured by taking $Q \geq K := 2B+1 $ equally spaced samples at time instants $t \in \setT_Q:= \{0,\tfrac{1}{Q}, \ldots, 1-\tfrac{1}{Q}\}$. Let $\mF_K$ be a matrix formed by the $K$ columns (corresponding to the signal frequencies) of a normalized $Q \times Q$ DFT matrix $\mF$. Then the samples of $s(t)$ can be expressed as $\vs = \mF_K\vx$, where the Fourier coefficients $x[k]$ are the entries of the $K$-vector $\vx$. The signal $s(t)$ is modulated by a binary waveform $r(t)$ alternating at a rate $Q$.  Let $\vr$ be a $Q$-vector of samples of binary waveform $r(t)$ in $t \in \setT_Q$. The modulated signal $s(t)\odot r(t)$ undergoes an unknown linear transformation $(s(t)\odot r(t)) \circledast h(t) = y(t)$ through an LTI system, where $h(t)$ is the impulse response of an LTI system, and is given as 
\begin{align*}
h(t) = \sum_{m=1}^M h[m] \delta(t-t_m),  \ \text{where} \ t_m \in \setT_Q.
\end{align*}
Assuming $h[m]$ to be the $m$th entry of an $M$-vector $\vh$. The samples $\vy$ of the transformed signal $y(t)$ exactly take the form
\begin{align}\label{eq:observation-model}
\vy = \mR\mF_{K}\vx \circledast \vh, 
\end{align}
where as before $\mR = \text{diag}(\vr)$, and $\vy \in \C^L$.

Since the observation model \eqref{eq:observation-model} aligns with the model considered in this paper, a direct application of our main result shows that $\vs$, and hence $s(t)$ can be recovered from the received signal $y(t) = (s(t)\odot r(t)) \circledast h(t)$ without knowing CIR by operating the random binary waveform $r(t)$ at rate $Q \gtrsim (\mu^2K+\nu^2M)\log^4 L$, and sampling the received signal $y(t)$ at a rate $L =Q$, where we used the fact that $\nu_{\max}^2 = 1$ for the DFT matrix $\mF_K$ above.  The coherences $\nu^2$, and $\mu^2$ are simply the peak values in time $ \|\vs\|_\infty^2$, and frequency domain $\|\mF_M\vh\|_\infty^2$, respectively, where $\mF_M$ are the first $M$ columns of a normalized $L\times L$ DFT matrix. 

Binary modulation of an analog signal can be easily implemented using switches that flip the signs of the signal in real time; the setup is shown in Figure \ref{fig:Modulators}. Fast rate binary switches can be easily implemented; see, for example, \cite{laska2007theory}, and  \cite{tropp2010beyond,ahmed2015compressive,ahmed2018compressive} for the use of binary switches in other applications in signal processing. The implementation potential combined with the ubiquity of blind deconvolution make this result interesting in system theory, applied communications, and signal processing, among other.
\begin{figure}
	\centering
	\begin{tabular}{cc}
		\includegraphics[scale = 0.55, trim = 8.5cm 7cm 8cm 0cm,clip]{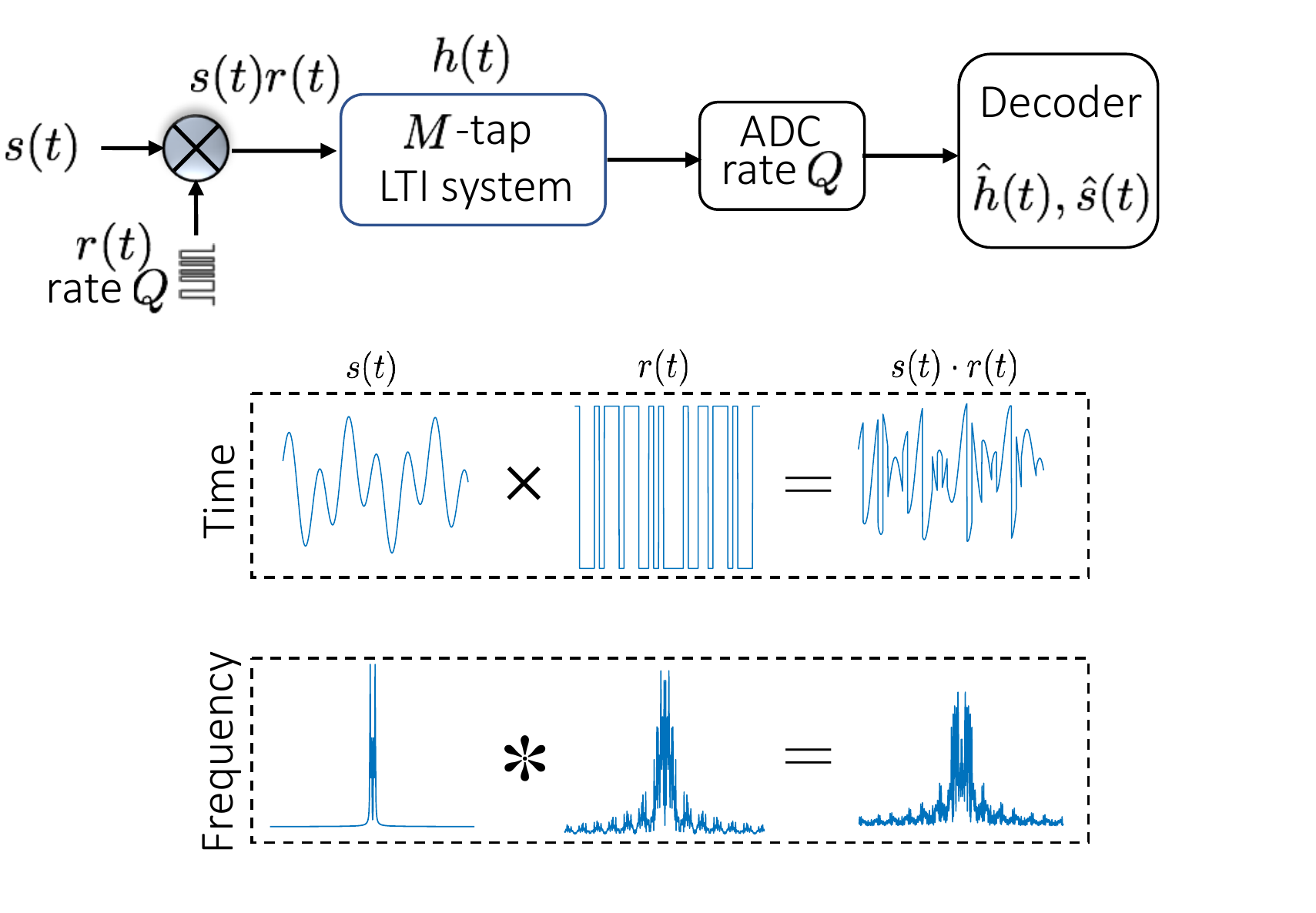}
	\end{tabular}
	\caption{\small\sl Analog implementation for real time protection against channel intereference.  A continuosus time signal $s(t)$, bandlimited to $B$ Hz, is modulated with a random binary waveform $r(t)$ alternating at a rate $Q$. The modulated signal drives an unknown LTI system characterized by an $M$-tap impulse response $h(t)$. The resulting signal is sampled at a rate $Q$. Operate the modulator and ADC at a rate $Q \gtrsim \max(B,M)$ (to within a constant, log factors and coherences), and recover $s(t)$, and $h(t)$ using algorithm \ref{algo:gradient-descent}. Underneath, the preprocessing is shown in time, and frequency domain. Modulation in time domain spreads the spectrum, and the resulting higher frequency signal remains protected against the distortions caused by an unknown LTI system.}
	\label{fig:Modulators}
\end{figure} 

The signal subspace can be other than Fourier vectors in practical application in wireless communications, for example, channel coding protects a message against unknown errors by introducing redundancy in the messages. This operation can be viewed as the multiplication of the message vector with a tall matrix $\mC$. The coded message $\vs = \mC\vx$ is transmitted over an unknown channel characterized by an impulse response $\vh \in \R^M$.  A simple, and easy to implement additional step of randomly flipping the signs $\vr\odot\vs$ of the coded message  enables the decoder to recover $\vx$ from several delayed, and attenuated overlapping copies $(\vr\odot\vs)\circledast \vh$ of the transmitted codeword; see Figure \ref{fig:Channel-Protection} for a pictorial illustration.  \begin{figure*}
	\centering
	\begin{tabular}{cc}
		\includegraphics[scale = 0.5, trim = 0cm 11.7cm 0 0.7cm,clip]{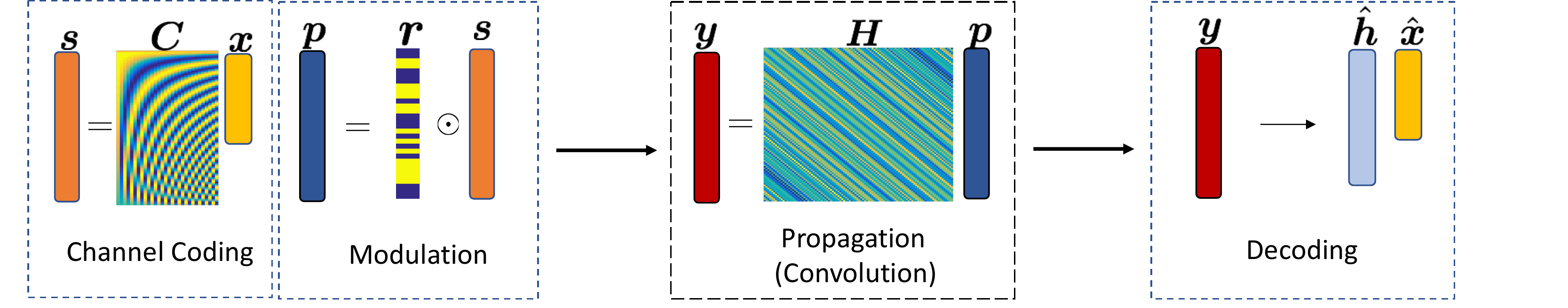}
	\end{tabular}
	\caption{\small\sl  Channel protection in wireless communications: A user message $\vx$ is encoded by multiplying with a tall coding matrix $\mC$ (known subspace) in the conventional channel coding block. The signs of the resultant symbols $\vs$ are randomly flipped (modulation). This signal is then transmitted and undergoes a series of reverberations and distortions (modeled as a convolution with CIR) while propagating to the receiver. The decoder estimates the symbols $\vx$, and  CIR $\vh$. }
	\label{fig:Channel-Protection}
\end{figure*} 
\subsection{Random Mask Imaging}\label{sec:RMI}
A stylized application of the blind deconvolution problem in \eqref{eq:measurements} is in image deblurring using random masks. Images are observed through a lens, which introduces an unknown blur in the images. To deblur the images, this paper suggests an image acquisition system, shown in Figure \ref{fig:RMI}, in which a programmable spatial light modulating (SLM) array is placed between the image and the lens. SLM modulates ($\pm 1$) the light reflected off of the object before it passes through the lens. While ideal binary masks are $0/1$, we consider $(\pm 1)$ for technical reasons; the $(\pm 1)$ masks can be implemented in practices using a $(0/1)$ mask together with all $1$'s mask. The light impinging on the detector array is convolution of point spread function of lens with randomly modulated images. Assuming  an apriori knowledge of the subspace of each image, which might be a subset of a carefully selected wavelet or DCT bases functions, we can deblur the images using gradient descent as discussed in Section \ref{sec:gradient-descent}. The relative dimension of the image subspaces w.r.t. image, and blur size must obey the sample complexity bounds presented in Theorem \ref{thm:convergence}; see Section \ref{sec:numerics} for details. 

It is instructive to compare our results with a recent and closely related random mask imaging (RMI) setup given in \cite{bahmani2015lifting} for image deblurring. A similar physical setup is studied, and recovery of a blurred image is achieved by placing a random mask between the lens and image, however, two important differences exist compared to our approach. Firstly, in \cite{bahmani2015lifting}, and other works in this direction \cite{harikumar1999perfect,sroubek2012robust}, one image is fed multiple times through different random masks to improve the conditioning of the inverse problem, whereas in our setup we use a different unknown image every time. This is very important in applications, where it is not possible to obtain multiple snapshots of the same scene as it is dynamic. For example, imagine imaging a culture of micro-organisms; the moving organisms and the fluid around continuously changes the formation of the micro-organisms. Secondly, the image deblurring in \cite{bahmani2015lifting} is achieved via a computationally expensive semidefinite program operating in the \textit{lifted} space of dimension $KNM$. On the other hand, the gradient descent scheme in Algorithm \ref{algo:gradient-descent} is  computationally efficient as it operates in natural parameter space of dimension only $KN+M$. 
\begin{figure}
	\centering
	\begin{tabular}{cc}
		\includegraphics[scale = 0.28, trim = 1cm 1cm 0 3cm,clip]{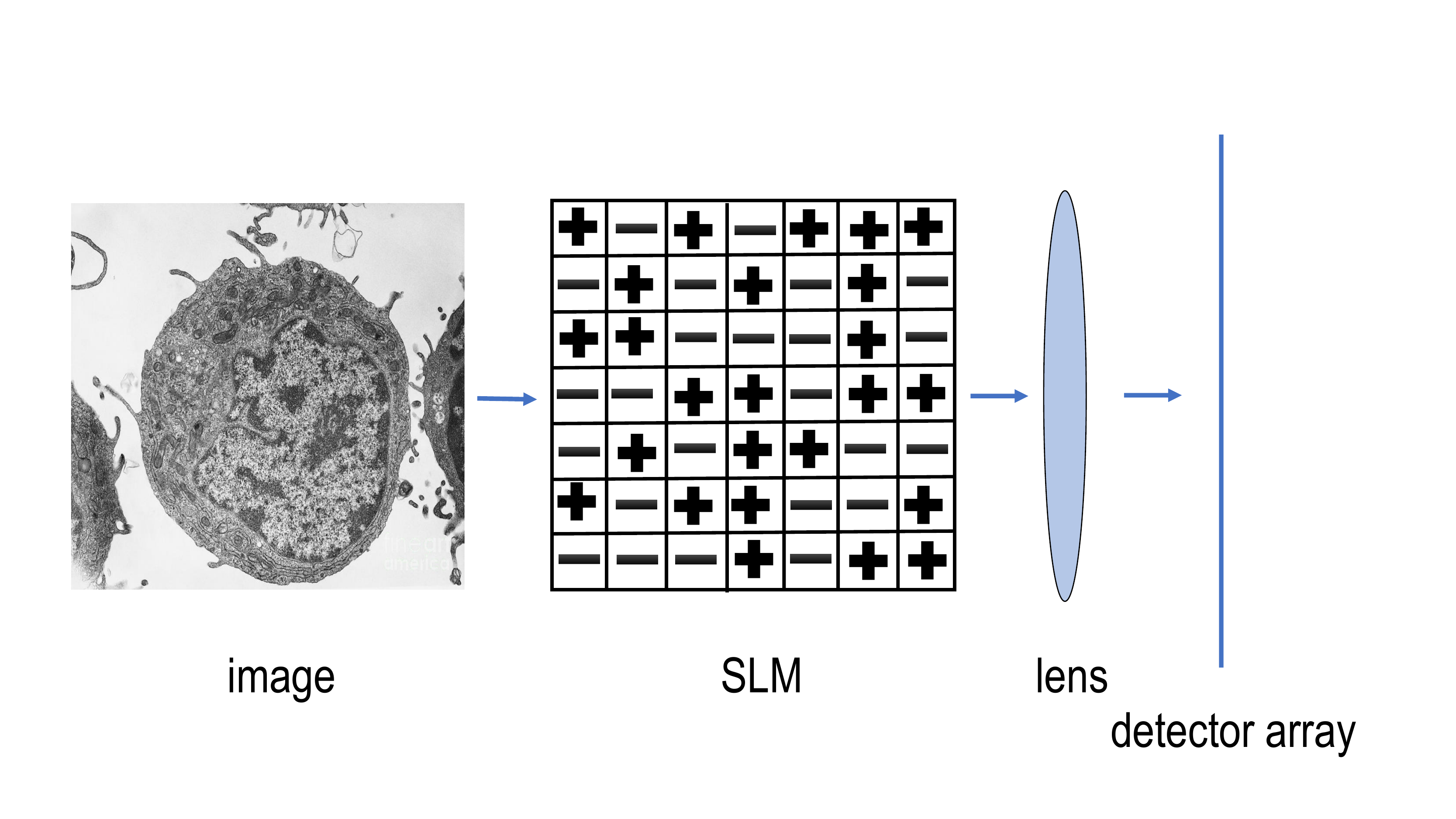}
	\end{tabular}
	\caption{\small\sl Schematic of random mask imaging setup. The reflection of a target image from a spatial light modulator (SLM) is blurred by a lens and the resultant intensities are measured on a detector array. Everytime a new image is observed through this system (with a new mask pattern) and eventaully the lens blur kernel and all the images are discovered using a gradient descent algorithm.}
	\label{fig:RMI}
\end{figure} 

\subsection{Passive Imaging: Multichannel Blind Deconvolution}\label{sec:MBD-applications}

In passive imaging, a source signal $s(t)$ feeds multiple convolutive channels. The signal $s(t)$ is not observed/controlled, and is unstructured. For example, in seismic experiments, a drill generates noise like signature that propagates through earth subsurfaces. The reflected copies from earth layers overlap and are recorded at multiple receivers. To characterize the subsurfaces, a multichannel blind deconvolution (MBD) on the received data discovers the Green's function; for details, see an interesting recent work \cite{bharadwaj2018focused}, and references therein. In underwater acoustics, a submerged source signal is distorted, reverberated while propagating through the water media. Multiple passive sensors on water surface record the distorted signals. The source recognition is better if the recorded data is cleaned using blind deconvolution \cite{byun2017blind}. 

The recorded data at each of the receivers in the passive imaging applications above takes the form\footnote{Compared to the model in \eqref{eq:model}, the role of $s$, and $h$ is swapped in this section as there is one source signal $s(t)$ and multiple CIRs $h_n(t).$}.  
\begin{align}\label{eq:passive-imaging}
y_n(t) = s(t) \circledast h_n(t), \ n = 1,2,3,\ldots, N,
\end{align}
where $h_n(t)$'s are short CIRs.
Importantly , Theorem \ref{thm:convergence} clearly determines the combined length $QN$ of CIRs must exceed the length $M$ of the source signal, as is evident from the bound in \eqref{eq:sample-complexity-main-thm}. This means that for longer (meeting the generic sign assumption) CIRs, one can guarantee to resolve a longer length of source signal from the recorded data.

MBD was studied with keen interest in 90's; see, \cite{douglas1997multichannel,amari1997multichannel} for some of the least squares based approaches.   Using commutativity of convolutions, an effective strategy  \cite{xu1995least,moulines1995subspace} relies on the null space of the cross correlation matrix of the recorded outputs.  Recovery using these spectral methods depends on the condition that CIRs do not share common roots in the $z$-domain --- some of the MBD schemes developed based on this observation can be found in \cite{subramaniam1996cepstrum,lin2006two,huang2002adaptive}.

MBD has also been reexamined more recently using semidefinite programming (SDP) \cite{ahmed2015convex,ahmed2016leveraging,ahmed2018convex}, and spectral methods \cite{lee2018spectral} that enjoy theoretical performance guarantees under restrictive Gaussian known subspace assumptions on CIRs. In comparison to computationally expensive SDP operating in lifted domain, and spectral methods, we present a gradient descent scheme for MBD with provable guarantees under a weaker random signs assumption on the CIRs. The generic/random sign assumptions on the CIRs might implicitly hold naturally, or could be made more likely to hold using indirect means such as  arranging the locations of the receivers at \textit{dissimilar} points might lead to diverse CIRs. Moreover, as already discussed in Section \ref{sec:discussion}, we donot assume any unrealistic structure such as known subspace, or zero-padding on the source signal $s(t)$, and it can be completely arbitrary. This perfectly models the unstructured source signal in passive imaging.
\subsection{Other Related Work}\label{sec:related-work}
 A  regularized gradient descent scheme to minimize the non-convex measurement loss was rigorously analyzed recently in \cite{li2016rapid} for the single channel ($N=1$) blind deconvolution, and was shown to be provably effective under the known Gaussian subspace assumption. In comparison, we study the multichannel blind deconvolution, and the problem set up \eqref{eq:measurements} also has much limited and structured randomness in a diagonal $\mR_n$ compared to a dense Gaussian matrix used in \cite{li2016rapid}. This requires a considerably more intricate proof argument based on generic chaining \cite{talagrand2005generic} to show approximate stable recovery using a regularized gradient descent algorithm. Recently, \cite{ma2017implicit} showed that (vanilla) gradient descent without the additional regularization term such as \eqref{eq:G-def} shows provably similar recovery guarantees for blind deconvolution under Gaussian subspace model as given in \cite{li2016rapid}. Extending a result  similar to \cite{ma2017implicit} for our case \eqref{eq:measurements} remains challenging as unlike the case of Gaussian subspace considered in \cite{ma2017implicit}, the samples in $\hat{\vy}_n$ in \eqref{eq:measurements} are statistically dependent. Numerically, we observe  similar performance to Algorithm \ref{algo:gradient-descent} even if the regularization term \eqref{eq:G-def} is not included.
 
Observations in \eqref{eq:measurements} are bilinear in the unknowns $(\vh_0,\vx_0)$.
Denoting $\mF_M\vh_0 = \hat{\vh}_0$, the measurements \eqref{eq:measurements} can be rescaled on both sides by the (element-wise) inverse $\vh_0^{-1}$ to give
\begin{align*}
\hat{\vh}_0^{-1}\odot \hat{\vy}_n = \sqrt{L}\mF_Q\mR_n\mC \vx_{0,n} + \hat{\vh}_0^{-1}\odot \hat{\ve}_n.
\end{align*}
Clearly, the problem is now linearized  \cite{balzano2007blind} in the unknowns $(\hat{\vh}_0^{-1},\vx_{0,n})$, and one can proceed with the recovery using the least squares objective below 
\[
\underset{\vg,\{\vx_{0,n}\}_n}{\minimize}\ \sum_{n=1}^N\|\vg\odot\hat{\vy}_n - \sqrt{L} \mF_Q\mR_n\mC\vx_{0,n}\|_2^2.
\]
 The drawbacks of this approach are its sensitivity to the noise components that are amplified due to the weighting  $\hat{\vh}_0^{-1}\odot \hat{\ve}_n$, and will affect the overall least squares recovered solution. Moreover, the problem can be framed as finding a smallest eigenvector of a matrix, and an inherent ambiguity exists if it has more than one-dimensional null space. \cite{ling2018self} gives provable recovery results using a least squares approach under various random subspace models. \cite{li2018blind} relinquishes the subspace model and instead assumes $\vs_n$ admit sparse representations in Gaussian random matrices, and proves the signal recovery using the same linearized eigenvector approach using a power iteration algorithm under strict spectrally flatness conditions  on the signals. The performance under noise in these linearized schemes \cite{ling2018self,li2018blind} is only guaranteed under additional assumptions on filter invertibility $\vh_0^{-1}$, and on the magnitudes of the entries of $\vh_0^{-1}$. In comparison, we directly work with the bilinear model, and give the first provable approximate stable (under noise) recovery results for blind deconvolution using random modulations.

Multichannel blind deconvolution problem can be framed as a rank-1 matrix recovery problem \cite{romberg2013multichannel,ahmed2016leveraging}. Exact and stable recovery results from optimally many measurements are derived in \cite{ahmed2016leveraging} when the signals lie in random Gaussian subspaces. 

The question of the uniqueness of the solution $(\vh_0,\vx_0)$ (up-to global scaling) of the multichannel bilinear problems of the form 
\begin{align*}
\hat{\vy}_n = \hat{\vh}_0 \odot \hat{\mC}\vx_{0,n}
\end{align*}
has been studied in \cite{li2016optimal}. In particular, necessary and sufficient conditions for the identifiability of $(\hat{\vh}_0,\vx_{0})$ were given for almost all $(\hat{\vh}_0,\hat{\mC},\vx_0)$. In the particular case of $\vh_0 \in \C^L$, and choosing $\hat{\vh}_0 = \mF\vh_0$, and $\hat{\mC} = \mF_Q \mR_n \mC$ makes the last display above equivalent to the measurement model \eqref{eq:measurements} in the noiseless case. Thus applying Theorem 2.1 in \cite{li2016optimal} would imply that if 
\[
L > K, ~ \text{and} ~ \frac{L-1}{L-K} \leq N \leq K
\]
then for almost all $\hat{\vh}_0$, $\mF_Q \mR_n \mC$, and $\vx_0$, the pair $(\hat{\vh}_0, \vx_0)$ is identifiable up to global scaling. The results show that identifiability is possible under optimal sample complexity $LN \geq KN+L-1$ for almost all $(\hat{\vh}_0,\mF_Q \mR_n \mC,\vx_0)$. Compared to this results, our derived sample complexity bound \eqref{eq:final-sample-complexity} is off by a factor $N$ (to within log factors, and coherences). The numerics also show that this additional factor of $N$ on the right hand side of \eqref{eq:final-sample-complexity} is not required to obtain successful recovery in practice. Necessary and sufficient conditions on the modulation rate $Q$ for the identifiability of the unknowns, however, do not directly follow from the work in [39], and are an open question. The numerics suggest that successful recovery occurs whenever $QN$ is roughly of the order of the number of unknowns.

Multichannel blind deconvolution from observations $\vy_n = \vh_0 \circledast \vs_n$ under the assumption that $\vs_n$ are sparse vectors has also been studied \cite{wang2016blind,li2018global}. The blind inverse problem is solved by looking for a filter $\vg$ such that $[\vg \circledast \vy_n]$ are sparse. Sparsity is promoted using a convex penalty such as $\ell_1$ norm \cite{wang2016blind}, or more recently using a different convex relaxation involving $\ell_4$ norm \cite{wang2016blind}. However, the provable sample complexity results are far from optimal; for details, see \cite{li2018global}. In comparison, we assume that $\vs_n$ reside in a known subspace, and have generic sign patterns that either exist naturally or can be explicitly enforced using random modulation. This model nicely fits some practical applications as already laid out in Section \ref{sec:applications}. 

We would also like to discuss a related paper \cite{cosse2017note} that considers recovering $\vs_n \in \C^L$, $n \in [N]$, and $\vh \in \C^L$ from the convolutions
\[
\vy_n = \vh \circledast (\vr_n \odot \vs_n), n \in [N],
\]
where $\|\mF\vs_n\|_0 \leq K$. Theorem 1.1 in \cite{cosse2017note} claims that $\vh$, and $\vs_n$ can be recovered with probability at least $1-\setO(L^{-\beta})$ by solving a convex program whenever $L \gtrsim \beta N$, and $N \gtrsim \beta K^2$, and that 
\begin{align}\label{eq:coherence}
\frac{ \sup_{n,k} |s_n[k]|^2}{\inf_k \sum_{n=1}^N |s_n[k]|^2} = \set{O}\left( \frac{1}{N}\right).
\end{align}
However, it seems that the at least the statement of Theorem 1.1 in \cite{cosse2017note} is not correct as there are several assumptions made in the proof argument such as $|\hat{s}_n[k]|^2 = \setO(1/L)$, and $h^2_{\min} := \min_{\ell} |\hat{h}[\ell]|^2 = \setO(1/L)$, which do not appear in the statement of Theorem 1.1 in \cite{cosse2017note}. In addition, Theorem 1.1 \cite{cosse2017note} claims recovery under comparatively strict 'coherence requirements such as 
\begin{align*}
\frac{ \sup_{n,k} |s_n[k]|^2}{\inf_k \sum_{n=1}^N |s_n[k]|^2} = \set{O}\left( \frac{1}{N}\right).
\end{align*}
For example, to satisfy this coherence condition, it must always be true that
\[
\max_n \frac{\|\vs_n\|_2^2}{\sum_{n=1}^N \|\vs_n\|_2^2} = \setO\left( \frac{1}{N}\right),
\]
which says that energy must be roughly equally shared among all the inputs $\vs_n$. Not only that the share of energy should be roughly equal across the corresponding entries of $s_n[k]$ as well as is clear from \eqref{eq:coherence}. Together with this, the proof also uses other strict flatness conditions:
\begin{align*}
|\hat{s}_n[k]|^2 = \setO\left(\frac{1}{L}\right), \quad \text{and} \quad h^2_{\min} = \setO\left(\frac{1}{L}\right), 
\end{align*}
where $\|\vs_n\|_2^2 = 1$ for every $n \in [N]$, and $\|\vh\|_2^2 = 1$. These conditions basically enforce that $\vs_n$, $\vh$  have to be flat in the frequency domain for successful recovery. In comparison, the required coherence parameters \eqref{eq:muh-nux} in our paper are much milder and successful recovery is still possible under Theorem 1 of our paper for any value of these coherences (smaller or larger).

Blind deconvolution has also been studied under various assumptions on input statistics, some important references are \cite{tong1995blind,tong1994blindtime}. We complete the brief tour of the related works in the above sections by pointing readers to survey articles \cite{campisi2016blind,tong1998multichannel} to account for other interesting works that we might have missed in the expansive literature on this subject.

\section{Numerical Simulations}\label{sec:numerics}
In this section, we numerically investigate the sample complexity bounds using phase transitions. We showcase random mask image deblurring results, and also report stable recovery in the presence of additive measurement noise. 
\subsection{Phase transitions}\label{sec:phase-transitions}
We present phase transitions to numerically investigate the constraints in \eqref{eq:sample-complexity-main-thm}, and \eqref{eq:sample-complexity-LN} on the dimensions $Q$, $N$, $M$, $K$, and $L$ for the gradient descent algorithm to succeed with high probability. The shade represents the probability of failure, which is computed over hundred independent experiments. For each experiment, we generate Gaussian random vectors $\vh_0$, and $[\vx_{0,n}]$, and choose $\mC$ to be the subset of the columns of a DCT matrix for every $[\vs_{0,n}]$. The synthetic measurements are then generated following the model \eqref{eq:model}. We run Algorithm \ref{algo:gradient-descent} initialized via Algorithm \ref{algo:initialization}, and classify the experiment as successful if the relative error 
\begin{align}\label{eq:relative-error}
\text{Relative Error}: = \frac{\|\hat{\vh}\hat{\vx}^*-\vh_0\vx_0^*\|_F}{\|\vh_0\vx_0^*\|_F}
\end{align}
is below $10^{-2}.$ The probability of success at each point is computed over hundred such independent experiments. 

The first four (left  to right)  phase diagrams in Figure \ref{fig:phase-transitions} investigate successful recovery using four different (one for each phase diagram) lengths $Q$ of the modulated inputs, and varying values of $K$, and $M$ while keeping $L$, and $N$ fixed. We set $L = 3200$, and $N = 1$ in all four phase transitions while $Q$ is fixed at $800, 1600, 2400$, and $3200$, respectively. Clearly, the white region (probability of success almost 1) expands with increasing $Q$. For example, in first (top left) phase transition, successful recovery occurs almost always when the measurements are a factor of $9$ above the number of unknowns, that is, $L \approx 9(K+M)$, and this factor improves to $5, 3$, and $2.8$ from second to fourth phase transition, respectively. These phase transitions show that successful recovery is obtained for a wide range of shorter to longer unknown random sign filters/signals, however, successful recovery happens more often for longer (larger $Q$) modulated inputs.

\begin{figure*}
	\centering
	\begin{tabular}{cccc}
		& \includegraphics[scale = 0.27, trim = 0cm 5cm 1.3cm 5cm,clip]{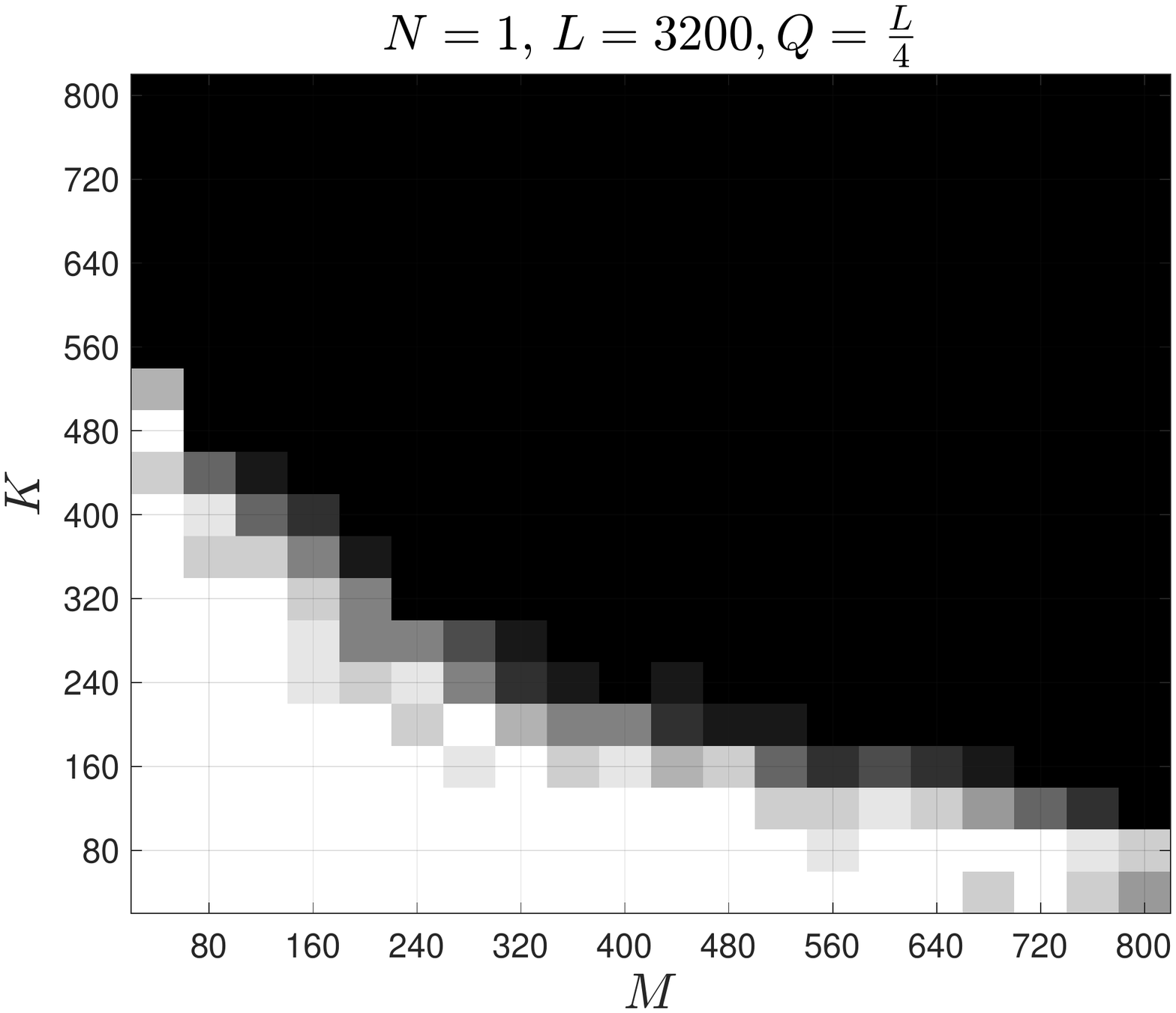}
		&\includegraphics[scale = 0.27, trim = 0.2cm 5cm 1cm 5cm,clip]{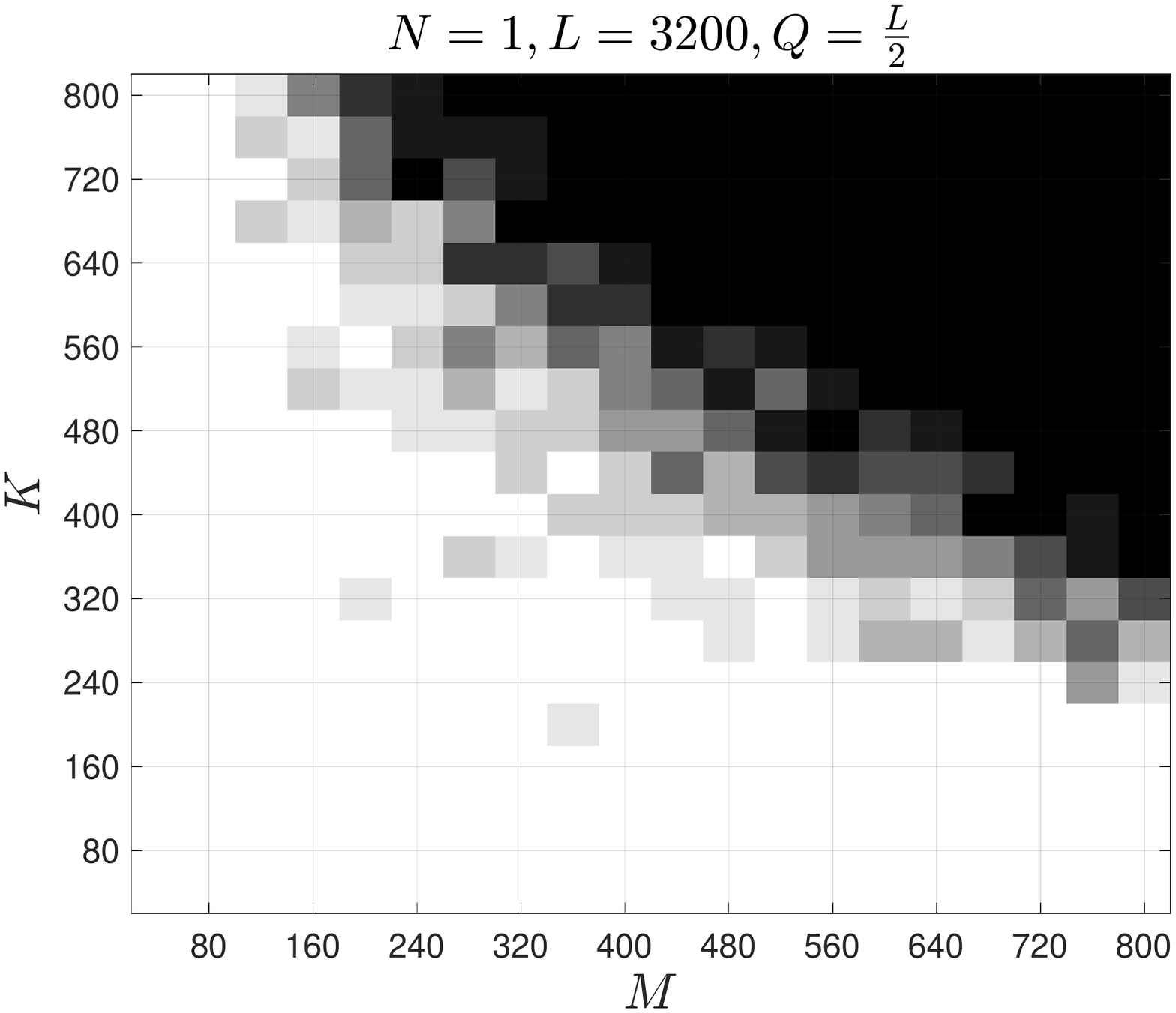}
		&\includegraphics[scale = 0.27, trim = 0cm 5cm -2cm 5cm,clip]{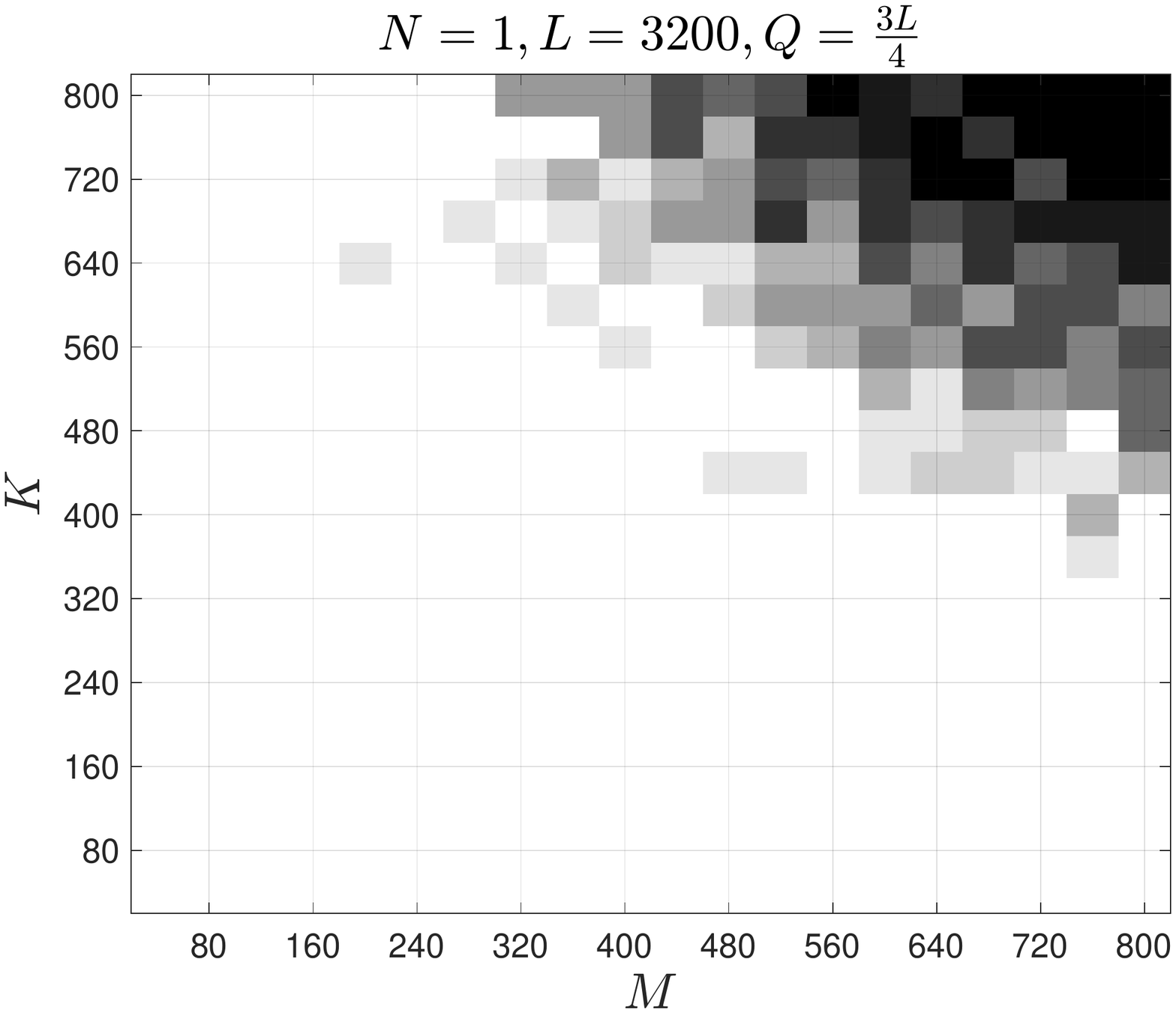}\\
		&\includegraphics[scale = 0.27, trim = 0cm 5cm 1cm 5cm,clip]{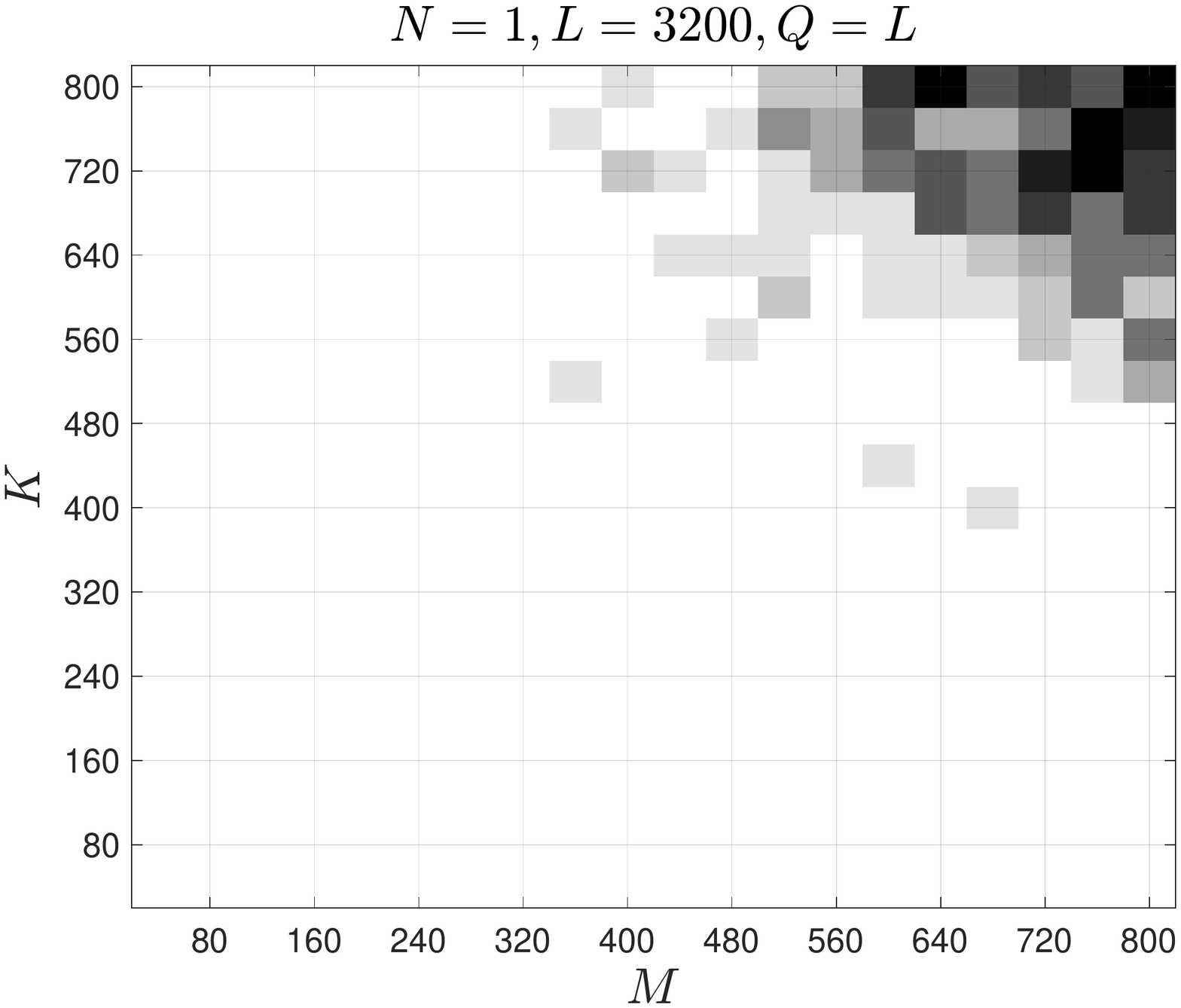}
		&\includegraphics[scale = 0.27, trim = 0cm 5cm 1 5cm,clip]{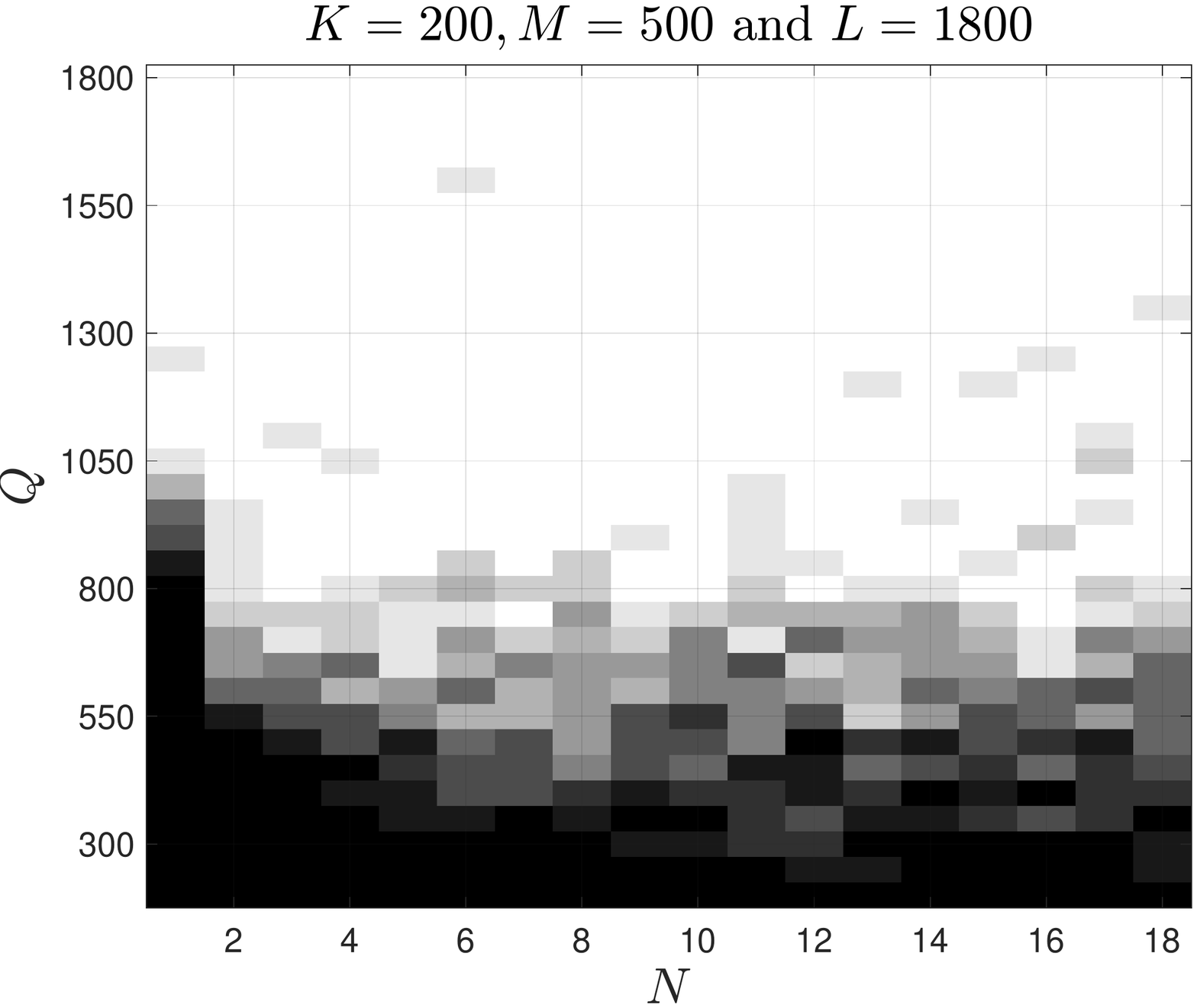}
		&\includegraphics[scale = 0.27, trim = 0.2cm 5cm -2cm 5cm,clip]{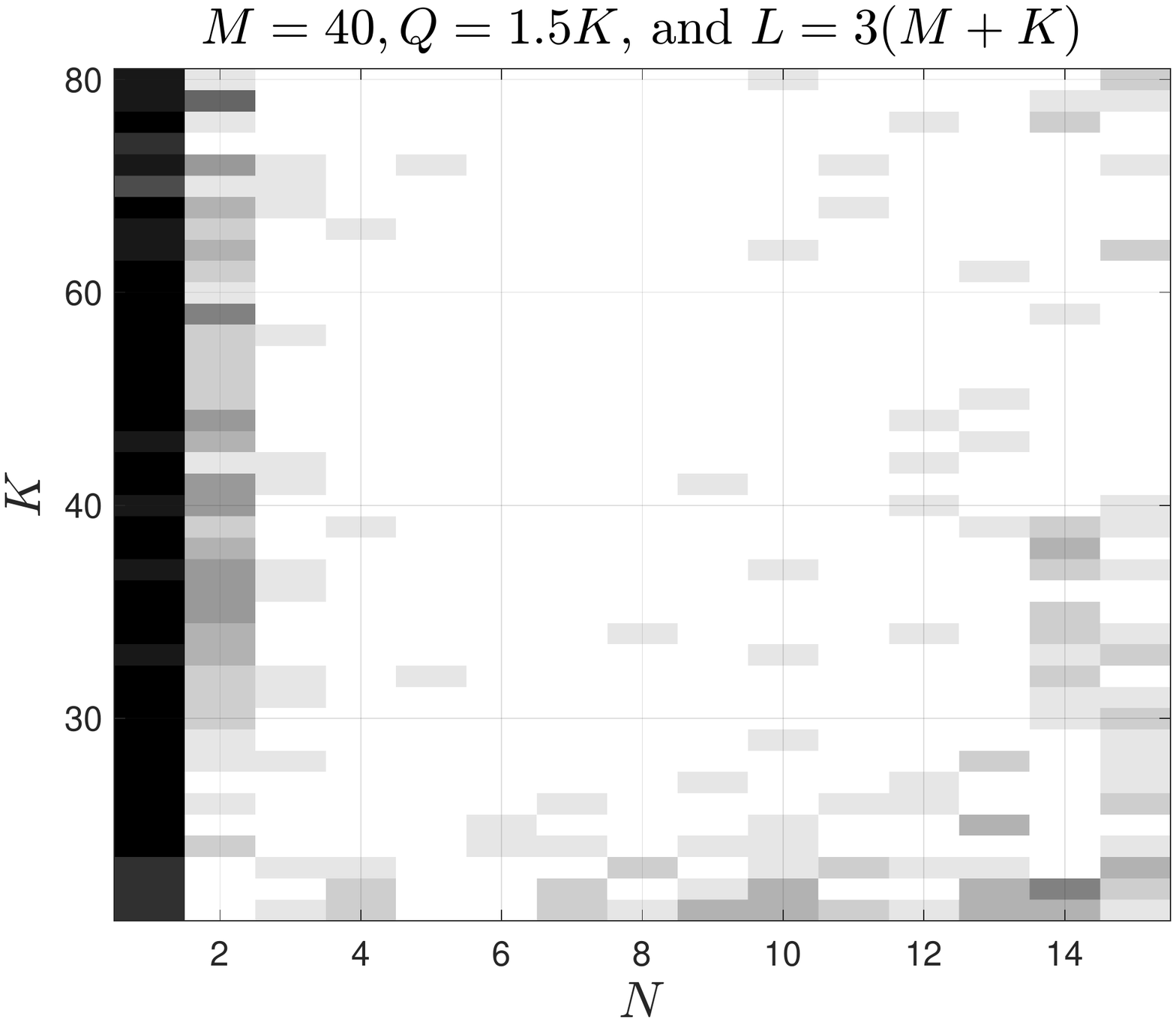}
	\end{tabular}
	\caption{\small\sl First four (left to right) are phase transitions of $K$ vs. $M$ for fixed $Q$, $N$, and $L$.  Together these four phase diagrams show that longer (larger $Q$) modulated inputs allow recovery with larger values of $K$, and $M$. Fifth phase transition is $Q$ vs. $N$ for a fixed $L$, $K$, and $M$.  Successful recovery almost always occurs when $Q \approx 2(K+M)$ across all values of $N$, and hence showing a better scaling than the linear scaling predicted by the theory in \eqref{eq:sample-complexity-main-thm}. Sixth phase transition is $K$ vs. $N$ for a fixed $M$, $Q$, and $L$. $Q$, and $L$ are chosen to be much more pessimistically than \eqref{eq:sample-complexity-main-thm}, however, the dominant white region shows that multichannel $N >1$ leads to favorable results even in such pessimistic regimes. }
	\label{fig:phase-transitions}
\end{figure*} 

 Recall the discussion in Section \ref{sec:discussion}, where we pointed out that the bound in \eqref{eq:sample-complexity-main-thm} is conservative by a factor of $N$. The fifth phase transition in Figure \ref{fig:phase-transitions} investigates the affect of $N$ on minimum value of $Q$ required for successful recovery, and shows that numerically this value of $Q$ does not increase with increasing $N$, and is roughly on the order of $K+M$, and not $KN+(M/N)$ as predicted in \eqref{eq:sample-complexity-main-thm} in Theorem \ref{thm:convergence}. 
 
 Finally, the last phase transition in Figure \ref{fig:phase-transitions} investigates $K$ vs. $N$ under fixed $M$, and $L$, and setting $Q = 1.5K$. It shows that increasing $N$ improves the frequency of successful recovery even under a pessimistic choice of $Q = 1.5K$  in comparison to  the bound in \eqref{eq:sample-complexity-main-thm}, which suggests that $Q\gtrsim KN+(M/N)$. 
 
 In summary, the phase diagrams suggest that $LN \geq QN \gtrsim (KN+M)$ is sufficient for exact recovery with high probability. 

\subsection{Image deblurring}\label{sec:image-deblurring}
In this section, we showcase the result of a synthetic experiment on image deblurring using random masks. 
We select three microscopic $150 \times 150$ images of human blood cells each of which is blurred using the same $10 \times 10$ ($M = 100$) Gaussian blur of variance 7. The original and blurred images are shown in the first and second row of Figure \ref{fig:RMI-Exps}, respectively. Each of the image is assumed to live in a known subspace\footnote{The known subspace of the original image is perhaps an unrealistic assumption in this case, however, a reasonably accurate estimate of the image subspace can be obtained from blurred (small blur) image by taking the multiscale structure of wavelets into account to recover the support of wavelet coefficients of the original image from blurred/smoothed out edges.}  of dimension $K = 3400$ spanned by the most significant wavelet coefficients. We mimic the random mask imaging setup  discussed in Section \ref{sec:RMI}, and pixelwise multiply each image with a $150 \times 150$ random $\pm 1$ mask. Given the observations on the detector array, we jointly deblur three ($N=3$) images using the proposed gradient descent algorithm.   The deblurred images are shown in the third row of Figure \ref{fig:RMI-Exps}.  The total relative mean squared error (MSE) of the three recovered images are $0.0184$, and the blur kernel is estimated within a relative MSE of $1.03 \times 10^{-4}$. 
\begin{figure}
	\centering
	\begin{tabular}{cc}
		\includegraphics[scale = 0.63, trim = 0.2cm 5cm 0 0.6cm,clip]{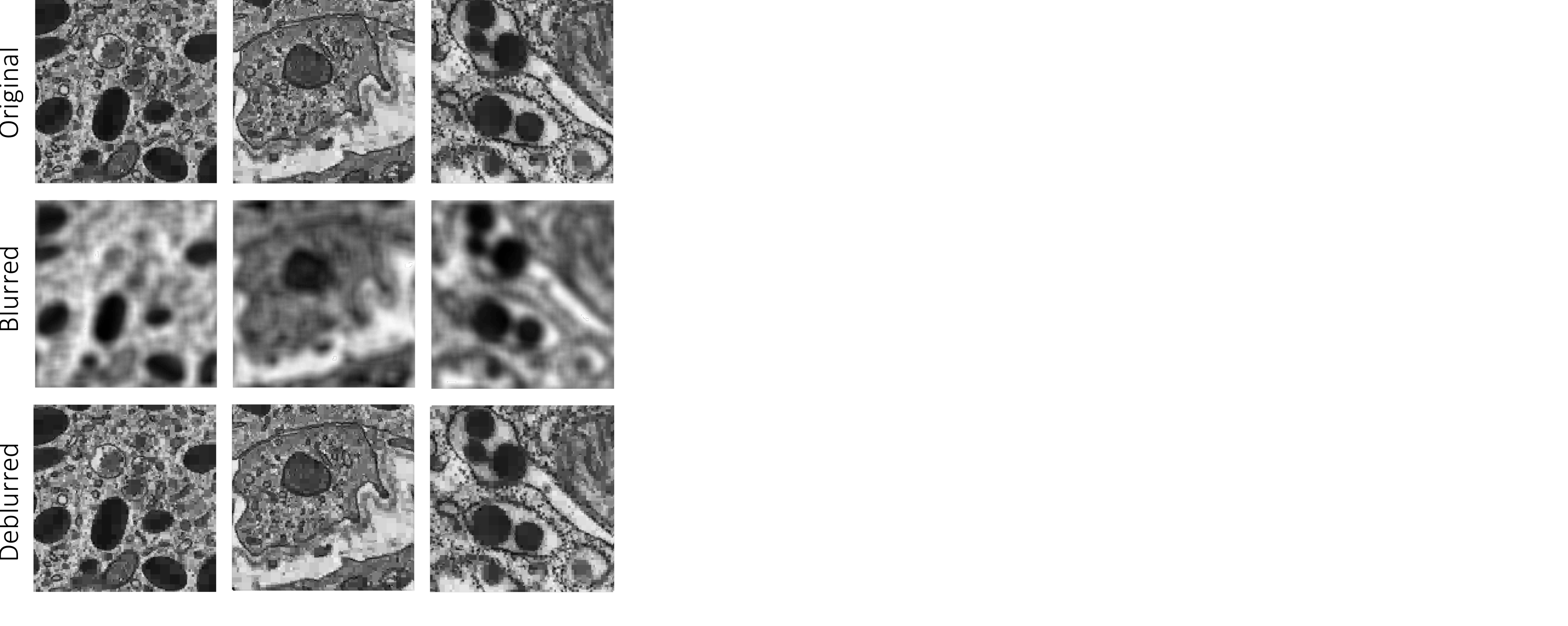}
	\end{tabular}
	\caption{\small\sl Random mask blind image deblurring via gradient descent. Three 150 $\times$ 150 blood cell images shown in the first row. Blurred images using same $10 \times 10$ Gaussian blur of variance 7 are shown in the second row.  Applying random masks on images before the unknown blurring (lens), and using gradient descent algorithm gives deblurred images in the third row. }
	\label{fig:RMI-Exps}
\end{figure} 

\subsection{Performance under noise and oversampling}

Noise performance of the algorithm is depicted in Figure \ref{fig:Stability}. Additive Gaussian noise $\hat{\ve}$ is added in the measurements as in \eqref{eq:measurements}. As before, we synthetically  generate $\vh_0$, and $\vx_0$ as Gaussian vectors, and $\mC$ is the subset of the columns of a DCT matrix. We plot (left) relative error (log scale) in \eqref{eq:relative-error} of the recovered vectors $\hat{\vh}$, and $\hat{\vx}$ averaged over hundred independent experiments vs. $
\text{SNR} := 10 \log_{10} \left(\|\vh_0\vx_0^*\|_F^2/\|\ve\|_2^2\right)$
, and (right) average relative error (log scale) vs. oversampling ratio := $L /(KN+M)$ under no noise. Oversampling ratio is a factor by which the number ($L$ in this case as $N=1$) of measurements exceed the number $K+M$ of unknowns. The left plot shows that the relative error degrades \textit{gracefully} by reducing SNR, and the right shows that relative error almost reduces to zero when the oversampling ration exceeds 2.1. 
\begin{figure*}
	\centering
	\begin{tabular}{ccc}
		& \includegraphics[scale = 0.27, trim = 0cm 5cm 0 5cm,clip]{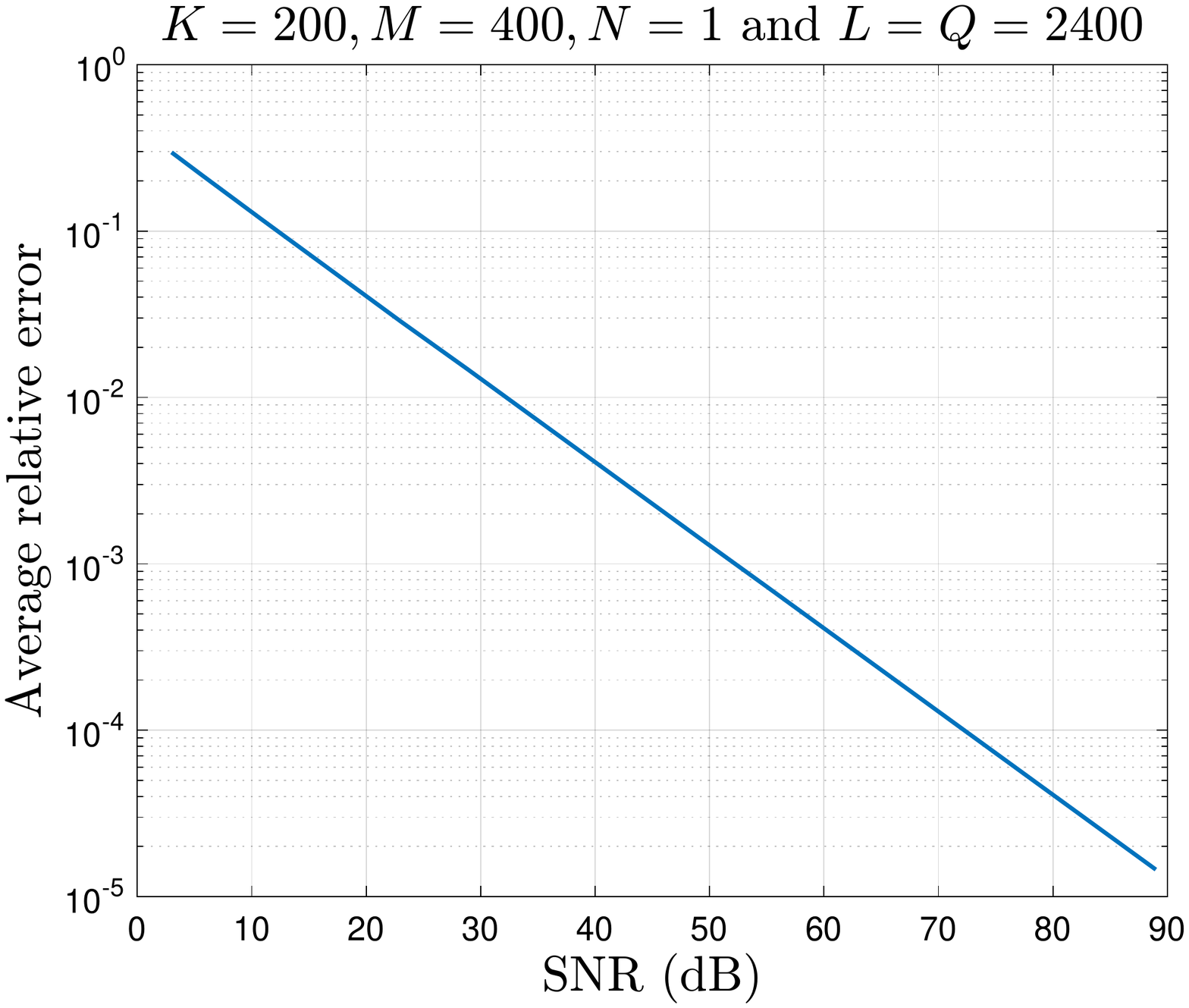}
		&\includegraphics[scale = 0.27, trim = 0cm 5cm 0 5cm,clip]{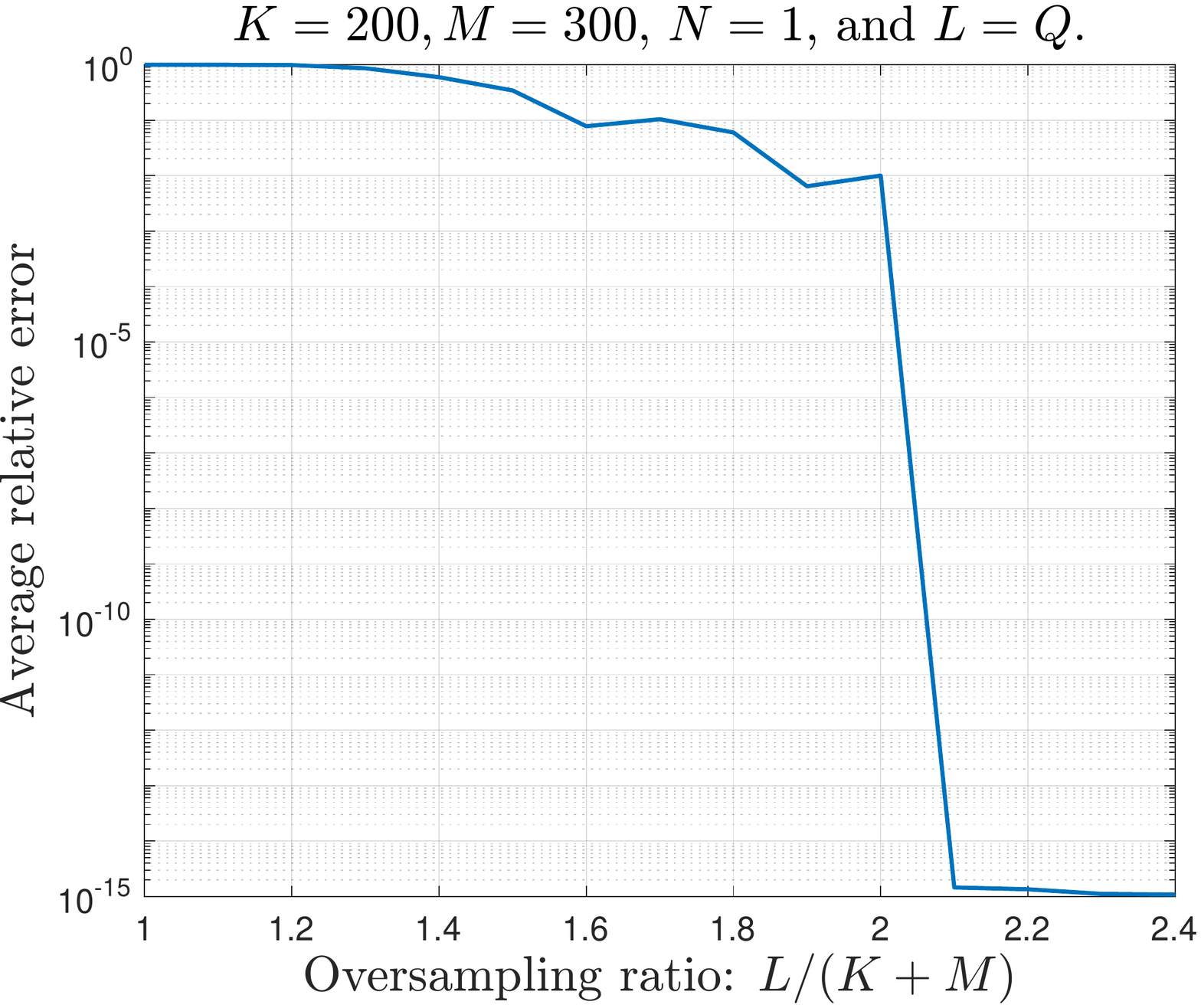}
	\end{tabular}
	\caption{\small\sl Performance in the presence of additive meaurement noise (left). Number of samples  vs. the relative error (right). }
	\label{fig:Stability}
\end{figure*} 

\section{Proofs} 

\subsection{Preliminaries}
Recall the function $F(\vh,\vx)$ defined in \eqref{eq:F-def}, and the gradients $\gfh$, and $\gfx$ in \eqref{eq:gFh-gFx-def}. By linearity, $\nabla \tilde{F}_{\vh} = \nabla F_{\vh} + \nabla G_{\vh}$, and similarly, $\nabla \tilde{F}_{\vx} = \nabla F_{\vx} + \nabla G_{\vx}$. Using the definitions of $F$, and $G$ in \eqref{eq:F-def}, and \eqref{eq:G-def}, the gradients w.r.t. $\vh$, and $\vx$ are 
\begin{align}\label{eq:gF-def}
\nabla F_{\vh}  &= \setA^*(\setA(\vh\vx^*-\vh_0\vx_0^*)-\ve) \vx, \notag\\
 \nabla F_{\vx} &= [\setA^*(\setA(\vh\vx^*-\vh_0\vx_0^*)-\ve)]^*\vh,
\end{align}
\begin{align}\label{eq:ghG-def}
\nabla G_{\vx} &= \frac{\rho}{2d}\Bigg[ G_0^\prime \left( \frac{\|\vx\|_2^2}{2d}\right) \vx \notag \\
& + \frac{QN}{4\nu^2} \sum_{q=1}^Q\sum_{n=1}^N G_0^\prime \left( \frac{QN |\vc_{q,n}^*\vx|^2}{8d \nu^2}\right) \vc_{q,n}\vc_{q,n}^* \vx\Bigg],
\end{align}
and 
\begin{align}\label{eq:gxG-def}
\nabla G_{\vh} = \frac{\rho}{2d}\Bigg[ G_0^\prime \left( \frac{\|\vh\|_2^2}{2d}\right) \vh + \frac{L}{4\mu^2} \sum_{\ell=1}^L G_0^\prime \left( \frac{L |\vf_{\ell}^*\vh|^2}{8d \mu^2}\right) \vf_{\ell}\vf_{\ell}^* \vh\Bigg].
\end{align}
We have the following useful lower and upper bounds on $F(\vh,\vx)$ using the triangle inequality
\begin{align*}
& - 2 \|\setA^*(\ve)\|_{2 \rightarrow 2}\|\vh\vx^*-\vh_0\vx_0^*\|_* \leq F(\vh,\vx)-\|\ve\|_2^2 \notag\\
& - \|\setA(\vh\vx^*-\vh_0\vx_0^*)\|_2^2 \leq   2 \|\setA^*(\ve)\|_{2 \rightarrow 2}\|\vh\vx^*-\vh_0\vx_0^*\|_*.
\end{align*}
For brevity, we set $\|\vh\vx^*-\vh_0\vx_0^*\|_F : = \delta d_0$. Since $\vh\vx^*-\vh_0\vx_0^*$ is a rank-2 matrix, we have  $\|\vh\vx^*-\vh_0\vx_0^*\|_* \leq \sqrt{2} \|\vh\vx^*-\vh_0\vx_0^*\|_F = \sqrt{2}\delta d_0$, where we used $\|\vh_0\|_2 = \|\vx_0\|_2 = \sqrt{d_0}$ from Lemma \ref{lem:local-regulaity-Del-norm-bounds}. Invoking Lemma \ref{lem:noise-stability}, and Lemma \ref{lem:local-RIP}  with $\xi = \frac{1}{4}$, we have
\begin{align}\label{eq:F(h,x)-lower-upper-bound}
\|\ve\|_2^2 +\frac{3}{4} \delta^2 d_0^2 - \frac{\varepsilon\delta d_0}{5} \leq F(\vh,\vx) \leq \|\ve\|_2^2 +\frac{5}{4} \delta^2 d_0^2 + \frac{\varepsilon\delta d_0}{5}.
\end{align}
In the analysis later, it will be convenient to uniquely decompose $\vh$, and $\vx$ as $\vh = \alpha_1\vh_0 + \tilde{\vh}$, and $\vx = \alpha_2 \vx_0 + \tilde{\vx}$, where $\tilde{\vh} \perp \vh$, $\tilde{\vx} \perp \vx$, and $
\alpha_1 = \frac{\vh_0^*\vh}{d_0},\ \text{and}\  \alpha_2 = \frac{\vx_0^*\vx}{d_0}.$
We also define specific vectors $\Delta\vh$, and $\Delta\vx$ that will repeatedly arise in the technical discussion later. 
\begin{align}\label{eq:Deltah-Deltax}
&\Delta \vh = \vh - \alpha \vh_0, \ \text{and} \ \Delta \vx = \vx - \bar{\alpha}^{-1} \vx_0, \\
&\text{where} \ \alpha(\vh,\vx) = \begin{cases}
(1-\delta_0)\alpha_1, & \text{if}~ \|\vh\|_2 \geq \|\vx\|_2\\
\frac{1}{(1-\delta_0)\bar{\alpha}_2}, & \text{if} ~ \|\vh\|_2 < \|\vx\|_2\notag
\end{cases}
\end{align}
with $\delta_0 := \tfrac{\delta}{10}$ --- the choice of $\alpha$ is mainly required for the proof of Lemma \ref{lem:local-regularity-G}. Note that 
\begin{align}\label{eq:difference-expansion}
\vh\vx^*-\vh_0\vx_0^* = (\alpha_1\bar{\alpha}_2-1) \vh_0\vx_0^*+\bar{\alpha}_2 \tilde{\vh}\vx_0^* + \alpha_1 \vh_0 \tilde{\vx}^* + \tilde{\vh}\tilde{\vx}^*.
\end{align}
The lemma below gives bounds on some relevant norms that will be useful in the proofs later. 
\begin{lem}\label{lem:local-regulaity-Del-norm-bounds}
	Recall that $\|\vh_0\|_2 = \|\vx_0\|_2 = \sqrt{d_0}$. If $\delta: = \frac{\|\vh\vx^*-\vh_0\vx_0^*\|_F}{d_0} < 1$ then for all $(\vh,\vx) \in \setN_{d_0}$, we have the following useful bounds $|\alpha_1| < 2$, $|\alpha_2| < 2$, and $|\alpha_1\bar{\alpha}_2-1| \leq \delta$.  For all $(\vh,\vx) \in \setN_{d_0} \cap \setN_{\varepsilon}$ with $\varepsilon \leq 1/15$, there holds $\|\Delta \vh\|_2^2 \leq 6.1\delta^2d_0$, $\|\Delta \vx\|_2^2 \leq 6.1\delta^2d_0$, and $\|\Delta \vh\|_2^2 \|\Delta \vx\|_2^2 \leq 8.4 \delta^4 d_0^2$. Moreover, if we assume $(\vh,\vx) \in \setN_{\mu} \cap \setN_{\nu}$, we have  $\sqrt{L}\|\mF_M\Delta\vh\|_\infty \leq 6 \mu \sqrt{d_0}$, and $\sqrt{QN}\|\mC ^{\otimes N}(\Delta \vx) \|_\infty \leq 6\nu \sqrt{d_0}$. 
\end{lem}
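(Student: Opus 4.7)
My plan is to exploit the orthogonal decomposition $\vh = \alpha_1\vh_0 + \tilde\vh$, $\vx = \alpha_2\vx_0 + \tilde\vx$ (with $\tilde\vh\perp\vh_0$, $\tilde\vx\perp\vx_0$) together with the expansion~\eqref{eq:difference-expansion}, and then split on the two cases in the definition of $\alpha$. The bounds $|\alpha_1|,|\alpha_2|<2$ are immediate from Cauchy--Schwarz using $\|\vh\|_2,\|\vx\|_2 \leq 2\sqrt{d_0}$. Writing
\[
\alpha_1\bar\alpha_2 - 1 \;=\; \langle \vh_0\vx_0^*,\,\vh\vx^* - \vh_0\vx_0^*\rangle/d_0^2
\]
and applying Cauchy--Schwarz in Frobenius norm (with $\|\vh_0\vx_0^*\|_F = d_0$) gives $|\alpha_1\bar\alpha_2 - 1|\leq \delta$, whence $|\alpha_1||\alpha_2|\geq 1-\delta$.

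The four rank-$1$ summands in~\eqref{eq:difference-expansion} are mutually Frobenius-orthogonal (every cross inner product vanishes thanks to $\tilde\vh\perp\vh_0$ and $\tilde\vx\perp\vx_0$), so Pythagoras yields the master identity
\[
\delta^2 d_0^2 \;=\; |\alpha_1\bar\alpha_2-1|^2 d_0^2 + |\alpha_2|^2\|\tilde\vh\|_2^2 d_0 + |\alpha_1|^2\|\tilde\vx\|_2^2 d_0 + \|\tilde\vh\|_2^2\|\tilde\vx\|_2^2.
\]
Nonnegativity of each summand delivers $\|\tilde\vh\|_2^2 \leq \delta^2 d_0/|\alpha_2|^2$, $\|\tilde\vx\|_2^2 \leq \delta^2 d_0/|\alpha_1|^2$, and (multiplying and inserting $|\alpha_1||\alpha_2|\geq 1-\delta$) $\|\tilde\vh\|_2^2\|\tilde\vx\|_2^2 \leq \delta^4 d_0^2/(1-\delta)^2$.

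The one non-mechanical step is producing a lower bound on the \emph{aligned} coefficient, since reciprocals of $|\alpha_i|$ appear above. In case~1 ($\|\vh\|_2\geq \|\vx\|_2$), the hypothesis $(\vh,\vx)\in\setN_\varepsilon$ together with the reverse triangle inequality forces $\|\vh\|_2\|\vx\|_2 = \|\vh\vx^*\|_F \geq (1-\delta)d_0$, so $\|\vh\|_2^2\geq (1-\delta)d_0$. Combining with $\|\vh\|_2^2 = |\alpha_1|^2 d_0 + \|\tilde\vh\|_2^2$ and the chain $\|\tilde\vh\|_2^2 \leq \delta^2 d_0/|\alpha_2|^2 \leq \delta^2 d_0\,|\alpha_1|^2/(1-\delta)^2$, a one-line rearrangement yields $|\alpha_1|^2 \geq (1-\delta)^3/((1-\delta)^2+\delta^2) \geq 0.93$ for $\delta\leq\varepsilon\leq 1/15$. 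Case~2 is symmetric and gives $|\alpha_2|^2\geq 0.93$.

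With the aligned coefficient pinned away from $0$, the remaining claims fall out. In case~1, $\Delta\vh = \delta_0\alpha_1\vh_0 + \tilde\vh$ and orthogonality give $\|\Delta\vh\|_2^2 = \delta_0^2|\alpha_1|^2 d_0 + \|\tilde\vh\|_2^2 \leq 4\delta_0^2 d_0 + 4\delta^2 d_0/(1-\delta)^2 \leq 6.1\delta^2 d_0$. For $\|\Delta\vx\|_2^2$, a short manipulation comparing $(1-\delta_0)\alpha_2\bar\alpha_1$ with $1$ gives $|\alpha_2 - \bar\alpha^{-1}| \leq 1.11\delta/|\alpha_1|$; combined with $\|\tilde\vx\|_2^2 \leq \delta^2 d_0/|\alpha_1|^2$ and $|\alpha_1|^2\geq 0.93$, this produces $\|\Delta\vx\|_2^2 \leq 6.1\delta^2 d_0$. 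The product bound $\|\Delta\vh\|_2^2\|\Delta\vx\|_2^2 \leq 8.4\delta^4 d_0^2$ follows by expanding into four cross terms and applying the estimates above to each; the $|\alpha_1|^2$ factors conveniently cancel in every term, and the dominant contribution is $\|\tilde\vh\|_2^2\|\tilde\vx\|_2^2 \leq \delta^4 d_0^2/(1-\delta)^2$. Case~2 is entirely symmetric. Finally, the coherence bounds reduce to the triangle inequality $\sqrt L\|\mF_M\Delta\vh\|_\infty \leq \sqrt L\|\mF_M\vh\|_\infty + |\alpha|\sqrt L\|\mF_M\vh_0\|_\infty \leq 4\mu\sqrt{d_0} + |\alpha|\mu\sqrt{d_0}$, using $(\vh,\vx)\in\setN_\mu$ and $\sqrt L\|\mF_M\vh_0\|_\infty = \mu\sqrt{d_0}$ from the definition of $\mu$. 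In both cases $|\alpha|\leq 2$ (trivial in case~1; in case~2 from the symmetric $|\alpha_2|^2\geq 0.93$), so the sum is $\leq 6\mu\sqrt{d_0}$. The bound $\sqrt{QN}\|\mC^{\otimes N}\Delta\vx\|_\infty\leq 6\nu\sqrt{d_0}$ is entirely analogous.
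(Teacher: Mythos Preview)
Your proof is correct and follows essentially the same approach as the paper: the orthogonal decomposition, the Pythagorean master identity for $\|\vh\vx^*-\vh_0\vx_0^*\|_F^2$, the case split on $\alpha$, and the triangle inequality for the coherence bounds. The paper actually defers the bounds on $\|\Delta\vh\|_2^2$, $\|\Delta\vx\|_2^2$, their product, and $\sqrt L\|\mF_M\Delta\vh\|_\infty$ to Lemma~5.15 of \cite{li2016rapid} and only writes out the $\sqrt{QN}\|\mC^{\otimes N}\Delta\vx\|_\infty$ bound explicitly, so your argument is more self-contained; the only tactical difference is that in case~1 the paper obtains $|\bar\alpha|^{-1}\leq 2$ by first bounding $|\alpha_2|\leq\sqrt{1+\delta}$ (from $\|\vx\|_2\leq\|\vh\|_2$ and $\|\vh\|_2\|\vx\|_2\leq(1+\delta)d_0$) and then invoking $|\alpha_1\alpha_2|\geq 1-\delta$, whereas you go directly to the sharper $|\alpha_1|^2\geq 0.93$.
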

\noindent Proof of this lemma is provided in Appendix \ref{sec:local-regulaity-Del-norm-bounds}.
\subsection{Proof Strategy} 
Albeit important differences, the main template of the proof of Theorem \ref{thm:convergence} is similar to \cite{li2016rapid}.  To avoid overlap with \cite{li2016rapid}, we refer the reader on multiple points in the exposition below to consult \cite{li2016rapid}  for some intermediate results that are already proved in \cite{li2016rapid}. To facilitate this, the notation is kept very similar to \cite{li2016rapid}. 

The main lemmas required to prove Theorem \ref{thm:convergence} fall under one of the four key conditions \cite{li2016rapid} stated in Section \ref{sec:Key-conditions}. The lemmas under the important local RIP, and noise robustness conditions require completely different proofs compared to \cite{li2016rapid}, due to the new structured random linear map $\setA$ in \eqref{eq:linear-map} in this paper. The limited randomness in $\setA$ calls for a more intricate chaining argument to handle probabilistic deviation results required to prove the local-RIP.

We begin by stating a main lemma showing that the iterates of gradient descent algorithm decrease the objective function $\tf(\vh,\vx)$. Let
$\vz_t := (\vu_t,\vv_t) \in \C^{M+KN}$ be the $t$th iterate of Algorithm \ref{algo:gradient-descent} that are close enough to the truth $(\vh_0, \vx_0)$, i.e., $\vz_t \in \setN_{\varepsilon}$ and have a small enough loss $\tf(\vz_t) := \tf(\vu_t,\vv_t)$, that is, $\vz_t \in \setN_{\tf}$, where 
\begin{align}\label{eq:setNF}
\setN_{\tilde{F}} : = \bigg\{(\vh,\vx) \ | \ \tilde{F}(\vh,\vx) \leq \frac{1}{3} \varepsilon^2d_0^2 + \|\ve\|_2^2\bigg\}
\end{align}
is a sub-level set of the non-convex loss function $\tilde{F}(\vh,\vx)$. We show that with current iterate  $\vz_t \in \setN_{\varepsilon}\cap\setN_{\tf}$, the next iterate $\vz_{t+1}$ of the Algorithm \ref{algo:gradient-descent} belongs to the same neighborhood, and does not increase the loss function. 

\begin{lem}[Lemma 5.8 in \cite{li2016rapid}]\label{lem:main-lemma}
	Let the step size $\eta \leq 1/C_L$, $\vz_t := (\vu_t,\vv_t) \in \C^{M+KN}$, and $C_L$ be the constant defined in \eqref{eq:CL-def}. Then, as long as $\vz_t \in \setN_{\varepsilon} \cap \setN_{\tf}$, we have $\vz_{t+1} \in \setN_{\varepsilon} \cap \setN_{\tf}$, and 
	\begin{align*}
	\tf(\vz_{t+1}) \leq \tf(\vz_t) - \eta \| \gtf (\vz_t)\|_2^2. 
	\end{align*}
\end{lem}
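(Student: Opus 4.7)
The plan is to apply a one-step descent inequality of the form $\tf(\vz_{t+1}) \leq \tf(\vz_t) - \eta \|\gtf(\vz_t)\|_2^2$, and then argue by a standard invariance/bootstrap argument that the new iterate stays inside $\setN_\varepsilon \cap \setN_{\tf}$. Concretely, I would parametrize the gradient step as $\vz(\tau) := \vz_t - \tau \eta \, \gtf(\vz_t)$ for $\tau \in [0,1]$, and apply the fundamental theorem of calculus to the real-valued function $\varphi(\tau) := \tf(\vz(\tau))$. Using the (Wirtinger) chain rule $\varphi'(\tau) = -2\eta\,\mathrm{Re}\langle \gtf(\vz(\tau)), \gtf(\vz_t)\rangle$ together with a local Lipschitz-gradient bound $\|\gtf(\vz(\tau)) - \gtf(\vz_t)\|_2 \leq C_L\,\tau\eta\,\|\gtf(\vz_t)\|_2$, one obtains
\begin{align*}
\tf(\vz_{t+1}) - \tf(\vz_t) \leq -2\eta\,\|\gtf(\vz_t)\|_2^2 + C_L\, \eta^2\, \|\gtf(\vz_t)\|_2^2,
\end{align*}
which is at most $-\eta\|\gtf(\vz_t)\|_2^2$ whenever $\eta \leq 1/C_L$.

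The main technical content is a local Lipschitz-gradient bound for $\tf$ on some enlarged neighborhood $\widetilde{\setN}$ containing $\setN_{d_0}\cap\setN_\mu\cap\setN_\nu\cap\setN_\varepsilon$. I would derive this directly from the explicit expressions in \eqref{eq:gF-def}--\eqref{eq:gxG-def}: the $F$-part reduces to controlling operator quantities of the form $\setA^*\setA(\vh\vx^*-\vh_0\vx_0^*)$ and $\setA^*(\ve)$, which are already governed by the local RIP and noise-stability estimates that underlie the two-sided bound \eqref{eq:F(h,x)-lower-upper-bound}; the $G$-part is explicitly bounded in terms of the coherence parameters $\mu, \nu$, the scale $\rho$, and the $C^1$-with-Lipschitz-derivative regularity of $G_0(z) = \max\{z-1,0\}^2$. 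Summing all contributions produces the constant $C_L$ that the step size must respect.

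The hardest step is certifying that the entire segment $\{\vz(\tau)\}_{\tau\in[0,1]}$ lies inside $\widetilde{\setN}$ so that the Lipschitz bound can be applied along it. I would handle this with a continuity/bootstrap argument: let $\tau^\star \in [0,1]$ be the largest $\tau$ with $\vz(\tau) \in \widetilde{\setN}$, and suppose for contradiction that $\tau^\star < 1$. On $[0,\tau^\star]$ the descent inequality applies, so $\tf(\vz(\tau^\star)) \leq \tf(\vz_t) \leq \tfrac{1}{3}\varepsilon^2 d_0^2 + \|\ve\|_2^2$, which places $\vz(\tau^\star) \in \setN_{\tf}$. The lower bound in \eqref{eq:F(h,x)-lower-upper-bound} then forces the Frobenius deviation at $\vz(\tau^\star)$ to lie strictly below $\varepsilon d_0$; and because the choice $\rho \geq d^2 + \|\ve\|_2^2$ makes each penalty in \eqref{eq:G-def} blow past $\tf(\vz_t)$ well before the hard caps of $\setN_{d_0},\setN_\mu,\setN_\nu$ can be saturated, the coherence and norm constraints also hold with strict slack at $\vz(\tau^\star)$. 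By continuity this contradicts $\vz(\tau^\star) \in \partial \widetilde{\setN}$, so $\tau^\star = 1$ and in particular $\vz_{t+1} \in \widetilde{\setN}$.

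Membership of $\vz_{t+1}$ in $\setN_{\tf}$ is then immediate from the one-step descent, and membership in $\setN_\varepsilon$ follows from substituting the bound $\tf(\vz_{t+1}) \leq \tfrac{1}{3}\varepsilon^2 d_0^2 + \|\ve\|_2^2$ into \eqref{eq:F(h,x)-lower-upper-bound} and solving the resulting quadratic in $\delta := \|\vu_{t+1}\vv_{t+1}^* - \vh_0\vx_0^*\|_F/d_0$, yielding $\delta \leq \varepsilon$. The only real calibration issue is aligning the slack in $\widetilde{\setN}$ with the value of $C_L$ and the factor $\tfrac{1}{3}$ in \eqref{eq:setNF} so that the bootstrap actually closes; this is pure bookkeeping once the local RIP, noise-stability, and regularity estimates on $G$ are in hand.
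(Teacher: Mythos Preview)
Your proposal is correct and matches the paper's approach: the paper simply states that the proof is exactly as Lemma~5.8 in \cite{li2016rapid} and relies on the local smoothness condition (Lemma~\ref{lem:smoothness-CL}), and what you have written is precisely the standard descent-under-smoothness argument that constitutes that referenced proof. Your identification of the Lipschitz-gradient bound as the key ingredient, the line-segment/bootstrap argument for remaining in the neighborhood, and the use of \eqref{eq:F(h,x)-lower-upper-bound} to deduce $\setN_\varepsilon$-membership from $\setN_{\tf}$-membership are all on target.
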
 
\begin{proof}
	The proof is exactly as the proof of Lemma 5.8 in \cite{li2016rapid}, and relies on the smoothness condition in Lemma \ref{lem:smoothness-CL} below. 
\end{proof}

\subsection{Key conditions}\label{sec:Key-conditions}
We now state the lemmas under four key conditions required to prove Lemma \ref{lem:main-lemma}, and the theorems. 
\subsubsection{Local smoothness}
\begin{lem}\label{lem:smoothness-CL} For any $\vz := (\vh,\vx)$ and $\vw := (\vu,\vv)$ such that $\vz$, $\vz+\vw \in \setN_{\varepsilon}\cap \setN_{\tilde{F}}$, there holds 
	\begin{align*}
	&\|\nabla \tilde{F}(\vz+\vw) - \nabla \tilde{F}(\vz)\|_2 \leq C_L\|\vw\|_2 ~ \text{with} ~\\
	& C_L \leq \sqrt{2}d_0 \left[ 10 \|\setA\|_{2 \rightarrow 2}^2 + \frac{\rho}{d^2} \left(5 + \frac{3L}{2\mu^2}+ \frac{3QN}{2\nu^2}\right)\right],
	\end{align*}
	where $\rho \geq d^2 + 2\|\ve\|_2^2$, and $\|\setA\|_{2 \rightarrow 2} \leq c_\alpha \sqrt{K \log (LN)}$ holds with probability at least $1-\setO((LN)^{-\alpha})$. In particular, $
	QN = \setO(\mu^2\nu_{\max}^2 KN^2 + \nu^2 M)\log^4(LN), \ \text{and} \ \|\ve\|^2 = \setO(\sigma^2d_0^2).$
	Therefore, $C_L$ can be simplified to 
	\begin{align}\label{eq:CL-def}
	C_L = \setO\left(d_0(1+\sigma^2)\big(\mu^2\nu_{\max}^2 KN^2+ L \big)\log^4(LN)\right)
	\end{align}
	by choosing $\rho\approx d^2 + 2 \|\ve\|_2^2$. 
\end{lem}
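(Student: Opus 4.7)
The plan is to decompose $\tilde F = F + G$ and bound the Lipschitz constants of $\nabla F$ and $\nabla G$ separately, then sum them. For the data-fit term, I would expand the difference $\nabla F_{\vh}(\vz+\vw) - \nabla F_{\vh}(\vz)$ using the bilinear identity $(\vh+\vu)(\vx+\vv)^* - \vh\vx^* = \vu\vx^* + \vh\vv^* + \vu\vv^*$. This produces pieces of the form $\setA^*\setA(\text{rank-}1\text{ perturbation})\cdot(\vx+\vv)$ together with a residual $[\setA^*(\setA(\vh\vx^* - \vh_0\vx_0^*) - \ve)]\vv$. Each is controlled by $\|\setA\|_{2\to 2}^2$ times the Frobenius norm of the rank-$1$ perturbation times a vector norm, plus the noise term $\|\setA^*(\ve)\|_{2\to 2}\|\vv\|_2$. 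On $\setN_{\varepsilon}\cap\setN_{\tilde F}$ the norms $\|\vh\|_2,\|\vx\|_2$ are of order $\sqrt{d_0}$ (since membership in $\setN_{\tilde F}$ forces $G$ to be small, hence each of its four summands must lie in the flat region of $G_0$, placing $(\vh,\vx)$ in a constant blow-up of $\setN_{d_0}\cap\setN_\mu\cap\setN_\nu$), $\|\vh\vx^* - \vh_0\vx_0^*\|_F \leq \varepsilon d_0$, and the noise-stability lemma gives $\|\setA^*(\ve)\|_{2\to 2}\lesssim \varepsilon d_0$. Collecting the pieces yields a bound proportional to $d_0\|\setA\|_{2\to 2}^2\|\vw\|_2$, contributing the $10\|\setA\|_{2\to 2}^2$ term in $C_L$; a symmetric argument handles $\nabla F_{\vx}$.

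For the regularizer, each summand of $\nabla G_{\vh}$ and $\nabla G_{\vx}$ in \eqref{eq:ghG-def}--\eqref{eq:gxG-def} has the schematic form $\tfrac{\rho}{d^2}\, G_0'(a(\vh,\vx))\,\mathbf{M}\,(\vh\text{ or }\vx)$, where $a$ is a quadratic form and $\mathbf{M}$ is either $\mI$ or a rank-$1$ projector $\vf_\ell\vf_\ell^*$ or $\vc_{q,n}\vc_{q,n}^*$. Since $G_0'(z)=2\max(z-1,0)$ is $2$-Lipschitz and $G_0$ vanishes on $[0,1]$, each summand is simultaneously Lipschitz in its quadratic argument and linear in the perturbation $\vw$. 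The two norm-regularizer terms combine to the $5\rho/d^2$ piece. The coherence regularizers telescope over $\ell$ and $(q,n)$ via $\sum_\ell\vf_\ell\vf_\ell^* = \mI_M$ and $\sum_{q,n}\vc_{q,n}\vc_{q,n}^* = \mI_{KN}$ (because $\mF_M$ and $\mC^{\otimes N}$ are isometries); combined with the explicit prefactors $L/(8d\mu^2)$ and $QN/(8d\nu^2)$ inside $G_0'$, they produce the $\tfrac{3L\rho}{2d^2\mu^2}$ and $\tfrac{3QN\rho}{2d^2\nu^2}$ terms. On the same neighborhood all relevant $G_0'$ arguments are $\setO(d_0/d)$, yielding the overall $\sqrt 2\, d_0$ factor in the final bound.

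The main obstacle is the high-probability operator-norm bound $\|\setA\|_{2\to 2}\lesssim\sqrt{K\log(LN)}$. Unlike earlier works that rely on dense i.i.d.\ Gaussian designs, $\setA$ is here assembled from deterministic partial DFT rows and diagonal Rademacher modulations $\mR_n$, so a direct matrix Bernstein bound is too loose. I would write $\setA^*\setA$ as a sum of rank-$1$ operators indexed by $(\ell,n)$, extract its deterministic mean using $\mF_Q^*\mF_Q = \mI_Q$, $\mR_n^2 = \mI_Q$, and $\mC^*\mC = \mI_K$, and control the fluctuation by a symmetrization-and-decoupling argument against the Rademacher signs $\vr_n$ followed by an $\varepsilon$-net covering of the unit Frobenius ball in $\C^{M\times KN}$; the $\log(LN)$ factor arises from the resulting union bound. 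Once this operator-norm estimate is established, substituting it together with the sample-complexity regime $QN\gtrsim(\mu^2\nu_{\max}^2 KN^2 + \nu^2 M)\log^4(LN)$ and the noise scaling $\|\ve\|_2^2 = \setO(\sigma^2 d_0^2)$ into the combined Lipschitz bound produces the simplified expression \eqref{eq:CL-def} upon choosing $\rho\approx d^2 + 2\|\ve\|_2^2$.
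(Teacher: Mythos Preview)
Your treatment of the Lipschitz constants of $\nabla F$ and $\nabla G$ is essentially the paper's argument: the same bilinear expansion of $(\vh+\vu)(\vx+\vv)^*-\vh\vx^*$, the same use of $\setN_{\tilde F}\subset \setN_{d_0}\cap\setN_\mu\cap\setN_\nu$ to control norms and coherences, the same exploitation of the $2$-Lipschitz property of $G_0'$ together with $\sum_\ell \vf_\ell\vf_\ell^*=\mI_M$ and $\sum_{q,n}\vc_{q,n}\vc_{q,n}^*=\mI_{KN}$, and the final assembly via $\|\vu\|_2+\|\vv\|_2\leq\sqrt{2}\|\vw\|_2$. No complaint there.

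Where you diverge from the paper is the bound $\|\setA\|_{2\to 2}\leq c_\alpha\sqrt{K\log(LN)}$. You propose to control $\|\setA^*\setA-\E\setA^*\setA\|$ by symmetrization, decoupling in the Rademacher signs, and an $\varepsilon$-net over the unit Frobenius ball of $\C^{M\times KN}$. This is heavier machinery than required and is unlikely to deliver the stated dependence on $K$ alone: a net over that ball has cardinality $(C/\varepsilon)^{2MKN}$, so the union bound would typically inject an $M\cdot KN$ (or at best $M+KN$) factor rather than the clean $K$; moreover the summands in $\setA^*\setA$ are not independent across $\ell$ for fixed $n$, which complicates a direct matrix-Bernstein route as well. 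The paper instead observes that the measurement atoms are mutually orthogonal, $\langle \vf_\ell\hat{\vc}_{\ell,n}^*,\vf_{\ell'}\hat{\vc}_{\ell',n'}^*\rangle=0$ for $(\ell,n)\neq(\ell',n')$, so that $\|\setA\|_{2\to2}=\max_{\ell,n}\|\vf_\ell\hat{\vc}_{\ell,n}^*\|_F\leq\max_{\ell,n}\|\hat{\vc}_{\ell,n}\|_2$. Since $\hat{\vc}_{\ell,n}=\sqrt{L}\sum_q f_\ell[q]r_n[q]\vc_{q,n}$ is a Rademacher sum of fixed vectors with variance proxy $\sum_q\|\vc_{q,n}\|_2^2=K$, a single application of the vector/matrix Rademacher tail (Proposition~\ref{prop:conc_ineq}) plus a union bound over the $LN$ indices gives $\max_{\ell,n}\|\hat{\vc}_{\ell,n}\|_2\leq c_\alpha\sqrt{K\log(LN)}$ directly. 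This orthogonality-plus-pointwise-concentration route is both simpler and yields the sharp $K$ dependence that feeds into \eqref{eq:CL-def}.
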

\noindent Proof of this lemma is provided in Appendix \ref{sec:smoothness-CL}.

\subsubsection{Local regularity}

Recall that from Lemma \ref{lem:main-lemma}, we set the step size $\eta \leq 1/C_L$. With the step size $\eta$ characterized through the constant $C_L$  in \eqref{eq:CL-def}, the next task is to find a lower bound on the norm of the gradient $\|\gtf(\vz_t)\|^2_2$ in Lemma \ref{lem:main-lemma}. Following lemma provides this bound. 
\begin{lem}[Lemma 5.18 in \cite{li2016rapid}]\label{lem:local-regularity}
	Let $\tf(\vh,\vx)$ be as defined in \eqref{eq:tF-def} and $\gtf(\vh,\vx) : = (\gtfh,\gtfx) \in \C^{M+KN}$. Then there exists a regularity constant $\omega = d_0/5000 >0$ such that 
	\begin{align*}
	\|\gtf(\vh,\vx)\|_2^2 \geq \omega \left[ \tf(\vh,\vx) -c\right]_+
	\end{align*}
	for any $(\vh,\vx) \in \setN_{d_0} \cap \setN_{\mu} \cap \setN_{\nu} \cap \setN_{\varepsilon}$, where $c= \|\ve\|_2^2+1700 \|\setA^*(\ve)\|_{2\rightarrow 2}^2$, and $\rho \geq d^2+\|\ve\|_2^2$. 
\end{lem}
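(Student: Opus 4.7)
The plan is to lower-bound $\|\nabla\tilde F(\vh,\vx)\|_2$ by its directional component along the auxiliary vector $(\Delta\vh,\Delta\vx)$ defined in \eqref{eq:Deltah-Deltax}, via Cauchy--Schwarz:
\[
\|\nabla\tilde F\|_2^2 \geq \frac{\left(\mathrm{Re}\langle\nabla\tilde F,(\Delta\vh,\Delta\vx)\rangle\right)^2}{\|(\Delta\vh,\Delta\vx)\|_2^2}.
\]
Lemma \ref{lem:local-regulaity-Del-norm-bounds} already supplies $\|(\Delta\vh,\Delta\vx)\|_2^2 \leq 12.2\,\delta^2 d_0$, so the task reduces to establishing a matching lower bound $\mathrm{Re}\langle\nabla\tilde F,(\Delta\vh,\Delta\vx)\rangle \gtrsim \delta^2 d_0^2$ (up to noise), because then the quotient is $\gtrsim d_0\delta^2 d_0^2$, which will exceed $(d_0/5000)[\tilde F - c]_+$ after we bound $\tilde F$ from above by \eqref{eq:F(h,x)-lower-upper-bound} together with a separate bound on the regularizer $G$.

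I split $\nabla\tilde F = \nabla F + \nabla G$. For the measurement-loss part I use the algebraic identity obtained directly from the decomposition $\vh=\alpha\vh_0+\Delta\vh$, $\vx=\bar\alpha^{-1}\vx_0+\Delta\vx$:
\[
\Delta\vh\,\vx^* + \vh\,\Delta\vx^* \;=\; (\vh\vx^*-\vh_0\vx_0^*) \;+\; \Delta\vh\,\Delta\vx^*.
\]
Substituting into $\mathrm{Re}\langle\nabla F,(\Delta\vh,\Delta\vx)\rangle$ and using self-adjointness of $\setA^*\setA$ produces
\[
\mathrm{Re}\langle\nabla F,(\Delta\vh,\Delta\vx)\rangle = \|\setA(\vh\vx^*-\vh_0\vx_0^*)\|_2^2 + \mathrm{Re}\langle\setA(\vh\vx^*-\vh_0\vx_0^*)-\ve,\setA(\Delta\vh\Delta\vx^*)\rangle - \mathrm{Re}\langle\setA^*(\ve),\vh\vx^*-\vh_0\vx_0^*\rangle.
\]
The local-RIP (Lemma \ref{lem:local-RIP}) with $\xi=\frac14$ gives $\|\setA(\vh\vx^*-\vh_0\vx_0^*)\|_2^2 \geq \tfrac{3}{4}\delta^2 d_0^2$. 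The cross term is controlled by $\|\Delta\vh\Delta\vx^*\|_F \leq \sqrt{8.4}\,\delta^2 d_0$ from Lemma \ref{lem:local-regulaity-Del-norm-bounds}, which is genuinely of lower order (factor $\delta$) whenever $\delta$ is small; and the noise term is controlled by $\|\setA^*(\ve)\|_{2\to 2}\cdot\|\vh\vx^*-\vh_0\vx_0^*\|_* \leq \sqrt{2}\|\setA^*(\ve)\|_{2\to 2}\,\delta d_0$.

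For the regularizer I argue that $\mathrm{Re}\langle\nabla G,(\Delta\vh,\Delta\vx)\rangle$ contributes a non-negative amount plus a term proportional to $G$ itself. Each summand of $\nabla G$ in \eqref{eq:ghG-def}--\eqref{eq:gxG-def} factors through $G_0'(z)=2\max(z-1,0)\geq 0$. The specific two-case definition of $\alpha$ in \eqref{eq:Deltah-Deltax}, namely $\alpha=(1-\delta_0)\alpha_1$ when $\|\vh\|_2\geq\|\vx\|_2$ (and symmetrically otherwise), is engineered so that $\mathrm{Re}\langle\vh,\Delta\vh\rangle$ and the analogous quantities for the coherence penalties satisfy $\mathrm{Re}\langle\vh,\Delta\vh\rangle\geq\delta_0\|\vh\|_2^2$ (when $G_0'$ is active). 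Combined with an entry-wise pass through the four penalty terms this yields $\mathrm{Re}\langle\nabla G,(\Delta\vh,\Delta\vx)\rangle \geq \delta_0\cdot G(\vh,\vx)$ up to a harmless factor. This is the delicate step, and I expect it to be the main obstacle: verifying positivity across all four summands requires re-examining the case analysis of $\alpha$ and exploiting $\varepsilon\leq 1/15$ to ensure $|\alpha_1|,|\alpha_2|$ and $|\alpha_1\bar\alpha_2-1|$ remain tightly controlled.

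Combining both inner-product bounds yields
\[
\mathrm{Re}\langle\nabla\tilde F,(\Delta\vh,\Delta\vx)\rangle \;\geq\; \tfrac{3}{4}\delta^2 d_0^2 + \delta_0\,G(\vh,\vx) - O\!\left(\|\setA^*(\ve)\|_{2\to 2}\,\delta d_0 + \delta^3 d_0^2\right),
\]
where the $\delta^3 d_0^2$ term is absorbed once $\delta\leq\varepsilon\leq 1/15$. Squaring, dividing by $\|(\Delta\vh,\Delta\vx)\|_2^2 \leq 12.2\,\delta^2 d_0$, and applying the AM-GM-type split $(a-b)^2\geq \tfrac12 a^2 - b^2$ to isolate the noise contribution produces
\[
\|\nabla\tilde F\|_2^2 \;\gtrsim\; d_0\!\left(\delta^2 d_0^2 + G(\vh,\vx)\right) - O\!\left(d_0\|\setA^*(\ve)\|_{2\to 2}^2\right).
\]
Finally I combine this with the upper bound $\tilde F - \|\ve\|_2^2 \leq \tfrac{5}{4}\delta^2d_0^2 + \tfrac{\varepsilon\delta d_0}{5} + G$ from \eqref{eq:F(h,x)-lower-upper-bound} to express $\tilde F - c$ as at most a constant multiple of $\delta^2d_0^2 + G - 1700\|\setA^*(\ve)\|_{2\to 2}^2$. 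Tracking the absolute constants carefully (choosing $\delta_0=\delta/10$ and $\varepsilon\leq 1/15$) makes the constant hiding in $\gtrsim$ larger than $1/5000$, proving $\|\nabla\tilde F\|_2^2 \geq (d_0/5000)[\tilde F - c]_+$ as claimed.
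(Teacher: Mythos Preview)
Your overall architecture---Cauchy--Schwarz along $(\Delta\vh,\Delta\vx)$, then splitting $\nabla\tilde F=\nabla F+\nabla G$---is exactly what the paper does, and your handling of the $F$ part is essentially equivalent to Lemma~\ref{lem:local-regularity-F} (your identity $\Delta\vh\,\vx^*+\vh\,\Delta\vx^*=(\vh\vx^*-\vh_0\vx_0^*)+\Delta\vh\,\Delta\vx^*$ is just a rearrangement of the paper's expansion).

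The genuine gap is in the $G$ part. You claim $\Re\langle\nabla G,(\Delta\vh,\Delta\vx)\rangle\ge\delta_0\,G$ (linear in $G$), but this is too weak to close the argument. After dividing by $\|(\Delta\vh,\Delta\vx)\|_2\le\sqrt{12.2}\,\delta\sqrt{d_0}$ and squaring, your bound contributes only a term of order $G^2/d_0$, which fails to dominate $\tfrac{d_0}{5000}G$ whenever $G\lesssim 0.25\,d_0^2$. That regime is reachable: take $\vh=(1+s)\vh_0$, $\vx=(1+s)^{-1}\vx_0$ with $s$ just above $\sqrt{2}-1$, so $\delta=0$ yet the norm penalty on $\vh$ is barely active and $G$ is an arbitrarily small positive multiple of $\rho$. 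Your inequality $\|\nabla\tilde F\|_2^2\gtrsim d_0(\delta^2 d_0^2+G)$ therefore does not follow from your stated $G$ bound.

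The paper's Lemma~\ref{lem:local-regularity-G} instead proves $\Re\langle\nabla G,(\Delta\vh,\Delta\vx)\rangle\ge\tfrac{\delta}{5}\sqrt{\rho\,G}$, i.e.\ square-root in $G$. The key identity you are missing is $G_0'(z)=2\sqrt{G_0(z)}$: the entry-wise bounds you sketch (e.g.\ $\Re\langle\vh,\Delta\vh\rangle\ge\delta_0\|\vh\|_2^2$) yield $G_0'(\cdot)\cdot\langle\vh,\Delta\vh\rangle\ge\tfrac{\delta d}{5}G_0'(\cdot)=\tfrac{2\delta d}{5}\sqrt{G_0(\cdot)}$, and summing gives $\tfrac{\delta}{5}\sqrt{\rho G}$ rather than $\delta_0 G$. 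After Cauchy--Schwarz this produces a $\delta$-free term $\tfrac{1}{5}\sqrt{\rho G/(12.2\,d_0)}$, whose square $\rho G/(305\,d_0)\gtrsim d_0 G/377$ (using $\rho\ge d^2\ge 0.81\,d_0^2$) comfortably exceeds $\tfrac{d_0}{5000}G$ for \emph{all} $G\ge 0$. Replacing your linear-in-$G$ bound with this square-root form is what makes the constants work uniformly.
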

\begin{proof}
	The proof hinges on following two conclusions 
	\begin{align*}
	&\Re{\<\gfh,\Delta\vh\>+\<\gfx,\Delta \vx\>} \geq \frac{\delta^2d_0^2}{8}-2\delta d_0 \|\setA^*(\ve)\|_{2\rightarrow 2},\notag\\
    &\Re{\<\ggh,\Delta\vh\>+\<\ggx,\Delta\vx\>} \geq \frac{\delta}{5}\sqrt{\rho G_0(\vh,\vx)};
	\end{align*}
	established in Lemma \ref{lem:local-regularity-F}, and \ref{lem:local-regularity-G} below.  Given the above two conditions the proof of this lemma reduces exactly to the proof of Lemma 5.18 in \cite{li2016rapid}. 
\end{proof}
\begin{lem}\label{lem:local-regularity-F}
	For any $(\vh,\vx) \in \setN_{d_0} \cap \setN_{\mu} \cap \setN_{\nu}\cap\setN_{\varepsilon}$ with $\varepsilon \leq \frac{1}{15}$, uniformly:
	\[
	\Re{\<\gfh,\Delta\vh\>+\<\gfx,\Delta\vx\>} \geq \frac{\delta^2d_0^2}{8}-2\delta d_0 \|\setA^*(\ve)\|_{2 \rightarrow 2},
	\]
	with probability at least
	\begin{align}\label{eq:probability-regularity}
	1-2\exp\left(-c\delta^2\frac{QN}{\mu^2\nu^2}\right)
	\end{align}
	provided 
	\begin{align}\label{eq:sample-complexity-regularity}
	QN \geq \frac{c}{\delta^2}\max(\mu^2\nu_{\max}^2 KN^2+\nu^2 M)\log^4(LN).
	\end{align}
\end{lem}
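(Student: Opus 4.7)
The plan is to reduce the lower bound to one that can be delivered by the local RIP for $\setA$ on low-rank matrices. Write $M := \vh\vx^* - \vh_0\vx_0^*$. Plugging the gradient formulas from \eqref{eq:gF-def} into the left-hand side and using the cyclic trace identity, one obtains
\begin{equation*}
\Re{\<\gfh,\Delta\vh\> + \<\gfx,\Delta\vx\>} \;=\; \Re{\<\setA(M)-\ve,\;\setA\bigl(\Delta\vh\,\vx^* + \vh\,(\Delta\vx)^*\bigr)\>}.
\end{equation*}
The decompositions $\vh = \alpha\vh_0 + \Delta\vh$ and $\vx = \bar\alpha^{-1}\vx_0 + \Delta\vx$ from \eqref{eq:Deltah-Deltax} give the algebraic identity $\Delta\vh\,\vx^* + \vh\,(\Delta\vx)^* = M + \Delta\vh(\Delta\vx)^*$, after which the right-hand side splits cleanly into
\begin{equation*}
\|\setA(M)\|_2^2 \;+\; \Re{\<\setA(M),\,\setA(\Delta\vh(\Delta\vx)^*)\>} \;-\; \Re{\<\ve,\,\setA(M + \Delta\vh(\Delta\vx)^*)\>}.
\end{equation*}

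With this decomposition in hand, I would bound the three pieces using Lemma \ref{lem:local-RIP} (applied with $\xi = 1/4$, as already used in \eqref{eq:F(h,x)-lower-upper-bound}) and the size bounds of Lemma \ref{lem:local-regulaity-Del-norm-bounds}. The leading term is controlled directly from below: $\|\setA(M)\|_2^2 \geq (1-\xi)\|M\|_F^2 = \tfrac{3}{4}\delta^2 d_0^2$. The cross term is handled by Cauchy--Schwarz and the RIP upper estimate $\|\setA(\Delta\vh(\Delta\vx)^*)\|_2 \leq \sqrt{1+\xi}\,\|\Delta\vh\|_2\|\Delta\vx\|_2$ together with the product bound $\|\Delta\vh\|_2\|\Delta\vx\|_2 \lesssim \delta^2 d_0$ from Lemma \ref{lem:local-regulaity-Del-norm-bounds}, producing a contribution of order $\delta^3 d_0^2$ that is absorbed into the leading term because $\delta\leq\varepsilon\leq 1/15$. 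The noise term is treated by duality: writing $Z := M + \Delta\vh(\Delta\vx)^* = \Delta\vh\,\vx^* + \vh\,(\Delta\vx)^*$, which has rank at most two, I use $|\<\ve,\setA(Z)\>| \leq \|\setA^*(\ve)\|_{2\to 2}\|Z\|_* \leq \sqrt{2}\,\|\setA^*(\ve)\|_{2\to 2}\|Z\|_F$ and the triangle inequality together with Lemma \ref{lem:local-regulaity-Del-norm-bounds} to get $\|Z\|_F \leq \delta d_0(1 + O(\delta)) \leq \sqrt{2}\delta d_0$, giving the required $2\delta d_0\,\|\setA^*(\ve)\|_{2\to 2}$ penalty. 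Tracking the constants for $\delta\leq 1/15$ leaves the desired lower bound $\tfrac{1}{8}\delta^2 d_0^2 - 2\delta d_0\,\|\setA^*(\ve)\|_{2\to 2}$.

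The probability and sample-complexity estimates \eqref{eq:probability-regularity}--\eqref{eq:sample-complexity-regularity} are inherited directly from Lemma \ref{lem:local-RIP}; the chain of bounds above is deterministic once the RIP event holds. Consequently, the serious work is not in the current lemma but in the local RIP itself, which is what I expect to be the hard step. Unlike the Gaussian subspace model of \cite{li2016rapid}, the map $\setA$ here is built from deterministic Fourier restrictions and the fixed dictionary $\mC$, with randomness only in the Rademacher diagonals $\mR_n$. A simple net argument cannot give uniform concentration of $\|\setA(Z)\|_2^2$ around $\|Z\|_F^2$ over the relevant class of rank-two matrices in $\setN_{d_0}\cap\setN_{\mu}\cap\setN_{\nu}\cap\setN_{\varepsilon}$, so the generic chaining strategy alluded to in Section \ref{sec:discussion} is needed to control the second-chaos process arising from the bilinear dependence of $\setA(Z)$ on the signs.
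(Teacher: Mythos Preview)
Your overall route matches the paper's: both expand the left-hand side as $\Re{\langle \setA(M),\setA(Z)\rangle}-\Re{\langle\setA^*(\ve),Z\rangle}$ with $M=\vh\vx^*-\vh_0\vx_0^*$ and $Z=\Delta\vh\,\vx^*+\vh\,(\Delta\vx)^*=M+\Delta\vh(\Delta\vx)^*$, then control the signal part via local RIP and the noise part via nuclear--operator norm duality. Your write-up is in fact more explicit than the paper's, which simply records the two lower bounds $\|\setA(M)\|_2^2\ge\tfrac34\delta^2d_0^2$ and $\|\setA(Z)\|_2^2\ge\tfrac14\delta^2d_0^2$ and asserts the conclusion.

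The gap is in your cross-term step. You invoke an ``RIP upper estimate $\|\setA(\Delta\vh(\Delta\vx)^*)\|_2\le\sqrt{1+\xi}\,\|\Delta\vh\|_2\|\Delta\vx\|_2$'', but neither Lemma~\ref{lem:local-RIP} (which treats $\vh\vx^*-\vh_0\vx_0^*$) nor Lemma~\ref{lem:local-Delh-Delx-RIP} (which treats $\Delta\vh\,\vx^*+\vh\,\Delta\vx^*$) covers the rank-one matrix $\Delta\vh(\Delta\vx)^*$. If you try to recover it from those two via $\setA(\Delta\vh(\Delta\vx)^*)=\setA(Z)-\setA(M)$ and the triangle inequality, the bound is only $O(\delta d_0)$, not $O(\delta^2 d_0)$, and the cross term then swamps the leading $\tfrac34\delta^2d_0^2$; using the crude $\|\setA\|_{2\to2}$ bound from Lemma~\ref{lem:noise-stability} introduces a factor $K\log(LN)$ that is likewise not absorbable. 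What is actually needed is a third local-RIP statement for $\Delta\vh(\Delta\vx)^*$ (equivalently, a polarization-type RIP on the pair $(M,Z)$). This is obtainable by the same chaining machinery used for Lemmas~\ref{lem:local-RIP}--\ref{lem:local-Delh-Delx-RIP}, since Lemma~\ref{lem:local-regulaity-Del-norm-bounds} supplies the required coherence controls $\sqrt{L}\|\mF_M\Delta\vh\|_\infty\le 6\mu\sqrt{d_0}$ and $\sqrt{QN}\|\mC^{\otimes N}\Delta\vx\|_\infty\le 6\nu\sqrt{d_0}$, but it is additional work that must be stated, not a free consequence of the two lemmas already on the page. (The paper's own proof is terse at exactly this juncture: lower bounds on $\|\setA(M)\|_2$ and $\|\setA(Z)\|_2$ alone cannot lower-bound $\Re{\langle\setA(M),\setA(Z)\rangle}$, so some additional control on $Z-M=\Delta\vh(\Delta\vx)^*$ is implicitly being used there too.)
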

\begin{proof}
	Observe that $\Re{\<\nabla F_{\vh}, \Delta \vh\>+\<\nabla F_{\vx}, \Delta \vx\>} = \Re{\<\nabla F_{\vh}, \Delta \vh\>+\overline{\<\nabla F_{\vx}, \Delta \vx\>}}$. Using the gradients derived earlier in \eqref{eq:gF-def}, we have 
	\begin{align*} 
	&\<\nabla F_{\vh}, \Delta \vh\> = \< \setA^*(\setA(\vh\vx^*-\vh_0\vx_0^*)-\ve), \Delta\vh\vx^*\>, \ \text{and} \  \\
	& \overline{\<\nabla F_{\vx}, \Delta \vx\>} = \< \setA^*(\setA(\vh\vx^*-\vh_0\vx_0^*)-\ve), \vh\Delta \vx^*\>,
	\end{align*}
	and hence
	\begin{align}\label{eq:local-reg-F-expansion}
	&\<\nabla F_{\vh}, \Delta \vh\> + \overline{\< \nabla F_{\vx}, \Delta \vx\>} = - \<\setA^*(\ve), \Delta\vh\vx^*+\vh\Delta\vx^*\> \notag\\
	&\qquad + \<\setA(\vh\vx^*-\vh_0\vx_0^*),\setA(\Delta\vh\vx^*+\vh\Delta\vx^*)\>.
	\end{align}
	Using triangle inequality, $\|\Delta \vh\vx^*+ \vh\Delta\vx^*\|_F \geq \|\vh\vx^*-\vh_0\vx_0^*\|_F - \|\Delta\vh\Delta\vx^*\|_F$. Lemma \ref{lem:local-regulaity-Del-norm-bounds} shows $\|\Delta \vh\Delta \vx^* \|_F \leq 2.9\delta^2 d_0$ when $\delta \leq \varepsilon \leq  1/15$. This implies that $\|\Delta \vh\vx^*+ \vh\Delta\vx^*\|_F  \geq \delta d_0 - 2.9\delta^2 d_0 \geq 0.8 \delta d_0.$
	In a similar manner the upper bound can be established leading us to
	\begin{align}\label{eq:Dhx+hDx-fronorm}
	0.8 \delta d_0\leq \|\Delta \vh\vx^*+ \vh\Delta\vx^*\|_F \leq 1.2 \delta d_0
	\end{align}
	that holds when $\delta \leq \varepsilon \leq \frac{1}{15}$. 
	Using $\xi = \frac{1}{4}$, and $\delta \leq \varepsilon$ in Lemma \ref{lem:local-RIP}, and \ref{lem:local-Delh-Delx-RIP} below give the following conclusions
	\begin{align*}
	&\|\setA(\vh\vx^*-\vh_0\vx_0^*)\|^2_2 \geq \|\vh\vx^*-\vh_0\vx_0^*\|^2_F - \tfrac{1}{4}\delta^2d_0^2 = \tfrac{3}{4} \delta^2 d^2_0~ \text{and} ~ \\
	&\|\setA(\Delta \vh\vx^*+ \vh\Delta\vx^*)\|^2_2 \geq  \|\Delta \vh\vx^*+\vh\Delta\vx^*\|^2_{F} - \tfrac{1}{4}\delta^2d_0^2 \geq \tfrac{1}{4}\delta^2d_0^2,
	\end{align*}
	each holding with probability at least \eqref{eq:probability-regularity} under the sample complexity bound \eqref{eq:sample-complexity-regularity}. 	In addition, we also have 
	\begin{align*}
	\<\setA^*(\ve), \Delta\vh\vx^*+\vh\Delta\vx^*\> &\leq \sqrt{2} \|\setA^*(\ve)\|_{2 \rightarrow 2}\|\Delta\vh\vx^*+\vh\Delta\vx^*\|_F\\
	& \leq  2\delta d_0 \|\setA^*(\ve)\|_{2\rightarrow 2}. 
	\end{align*}
	Employing these bounds in \eqref{eq:local-reg-F-expansion}, we obtain the desired bound.
\end{proof}
\begin{lem}\label{lem:local-regularity-G}
	For any $(\vh,\vx) \in \setN_{\mu} \cap \setN_{\nu} \cap \setN_{d_0} \cap \setN_{\varepsilon}$ with $\varepsilon \leq 1/15$, and $0.9d_0 \leq d \leq 1.1d_0$, the following inequality holds uniformly $
	\Re{\<\nabla G_{\vh}, \Delta \vh \>+\<\nabla G_{\vx}, \Delta \vx \>}  \geq \frac{\delta}{5} \sqrt{\rho G_0(\vh,\vx)},$ where $\rho \geq d^2+ 2\|\ve\|_2^2$. 
\end{lem}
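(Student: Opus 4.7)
The plan is to decompose the regularizer $G = \rho \sum_{i=1}^{4} G^{(i)}$, where $G^{(1)}$, $G^{(2)}$ are the norm penalties on $\vh$, $\vx$ and $G^{(3)}$, $G^{(4)}$ are the coherence penalties, and to prove per-term lower bounds of the form $\Re{\<\nabla G^{(i)}, \Delta\>} \geq (\rho\delta/5)\sqrt{G^{(i)}}$. Since $\sum_i \sqrt{a_i} \geq \sqrt{\sum_i a_i}$, summing the four bounds yields $\Re{\<\nabla G_{\vh}, \Delta\vh\> + \<\nabla G_{\vx}, \Delta\vx\>} \geq (\rho\delta/5)\sqrt{G_0(\vh,\vx)}$, which implies the claimed $(\delta/5)\sqrt{\rho G_0(\vh,\vx)}$ in the operating regime $\rho \geq 1$.

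For the norm penalty $G^{(1)}$, I would start from $\Re{\<\nabla G^{(1)}_{\vh}, \Delta\vh\>} = (\rho/2d)\, G_0'(\|\vh\|_2^2/2d)\, \Re{\<\vh,\Delta\vh\>}$. In the case $\|\vh\|_2 \geq \|\vx\|_2$ we have $\alpha = (1-\delta_0)\alpha_1$ with $\delta_0 = \delta/10$, and hence
\begin{align*}
\Re{\<\vh,\Delta\vh\>} = \|\vh\|_2^2 - (1-\delta_0)|\alpha_1|^2 d_0 \geq \delta_0 \|\vh\|_2^2
\end{align*}
by Cauchy-Schwarz $|\alpha_1|^2 d_0 \leq \|\vh\|_2^2$. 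When $G^{(1)}$ is active, $\|\vh\|_2^2 > 2d$ and thus $\Re{\<\vh,\Delta\vh\>} > 2d\delta_0 = d\delta/5$; combined with $G_0'(z) = 2\sqrt{G_0(z)}$ this gives $\Re{\<\nabla G^{(1)}_{\vh}, \Delta\vh\>} \geq (\rho\delta/5)\sqrt{G^{(1)}}$. The corresponding bound for $G^{(2)}$ is obtained in the complementary case $\|\vx\|_2 > \|\vh\|_2$, where $\bar{\alpha}^{-1} = (1-\delta_0)\alpha_2$.

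For the coherence penalty $G^{(3)}$, a row-by-row argument works: for each active index $\ell$ (so $z_\ell := L|\vf_\ell^*\vh|^2/(8d\mu^2) > 1$), the ground-truth coherence $\sqrt{L}|\vf_\ell^*\vh_0| \leq \mu\sqrt{d_0}$ together with $|\alpha|\leq 2$ from Lemma \ref{lem:local-regulaity-Del-norm-bounds} gives
\begin{align*}
(1-\delta_0)|\alpha||\vf_\ell^*\vh_0| \leq 2\mu\sqrt{d_0/L} \leq \sqrt{d_0/(2d)}\,|\vf_\ell^*\vh|,
\end{align*}
so $\Re{\overline{\vf_\ell^*\vh}\,\vf_\ell^*\Delta\vh} \geq c_0 |\vf_\ell^*\vh|^2$ for an explicit $c_0 > 0$ (using $d_0/(2d)\leq 1/1.8$). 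Substituting $|\vf_\ell^*\vh|^2 = 8d\mu^2 z_\ell/L$, applying $G_0'(z_\ell) = 2\sqrt{G_0(z_\ell)}$ together with $z_\ell \geq 1$, and summing via $\sum_\ell \sqrt{G_0(z_\ell)} \geq \sqrt{G^{(3)}}$ delivers $\Re{\<\nabla G^{(3)}_{\vh}, \Delta\vh\>} \geq 2c_0\rho\sqrt{G^{(3)}}$, which dominates $(\rho\delta/5)\sqrt{G^{(3)}}$ since $\delta \leq 1/15 \ll 2c_0$. Term $G^{(4)}$ is treated identically with $(\vh,\mu,\mF_M)$ replaced by $(\vx,\nu,\mC^{\otimes N})$ and the two cases interchanged.

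The principal obstacle lies in the cross-case analysis: in the case $\|\vh\|_2 \geq \|\vx\|_2$, the definition $\bar{\alpha}^{-1} = 1/((1-\delta_0)\bar{\alpha}_1)$ does not align cleanly with $\vx_0$, so the clean bounds for $G^{(2)}$ and $G^{(4)}$ need separate justification. The remedy leverages the estimate $|\alpha_1\bar{\alpha}_2 - 1| \leq \delta$ from Lemma \ref{lem:local-regulaity-Del-norm-bounds}, which forces $\Re{\alpha_1\bar{\alpha}_2} \in [1-\delta, 1+\delta]$; writing $\Re{\<\vx,\Delta\vx\>} = \|\vx\|_2^2 - \Re{\alpha_1\bar{\alpha}_2}\,d_0/((1-\delta_0)|\alpha_1|^2)$ and a parallel expression for the coherence contribution, any residual negative contribution is of order $\delta$ and is absorbed into the surplus produced by the tight bounds on $G^{(1)}$ and $G^{(3)}$. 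Keeping constants tight---in particular $\delta_0 = \delta/10$ together with $\varepsilon \leq 1/15$---is what allows the final factor $\delta/5$ to be achieved uniformly across all four components.
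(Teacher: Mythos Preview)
Your overall plan---decompose $G$ into its four summands, split into the two cases $\|\vh\|_2\gtrless\|\vx\|_2$ determined by the definition of $\alpha$ in \eqref{eq:Deltah-Deltax}, and prove per-term lower bounds of the form $\Re{\<\nabla G^{(i)},\Delta\>}\geq c\,G_0'(\cdot)$---is exactly the paper's approach (which in turn follows Lemma~5.17 of \cite{li2016rapid}). Your treatment of the ``aligned'' terms $G^{(1)},G^{(3)}$ in Case~1 matches the paper's.

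Where you diverge is in the cross-case handling, and there the paper's resolution is simpler than the absorption argument you sketch. For the cross \emph{norm} penalty there is nothing to absorb: in Case~1 one has $\|\vx\|_2^2\leq\|\vh\|_2\|\vx\|_2\leq\|\vh_0\vx_0^*\|_F+\|\vh\vx^*-\vh_0\vx_0^*\|_F=(1+\delta)d_0<2d$, so $G^{(2)}$ is identically zero; symmetrically $G^{(1)}\equiv0$ in Case~2. For the cross \emph{coherence} penalty $G^{(4)}$ in Case~1, the paper does not absorb a negative residual either: it uses exactly the estimate $|\alpha_1\bar\alpha_2-1|\leq\delta$ you cite, but only to bound $|\bar\alpha^{-1}|=|\alpha_2|/((1-\delta_0)|\alpha_1\bar\alpha_2|)\leq 2/((1-\delta_0)(1-\delta))$, and then proceeds as in your $G^{(3)}$ argument to obtain the clean pointwise bound $\Re{\<\vc_{q,n}\vc_{q,n}^*\vx,\Delta\vx\>}\geq d\nu^2/(QN)$ whenever the term is active. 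So every one of the four summands is individually nonnegative and bounded below by $(\delta d/5)G_0'(\cdot)$ or $(d/4)G_0'(\cdot)$; no surplus-borrowing is needed.
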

\noindent For the proof, see  Appendix \ref{sec:local-regularity-G} below. 
\subsubsection{Local RIP}	

	The following two lemmas state the  local restricted isometry property used above in the proof of Lemma \ref{lem:local-regularity-F}. 
\begin{lem}\label{lem:local-RIP}
	For all $(\vh,\vx) \in \setN_{d_0} \cap \setN_{\mu}\cap \setN_{\nu}$ such that $\|\vh\vx^*-\vh_0\vx_0^*\|_F = \delta d_0$, where $ \delta < 1$, the following local restricted isometry property:
	\begin{align}\label{eq:Local-RIP}
	\left| \|\setA(\vh\vx^*-\vh_0\vx_0^*)\|_2^2 - \|\vh\vx^*-\vh_0\vx_0^*\|_{F}^2 \right| \leq  \xi \delta^2 d_0^2
	\end{align}
	 holds for a $\xi \in (0,1)$ with probability at least $1-2\exp(-c\xi^2\delta^2 QN / \mu^2\nu^2)$
	  whenever 
	\begin{align}\label{eq:sample-complexity}
	QN \geq \frac{c}{\xi^2\delta^2}\big( \mu^2\nu_{\max}^2 KN^2 + \nu^2 M\big) \log^4(LN).
	\end{align}
\end{lem}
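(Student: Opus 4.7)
My plan is to prove the local RIP by uniformly controlling the random process
\[
Z(\vh,\vx) := \bigl\|\setA(\vh\vx^*-\vh_0\vx_0^*)\bigr\|_2^2 - \bigl\|\vh\vx^*-\vh_0\vx_0^*\bigr\|_F^2
\]
over the non-linear constraint set $\setN_{d_0}\cap\setN_{\mu}\cap\setN_{\nu}\cap\{\|\vh\vx^*-\vh_0\vx_0^*\|_F=\delta d_0\}$. The first step is to exploit the decomposition \eqref{eq:difference-expansion} to split the target matrix into four rank-one pieces: a multiple of $\vh_0\vx_0^*$, the cross pieces $\tilde{\vh}\vx_0^*$ and $\vh_0\tilde{\vx}^*$, and the doubly perturbative piece $\tilde{\vh}\tilde{\vx}^*$. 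Expanding $Z(\vh,\vx)$ according to this splitting reduces the bound to at most ten uniform estimates of the form $\bigl|\langle \setA(A),\setA(B)\rangle-\langle A,B\rangle\bigr|\leq \tfrac{\xi}{10}\delta^2d_0^2$ for the various rank-one pairs $(A,B)$ appearing in the expansion.

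Second, I would compute the expectation. Since $\mR_n$ has i.i.d.\ Rademacher entries, a direct calculation using the orthonormality of $\mF_Q$ and the definition $\hat{\vc}_{\ell,n}=\sqrt{L}(\mF_Q\mR_n\mC)^*\ve_\ell$ gives $\E\,\|\setA(M)\|_2^2 = \|M\|_F^2$ for any deterministic $M\in\C^{M\times KN}$; hence $\E Z(\vh,\vx)=0$ for each fixed $(\vh,\vx)$. The remaining problem is therefore one of concentration of a centred random quadratic form in the Rademacher variables $\vr_1,\ldots,\vr_N$ around zero, uniformly over the parameter set. Pointwise, each deviation can be handled by a decoupling/Hanson--Wright style inequality, which yields a sub-exponential tail controlled by an operator norm and a Frobenius norm of the underlying kernel matrix; these in turn are controlled by $\mu^2$, $\nu^2$, and $\nu_{\max}^2$ through the coherence constraints defining $\setN_\mu$ and $\setN_\nu$.

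Third, the uniform bound requires a chaining argument because the set indexed by $(\vh,\vx)$ is not finite and an $\varepsilon$-net would be wasteful. The plan is to apply the generic chaining framework (Talagrand majorising measures) to the process $Z(\vh,\vx)$ restricted to the coherence-bounded set, after first reducing the four-term expansion to two canonical sub-processes: the ``linearized'' one parametrized by $\tilde{\vh}$ or $\tilde{\vx}$ alone (only one factor is perturbed), and the ``bilinear'' one where both factors vary. For the linearized piece, the process is sub-Gaussian in the Rademacher vectors with increments that admit a mixed $\ell_2/\ell_\infty$ metric; the $\gamma_2$ functional of the constrained sphere under this metric is bounded using the coherence parameters, yielding the $(\mu^2\nu_{\max}^2 KN^2+\nu^2 M)\log^4(LN)$ scaling in \eqref{eq:sample-complexity}. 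The bilinear piece $\tilde{\vh}\tilde{\vx}^*$ is small in Frobenius norm (of order $\delta^2 d_0$ by Lemma \ref{lem:local-regulaity-Del-norm-bounds}) and is absorbed by the same chaining estimate after one additional decoupling step.

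The main obstacle is the last step: unlike the Gaussian-subspace setting of earlier works, the randomness here is confined to the diagonals $\mR_n$, so the sub-Gaussian increments of the relevant process are not isotropic and standard dual Sudakov / entropy bounds do not apply cleanly. The $\log^4(LN)$ factor in \eqref{eq:sample-complexity} is the price paid for controlling the $\gamma_2$ functional by admissible sequences built from the Fourier and coherence structure, and most of the technical effort will go into verifying that the admissible sequences can be taken with diameters scaling with $\mu$, $\nu$, and $\nu_{\max}$ in the claimed way, so that the final probability bound $1-2\exp(-c\xi^2\delta^2 QN/\mu^2\nu^2)$ follows from Talagrand's deviation inequality.
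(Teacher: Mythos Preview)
Your overall instinct—reduce to a centred chaos in the Rademacher vector $\vr=\vctr([\vr_n])$ and control the supremum by chaining—is right, but the route you propose is both more laborious and slightly off in one place compared with the paper.

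The paper does \emph{not} split $\vh\vx^*-\vh_0\vx_0^*$ via \eqref{eq:difference-expansion}. Instead it observes, using the block-diagonal matrices $\mH_{\vh}=[\text{circ}(\vh)]^{\otimes N}$ and $\mX_{\vx}=[\text{diag}(\mC\vx_n)]^{\otimes N}$, that
\[
\|\setA(\vh\vx^*-\vh_0\vx_0^*)\|_2^2=\|(\mH_{\vh}\mX_{\vx}-\mH_{\vh_0}\mX_{\vx_0})\vr\|_2^2,
\]
and checks the identity $\|\mH_{\vh}\mX_{\vx}-\mH_{\vh_0}\mX_{\vx_0}\|_F=\|\vh\vx^*-\vh_0\vx_0^*\|_F$. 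This recasts the problem as a single supremum $\sup_{\mS\in\setS}\big|\|\mS\vr\|_2^2-\E\|\mS\vr\|_2^2\big|$ over $\setS=\{\mH_{\vh}\mX_{\vx}-\mH_{\vh_0}\mX_{\vx_0}\}$, to which the Krahmer--Mendelson--Rauhut theorem for suprema of order-2 chaos processes applies as a black box. All that remains is to compute $d_F(\setS)$, $d_{2\to 2}(\setS)$, and to bound $\gamma_2(\setS,\|\cdot\|_{2\to 2})$ by Dudley's integral; the covering numbers factor through separate covers of $\setH$ and $\setX$ in the norms $\|\cdot\|_f=\sqrt{L}\|\mF_M(\cdot)\|_\infty$ and $\|\cdot\|_c=\sqrt{Q}\|\mC^{\otimes N}(\cdot)\|_\infty$, and standard $\ell_1$-ball entropy estimates then deliver the $\log^4(LN)$ and the $\mu^2\nu_{\max}^2KN^2+\nu^2M$ scaling.

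Compared with this, your four-piece / ten-cross-term expansion is unnecessary bookkeeping: every one of those terms is itself a chaos of order two in $\vr$, so you would end up invoking essentially the same chaining machinery ten times, with additional care needed for cross terms that couple different channels. One concrete inaccuracy: the ``linearized'' pieces are \emph{not} sub-Gaussian in $\vr$; even with one factor fixed, $\|\setA(\tilde\vh\vx_0^*)\|_2^2-\|\tilde\vh\vx_0^*\|_F^2$ is a homogeneous second-order Rademacher chaos with a mixed (Bernstein-type) tail, which is precisely why one needs the chaos-supremum result rather than plain sub-Gaussian chaining. Your plan would still go through once you correct this and quote the appropriate chaos tail, but at that point you are re-deriving the KMR theorem in a special case; it is cleaner to keep $\vh\vx^*-\vh_0\vx_0^*$ intact, recognise the single matrix $\mS(\vh,\vx)$, and apply the theorem once.
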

\begin{lem}\label{lem:local-Delh-Delx-RIP}
	For all $(\vh,\vx) \in \setN_{d_0} \cap \setN_{\mu}\cap \setN_{\nu}\cap \setN_{\varepsilon}$ such that  $0.8\delta d_0 \leq \|\Delta\vh\vx^*+\vh\Delta\vx^*\|_F \leq 1.2\delta d_0$, where $\delta \leq \varepsilon \leq 1/15$,  the following local restricted isometry property:
	\begin{align*}
&  \left|\|\setA(\Delta \vh\vx^*+ \vh \Delta \vx^*)\|_2^2 - \|\Delta \vh\vx^*+ \vh \Delta \vx^*\|_{F}^2 \right| \leq \xi\delta^2d_0^2 
	\end{align*}
	holds for a $\xi \in (0,1)$ with probability at least $1-2\exp(-c\xi^2\delta^2 QN / \mu^2\nu^2)$ whenever \eqref{eq:sample-complexity} holds.
\end{lem}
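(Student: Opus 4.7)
The plan is to prove Lemma \ref{lem:local-Delh-Delx-RIP} as a direct parallel of Lemma \ref{lem:local-RIP}. The central observation is that the matrix $\Delta\vh\vx^* + \vh\Delta\vx^*$ is rank-at-most-two with the same structural features (coherence-controlled factors and Frobenius norm of order $\delta d_0$) as the matrix $\vh\vx^* - \vh_0\vx_0^*$ to which Lemma \ref{lem:local-RIP} applies, so the same concentration machinery yields the required deviation bound.

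First I would substitute $\vh = \alpha\vh_0 + \Delta\vh$ and $\vx = \bar\alpha^{-1}\vx_0 + \Delta\vx$ into the target to obtain
\[
\Delta\vh\vx^* + \vh\Delta\vx^* \;=\; \alpha^{-1}\Delta\vh\vx_0^* + \alpha\vh_0\Delta\vx^* + 2\Delta\vh\Delta\vx^*,
\]
which equals $(\vh\vx^* - \vh_0\vx_0^*) + \Delta\vh\Delta\vx^*$ after comparison with \eqref{eq:difference-expansion}. Thus the target lives in the same rank-two tangent subspace $\mathrm{span}\{\vh_0,\Delta\vh\}\otimes\mathrm{span}\{\vx_0,\Delta\vx\}^*$ that hosts $\vh\vx^* - \vh_0\vx_0^*$. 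Lemma \ref{lem:local-regulaity-Del-norm-bounds} controls every constituent vector on $\setN_{d_0}\cap\setN_\mu\cap\setN_\nu\cap\setN_\varepsilon$: the $\ell_2$ norms of $\Delta\vh, \Delta\vx$ are of order $\delta\sqrt{d_0}$, their modulated Fourier/subspace coefficients satisfy $\sqrt{L}\|\mF_M\Delta\vh\|_\infty \le 6\mu\sqrt{d_0}$ and $\sqrt{QN}\|\mC^{\otimes N}\Delta\vx\|_\infty \le 6\nu\sqrt{d_0}$, and the scalars $\alpha, \alpha^{-1}$ are uniformly bounded since $|\alpha_1|,|\alpha_2|<2$ and $|\alpha_1\bar\alpha_2-1|\le\delta<1$.

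With these bounds in place, the chaining argument behind Lemma \ref{lem:local-RIP} applies almost verbatim: the same $\varepsilon$-net on $\setN_{d_0}\cap\setN_\mu\cap\setN_\nu$, the same pointwise concentration for the random quadratic form $\|\setA(\cdot)\|_2^2 - \|\cdot\|_F^2$ on rank-two matrices of this form, and the same union bound together deliver the desired uniform estimate under the sample complexity \eqref{eq:sample-complexity} and with the stated failure probability. The quadratic cross term $2\Delta\vh\Delta\vx^*$ has Frobenius norm at most $O(\delta^2 d_0)$ by Lemma \ref{lem:local-regulaity-Del-norm-bounds}, which is absorbed into the $\xi\delta^2 d_0^2$ tolerance after an absolute-constant rescaling of $\xi$; the hypothesized bounds $0.8\delta d_0 \le \|\Delta\vh\vx^*+\vh\Delta\vx^*\|_F \le 1.2\delta d_0$ ensure the target has the correct Frobenius-norm scale for this rescaling. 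The principal technical obstacle is bookkeeping: one must check that the concentration and net-covering estimates in the proof of Lemma \ref{lem:local-RIP} depend only on the norm and coherence parameters of the constituent vectors and not on the specific algebraic form $\vh\vx^* - \vh_0\vx_0^*$, which is typical of generic chaining arguments of this kind; if not, the same chaining computation must be repeated term by term for the new decomposition, which is routine given the controls above.
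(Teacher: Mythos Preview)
Your proposal is correct and essentially matches the paper: both re-run the chaining/covering computation of Lemma \ref{lem:local-RIP} on the new matrix class $\setS=\{\Delta\mH_{\vh}\mX_{\vx}+\mH_{\vh}\Delta\mX_{\vx}\}$, using the coherence controls $\sqrt{L}\|\mF_M\Delta\vh\|_\infty\le 6\mu\sqrt{d_0}$ and $\sqrt{QN}\|\mC^{\otimes N}\Delta\vx\|_\infty\le 6\nu\sqrt{d_0}$ from Lemma \ref{lem:local-regulaity-Del-norm-bounds} to bound operator-norm distances and hence the $\gamma_2$-functional via Theorem \ref{thm:Mendelson}. Note that what you call the fallback---repeating the chaining term by term for the new set---is in fact the paper's entire proof; your decomposition $\Delta\vh\vx^*+\vh\Delta\vx^*=(\vh\vx^*-\vh_0\vx_0^*)+\Delta\vh\Delta\vx^*$ and the absorption idea are not used and do not give a shortcut, since controlling $\|\setA(\Delta\vh\Delta\vx^*)\|_2$ and the cross term $\langle\setA(\vh\vx^*-\vh_0\vx_0^*),\setA(\Delta\vh\Delta\vx^*)\rangle$ would itself require the same chaining machinery.
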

\noindent Proof of both these lemmas is the main technical contribution above \cite{li2016rapid}, and is provided in Section \ref{sec:local-RIP}, and Appendix \ref{sec:local-Delh-Delx-RIP}.  The usual probability concentration, and union bound argument \cite{baraniuk2008simple} to prove RIP is not sufficient due to the limited/structured randomness in $\setA$.   We therefore use the result in \cite{krahmer2014suprema} based on generic chaining, which abstracts out the entire signal space from a coarse to a fine scale and employs a more efficient use of union bound at each scale after probability concentration.

\subsubsection{Noise robustness}

Finally, we give the noise robustness result below that gives a bound on the noise term $\|\setA^*(\ve)\|_{2 \rightarrow 2}$ appearing in Lemma \ref{lem:local-regularity}.
\begin{lem}\label{lem:noise-stability}
	Fix $\alpha \geq 1$. For the linear map $\setA$ defined in \eqref{eq:linear-map}, it holds $\|\setA\|_{2 \rightarrow 2} \leq c_\alpha\sqrt{ K\log (LN)}$ with probability at least $1-\setO((LN)^{-\alpha})$. Moreover, let $\ve \in \C^{LN}$ be additive noise introduced in \eqref{eq:measurements}, distributed as $\ve \sim \text{Normal}(\boldsymbol{0},\frac{\sigma^2d_0^2}{2LN} \mI_{LN}) + \iota \text{Normal}(\boldsymbol{0},\frac{\sigma^2d_0^2}{2LN} \mI_{LN})$, there holds 
	\begin{align}\label{eq:A(e)-bound}
	\|\setA^*(\ve) \|_{2 \rightarrow 2} \leq \tfrac{2\varepsilon}{50} d_0
	\end{align}
	with probability at least $1-\setO((LN)^{-\alpha})$ whenever $LN \geq  \frac{\sigma^2}{\varepsilon^2} c_\alpha^\prime (M,KN\log(LN))\log(LN)$, where $c_\alpha$, and $c^\prime_\alpha$ are absolute constants depending on the free parameter $\alpha \geq 1$. 
\end{lem}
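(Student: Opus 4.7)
The argument has two parts: the operator-norm bound on $\setA$ and the noise bound on $\setA^*(\ve)$. For the first, $\setA$ decouples across channels: writing $\mW \in \C^{M \times KN}$ in its $N$ horizontal blocks as $\mW = [\mW_1, \ldots, \mW_N]$, and using that $\hat{\vc}_{\ell,n}$ has support only in the $n$-th block of coordinates, yields $\|\setA(\mW)\|_2^2 = \sum_n \|\setA_n(\mW_n)\|_2^2$, where $\setA_n(\mW_n) := \{\sqrt{L}\,\vf_\ell^* \mW_n \mC^* \mR_n \tilde{\vf}_\ell^Q\}_\ell$ and $\tilde{\vf}_\ell^Q$ is the $\ell$-th row of $\mF_Q$. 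Thus $\|\setA\|_{2\to 2} = \max_n \|\setA_n\|_{2\to 2}$, and each single-channel norm is controlled by a matrix Bernstein argument applied to the Rademacher-weighted partial-DFT structure, producing $\|\setA_n\|_{2\to 2} \leq c\sqrt{K\log L}$ with failure probability $\setO(L^{-\alpha})$; a union bound over $n \in [N]$ then gives the stated claim.

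For the noise bound, the key observation is that $\ve$ is independent of the Rademacher signs $[\mR_n]$, so conditional on $[\mR_n]$ the adjoint evaluation $\setA^*(\ve) = \sum_{\ell,n} e_{\ell,n} \vf_\ell\hat{\vc}_{\ell,n}^* \in \C^{M \times KN}$ is a centered complex Gaussian matrix. The standard matrix Gaussian tail bound gives
\begin{align*}
\|\setA^*(\ve)\|_{2\to 2} \leq c\sqrt{(v_1 + v_2)\log(M + KN)}
\end{align*}
with probability $1 - \setO((M+KN)^{-\alpha})$, where the two variance proxies are
\begin{align*}
v_1 &= \frac{\sigma^2 d_0^2}{LN}\Big\|\sum_{\ell,n}\|\hat{\vc}_{\ell,n}\|_2^2\,\vf_\ell\vf_\ell^*\Big\|, \\
v_2 &= \frac{\sigma^2 d_0^2}{LN}\Big\|\sum_{\ell,n}\|\vf_\ell\|_2^2\,\hat{\vc}_{\ell,n}\hat{\vc}_{\ell,n}^*\Big\|.
\end{align*}

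Evaluating the proxies: for $v_2$, the Rademacher identity $\mR_n\mR_n = \mI_Q$ combined with $\mF_Q^*\mF_Q = \mI_Q$ and $\mC^*\mC = \mI_K$ collapses $\sum_\ell\hat{\vc}_{\ell,n}\hat{\vc}_{\ell,n}^*$ to $L\mI_K$ in each diagonal block, giving the \emph{deterministic} bound $v_2 \leq \sigma^2 d_0^2 M/(LN)$. For $v_1$, using $\|\vf_\ell\|_2^2 = M/L$ and $\sum_\ell\vf_\ell\vf_\ell^* = \mI_M$, the task reduces to uniformly controlling the Rademacher quadratic form $\max_{\ell,n}\|\hat{\vc}_{\ell,n}\|_2^2 = L\|\mC^*\mR_n\tilde{\vf}_\ell^Q\|_2^2$; a Hoeffding-type subgaussian tail inequality, together with a union bound over the $LN$ pairs $(\ell,n)$, yields $\max_{\ell,n}\|\hat{\vc}_{\ell,n}\|_2^2 \leq CK\nu_{\max}^2\log(LN)$ with the required probability, and hence $v_1 \leq C\sigma^2 d_0^2 K\nu_{\max}^2\log(LN)/L$. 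Inserting both into the Gaussian concentration estimate produces $\|\setA^*(\ve)\|_{2\to 2} \leq \tfrac{2\varepsilon}{50}d_0$ whenever $LN \gtrsim (\sigma^2/\varepsilon^2)\max(M, KN\log(LN))\log(LN)$, exactly as stated.

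The principal obstacle is the uniform control of $\max_{\ell,n}\|\hat{\vc}_{\ell,n}\|_2^2$: although the Rademacher quadratic form has expected value $\setO(K\nu_{\max}^2)$, the uniform bound over all $LN$ indices picks up an unavoidable $\log(LN)$ factor, which is precisely the source of the $KN\log(LN)$ term in the sample-complexity conclusion. Once this uniform control is secured, the matrix Gaussian concentration step is entirely standard, and the asymmetry between the deterministic $v_2$ bound and the randomized $v_1$ bound explains why the noise complexity depends on $\max(M, KN\log(LN))$ rather than $M + KN$.
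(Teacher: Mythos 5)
Your treatment of the noise bound is essentially the paper's proof: you condition on the signs, view $\setA^*(\ve)$ as a Gaussian matrix series $\tfrac{\sigma d_0}{\sqrt{LN}}\sum_{\ell,n} g_n[\ell]\,\hat{\vc}_{\ell,n}\vf_\ell^*$, and compute the same two variance proxies --- the $KN\times KN$ one collapsing deterministically to $M\,\mI_{KN}$ via $\mR_n\mR_n=\mI_Q$, $\mF_Q^*\mF_Q=\mI_Q$, $\mC^*\mC=\mI_K$, and the $M\times M$ one requiring the uniform bound $\max_{\ell,n}\|\hat{\vc}_{\ell,n}\|_2^2\lesssim K\log(LN)$. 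The paper obtains that uniform bound by writing $\hat{\vc}_{\ell,n}=\sqrt{L}\sum_q f_\ell[q]r_n[q]\vc_{q,n}$ and applying the Rademacher version of the same concentration inequality (variance $\leq K+1$); your coordinatewise Hoeffding plus union bound is an acceptable substitute, but note that the variance of each coordinate is exactly $\sum_q |C[q,k]|^2\,L|F_Q[\ell,q]|^2=1$, so the $\nu_{\max}^2$ you insert is superfluous --- and you then silently drop it when quoting the final sample complexity, which you should not do without justifying that it was never needed.

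The one genuine soft spot is the operator-norm bound on $\setA$. The channel decomposition giving $\|\setA\|_{2\to2}=\max_n\|\setA_n\|_{2\to2}$ is fine, but the assertion that ``a matrix Bernstein argument'' then yields $\|\setA_n\|_{2\to2}\leq c\sqrt{K\log L}$ is not a proof: for fixed signs, $\setA_n$ is a single deterministic linear map, not a sum of independent random matrices whose operator norm Bernstein controls, and you never say what the summands or their variance would be. The step that actually does the work --- and the one the paper supplies --- is to reduce $\|\setA\|_{2\to2}$ to $\max_{\ell,n}\|\vf_\ell\hat{\vc}_{\ell,n}^*\|_F=\max_{\ell,n}\|\vf_\ell\|_2\|\hat{\vc}_{\ell,n}\|_2$ (using the orthogonality/Gram structure of the measurement matrices $\vf_\ell\hat{\vc}_{\ell,n}^*$), at which point the very same uniform bound on $\max_{\ell,n}\|\hat{\vc}_{\ell,n}\|_2$ that you establish for $v_1$ finishes the claim. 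You already possess that ingredient; you just never connect it to part one. Fill in that reduction (or replace ``matrix Bernstein'' with an explicit Gram-matrix argument) and the proof is complete.
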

\noindent Please refer to Appendix \ref{sec:noise-stability} for the proof of this lemma.

\subsection{Proof of Theorem \ref{thm:convergence}}\label{sec:convergence}
Given all the intermediate results above, we are in a position to prove Theorem \ref{thm:convergence} below. 
\begin{proof}
We denote by $\vz_t = (\vu_t,\vv_t)$, the iterates of gradient descent algorithm, and $\delta(\vz_t) = \|\vu_t\vv_t^*-\vh_0\vx_0^*\|_F/d_0$. At the initial guess $\vz_0 :=(\vu_0,\vv_0) \in \tfrac{1}{\sqrt{3}} \setN_{d_0} \cap \tfrac{1}{\sqrt{3}} \setN_{\mu} \cap \tfrac{1}{\sqrt{3}} \setN_{\nu} \cap \setN_{\frac{2}{5}\varepsilon}$, it is easy to verify using the definitions of  $\setN_{d_0}$, $\setN_{\mu}$, and $\setN_{\nu}$ that $G(\vu_0,\vv_0) = 0$. For example, $(\vu_0,\vv_0) \in \frac{1}{\sqrt{3}}\setN_{\nu}$ gives 
\begin{align*}
\frac{QN|\vc_{q,n}^*\vx|^2}{8d\nu^2} \leq \frac{QN}{8d\nu^2} \cdot \frac{16d_0\nu^2}{3QN} = \frac{2d_0}{3 d} < 1,
\end{align*}
which immediately implies that the fourth term of $G(\vu_0,\vv_0)$ in \eqref{eq:G-def} is zero. Similar calculations show that all other terms in $G(\vu_0,\vv_0)$ are zero. The remaining proof is the exact repetition of the proof of Theorem \ref{thm:convergence} in \cite{li2016rapid}, and uses Lemma \ref{lem:main-lemma}, and \ref{lem:local-regularity} to produce 
\begin{align*}
&\|\vu_{t+1}\vv_{t+1}^*-\vh_0\vx_0^*\|_F \leq \tfrac{2}{3} (1-\eta\omega)^{(t+1)/2} \varepsilon d_0 + 50 \|\setA^*(\ve)\|_{2 \rightarrow 2},
\end{align*}
where $\eta$ is the step size in the gradient decent algorithm and satisfies $\eta \leq 1/C_L$ for a constant $C_L$ defined in Lemma \ref{lem:smoothness-CL}, and the constant $\omega$ is characterized in Lemma \ref{lem:local-regularity}. 
\end{proof}

Due to space constraints, the proofs of the remaining lemmas are moved to the appendices, which include the proof of the key lemmas on local RIP that constitute our main technical contribution. 

\section{Conclusion}

We studied the blind deconvolution problem under a practically relevant model of modulated input signals. We discussed several applications to motivate the problem, and presented some recovery guarantees. We believe that a better proof technique may show that the regularization term $G(\vh,\vx)$ is not required. Moreover, we also conjecture that the approximate recovery guarantees may also be improved to exact recovery. 
\bibliographystyle{IEEEtran}
\bibliography{IEEEabrv,ModBD}

\begin{thebibliography}{10}
\providecommand{\url}[1]{#1}
\csname url@samestyle\endcsname
\providecommand{\newblock}{\relax}
\providecommand{\bibinfo}[2]{#2}
\providecommand{\BIBentrySTDinterwordspacing}{\spaceskip=0pt\relax}
\providecommand{\BIBentryALTinterwordstretchfactor}{4}
\providecommand{\BIBentryALTinterwordspacing}{\spaceskip=\fontdimen2\font plus
\BIBentryALTinterwordstretchfactor\fontdimen3\font minus
  \fontdimen4\font\relax}
\providecommand{\BIBforeignlanguage}[2]{{%
\expandafter\ifx\csname l@#1\endcsname\relax
\typeout{** WARNING: IEEEtran.bst: No hyphenation pattern has been}%
\typeout{** loaded for the language `#1'. Using the pattern for}%
\typeout{** the default language instead.}%
\else
\language=\csname l@#1\endcsname
\fi
#2}}
\providecommand{\BIBdecl}{\relax}
\BIBdecl

\bibitem{curtis2006seismic}
A.~Curtis, P.~Gerstoft, H.~Sato, R.~Snieder, and K.~Wapenaar, ``Seismic
  interferometry --- turning noise into signal,'' \emph{The Leading Edge},
  vol.~25, no.~9, pp. 1082--1092, 2006.

\bibitem{bharadwaj2018focused}
P.~Bharadwaj, L.~Demanet, and A.~Fournier, ``Focused blind deconvolution of
  interferometric green's functions,'' in \emph{SEG Technical Program Expanded
  Abstracts 2018}.\hskip 1em plus 0.5em minus 0.4em\relax Society of
  Exploration Geophysicists, 2018, pp. 4085--4090.

\bibitem{marron1995passive}
J.~C. Marron and A.~M. Tai, ``Passive synthetic aperture imaging,'' in
  \emph{Adv. Imag. Technol. Commercial Appl.}, vol. 2566.\hskip 1em plus 0.5em
  minus 0.4em\relax Int'l Soc. Opt. Photon., 1995, pp. 196--204.

\bibitem{sabra2004blind}
K.~G. Sabra and D.~R. Dowling, ``Blind deconvolution in ocean waveguides using
  artificial time reversal,'' \emph{J. Acoust. Soc. America}, vol. 116, no.~1,
  pp. 262--271, 2004.

\bibitem{sabra2010ray}
K.~G. Sabra, H.-C. Song, and D.~R. Dowling, ``Ray-based blind deconvolution in
  ocean sound channels,'' \emph{J. Acoust. Soc. America}, vol. 127, no.~2, pp.
  EL42--EL47, 2010.

\bibitem{tian2017multichannel}
N.~Tian, S.-H. Byun, K.~Sabra, and J.~Romberg, ``Multichannel myopic
  deconvolution in underwater acoustic channels via low-rank recovery,''
  \emph{J. Acoust. Soc. America}, vol. 141, no.~5, pp. 3337--3348, 2017.

\bibitem{yoshioka2012making}
T.~Yoshioka, A.~Sehr, M.~Delcroix, K.~Kinoshita, R.~Maas, T.~Nakatani, and
  W.~Kellermann, ``Making machines understand us in reverberant rooms:
  Robustness against reverberation for automatic speech recognition,''
  \emph{IEEE Signal Process. Mag.}, vol.~29, no.~6, pp. 114--126, 2012.

\bibitem{ahmed2015convex}
A.~Ahmed, A.~Cosse, and L.~Demanet, ``A convex approach to blind deconvolution
  with diverse inputs,'' in \emph{IEEE 6th Int'l Workshop Comput. Adv.
  Multi-Sensor Adaptive Process.(CAMSAP).}\hskip 1em plus 0.5em minus
  0.4em\relax IEEE, 2015, pp. 5--8.

\bibitem{ahmed2012blind}
{A. Ahmed and B. Recht and J. Romberg}, ``Blind deconvolution using convex
  programming,'' \emph{{IEEE} Trans. Inform. Theory}, vol.~60, no.~3, pp.
  1711--1732, 2014.

\bibitem{li2016rapid}
X.~Li, S.~Ling, T.~Strohmer, and K.~Wei, ``Rapid, robust, and reliable blind
  deconvolution via nonconvex optimization,'' \emph{Appl. Comput. Harmonic
  Anal.}, 2018.

\bibitem{lee2018spectral}
K.~Lee, F.~Krahmer, and J.~Romberg, ``Spectral methods for passive imaging:
  Nonasymptotic performance and robustness,'' \emph{SIAM J. Imag. Sci.},
  vol.~11, no.~3, pp. 2110--2164, 2018.

\bibitem{ahmed2016leveraging}
A.~Ahmed and L.~Demanet, ``Leveraging diversity and sparsity in blind
  deconvolution,'' \emph{{IEEE} Trans. Inform. Theory}, vol.~64, no.~6, pp.
  3975--4000, 2018.

\bibitem{ahmed2018convex}
A.~Ahmed, ``A convex approach to blind {MIMO} communications,'' \emph{IEEE
  Wireless Commun. Lett.}, 2018.

\bibitem{candes09ex}
E.~Cand\`es and B.~Recht, ``Exact matrix completion via convex optimization,''
  \emph{Found. Comput. Math.}, vol.~9, no.~6, pp. 717--772, 2009.

\bibitem{candes2007sparsity}
E.~Candes and J.~Romberg, ``Sparsity and incoherence in compressive sampling,''
  \emph{Inverse problems}, vol.~23, no.~3, p. 969, 2007.

\bibitem{jain2013low}
P.~Jain, P.~Netrapalli, and S.~Sanghavi, ``Low-rank matrix completion using
  alternating minimization,'' in \emph{Proc. forty-fifth annual ACM symposium
  Theory comput.}\hskip 1em plus 0.5em minus 0.4em\relax ACM, 2013, pp.
  665--674.

\bibitem{sun2016guaranteed}
R.~Sun and Z.-Q. Luo, ``Guaranteed matrix completion via non-convex
  factorization,'' \emph{{IEEE} Trans. Inform. Theory}, vol.~62, no.~11, pp.
  6535--6579, 2016.

\bibitem{candes2015phase}
E.~J. Candes, X.~Li, and M.~Soltanolkotabi, ``Phase retrieval via wirtinger
  flow: Theory and algorithms,'' \emph{{IEEE} Trans. Inform. Theory}, vol.~61,
  no.~4, pp. 1985--2007, 2015.

\bibitem{keshavan2012efficient}
R.~H. Keshavan \emph{et~al.}, ``Efficient algorithms for collaborative
  filtering,'' Ph.D. dissertation, Stanford University, 2012.

\bibitem{hardt2014understanding}
M.~Hardt, ``Understanding alternating minimization for matrix completion,'' in
  \emph{2014 IEEE 55th Ann. Symp. Foundations Comput. Science}.\hskip 1em plus
  0.5em minus 0.4em\relax IEEE, 2014, pp. 651--660.

\bibitem{laska2007theory}
J.~N. Laska, S.~Kirolos, M.~F. Duarte, T.~S. Ragheb, R.~G. Baraniuk, and
  Y.~Massoud, ``Theory and implementation of an analog-to-information converter
  using random demodulation,'' in \emph{{IEEE} Int'l Symposium Circuits Syst.
  ISCAS.}\hskip 1em plus 0.5em minus 0.4em\relax IEEE, 2007, pp. 1959--1962.

\bibitem{tropp2010beyond}
{J. Tropp and J. Laska and M. Duarte and J. Romberg, and R. Baraniuk}, ``Beyond
  nyquist: Efficient sampling of sparse bandlimited signals,'' \emph{{IEEE}
  Trans. Inform. Theory}, vol.~56, no.~1, pp. 520--544, 2010.

\bibitem{ahmed2015compressive}
{Ali Ahmed and Justin Romberg}, ``Compressive multiplexing of correlated
  signals,'' \emph{IEEE Trans. Inform. Th.}, vol.~1, pp. 479--498, 2015.

\bibitem{ahmed2018compressive}
A.~Ahmed and J.~Romberg, ``Compressive sampling of ensembles of correlated
  signals,'' \emph{arXiv preprint arXiv:1501.06654}, 2015.

\bibitem{bahmani2015lifting}
S.~Bahmani and J.~Romberg, ``Lifting for blind deconvolution in random mask
  imaging: Identifiability and convex relaxation,'' \emph{SIAM J. Imag. Sci.},
  vol.~8, no.~4, pp. 2203--2238, 2015.

\bibitem{harikumar1999perfect}
G.~Harikumar and Y.~Bresler, ``Perfect blind restoration of images blurred by
  multiple filters: Theory and efficient algorithms,'' \emph{IEEE Trans. Imag.
  Process.}, vol.~8, no.~2, pp. 202--219, 1999.

\bibitem{sroubek2012robust}
F.~Sroubek and P.~Milanfar, ``Robust multichannel blind deconvolution via fast
  alternating minimization,'' \emph{IEEE Trans. Imag. Process.}, vol.~21,
  no.~4, pp. 1687--1700, 2012.

\bibitem{byun2017blind}
S.-H. Byun, C.~M. Verlinden, and K.~G. Sabra, ``Blind deconvolution of shipping
  sources in an ocean waveguide,'' \emph{J. Acoust. Soc. America}, vol. 141,
  no.~2, pp. 797--807, 2017.

\bibitem{douglas1997multichannel}
S.~C. Douglas, A.~Cichocki, and S.-I. Amari, ``Multichannel blind separation
  and deconvolution of sources with arbitrary distributions,'' in \emph{Proc.
  {IEEE} Workshop Neural Netw. Signal Process. [1997] VII.}\hskip 1em plus
  0.5em minus 0.4em\relax IEEE, 1997, pp. 436--445.

\bibitem{amari1997multichannel}
S.-i. Amari, S.~C. Douglas, A.~Cichocki, and H.~H. Yang, ``Multichannel blind
  deconvolution and equalization using the natural gradient,'' in \emph{First
  IEEE Signal Process. Workshop Signal Process. Adv. Wireless Commun.}\hskip
  1em plus 0.5em minus 0.4em\relax IEEE, 1997, pp. 101--104.

\bibitem{xu1995least}
G.~Xu, H.~Liu, L.~Tong, and T.~Kailath, ``A least-squares approach to blind
  channel identification,'' \emph{{IEEE} Trans. Signal Process.}, vol.~43,
  no.~12, pp. 2982--2993, 1995.

\bibitem{moulines1995subspace}
E.~Moulines, P.~Duhamel, J.-F. Cardoso, and S.~Mayrargue, ``Subspace methods
  for the blind identification of multichannel fir filters,'' \emph{{IEEE}
  Trans. Signal Process.}, vol.~43, no.~2, pp. 516--525, 1995.

\bibitem{subramaniam1996cepstrum}
S.~Subramaniam, A.~P. Petropulu, and C.~Wendt, ``Cepstrum-based deconvolution
  for speech dereverberation,'' \emph{IEEE Trans. Speech, Audio Process.},
  vol.~4, no.~5, pp. 392--396, 1996.

\bibitem{lin2006two}
X.~Lin, N.~D. Gaubitch, and P.~A. Naylor, ``Two-stage blind identification of
  simo systems with common zeros,'' in \emph{14th European Signal Process.
  Conf., 2006}.\hskip 1em plus 0.5em minus 0.4em\relax IEEE, 2006, pp. 1--5.

\bibitem{huang2002adaptive}
Y.~A. Huang and J.~Benesty, ``Adaptive multi-channel least mean square and
  newton algorithms for blind channel identification,'' \emph{Signal Process.},
  vol.~82, no.~8, pp. 1127--1138, 2002.

\bibitem{talagrand2005generic}
M.~Talagrand, ``The generic chaining. springer monographs in mathematics,''
  2005.

\bibitem{ma2017implicit}
C.~Ma, K.~Wang, Y.~Chi, and Y.~Chen, ``Implicit regularization in nonconvex
  statistical estimation: Gradient descent converges linearly for phase
  retrieval, matrix completion and blind deconvolution,'' \emph{arXiv preprint
  arXiv:1711.10467}, 2017.

\bibitem{balzano2007blind}
L.~Balzano and R.~Nowak, ``Blind calibration of sensor networks,'' in
  \emph{Proceedings of the 6th international conference on Information
  processing in sensor networks}.\hskip 1em plus 0.5em minus 0.4em\relax ACM,
  2007, pp. 79--88.

\bibitem{ling2018self}
S.~Ling and T.~Strohmer, ``Self-calibration and bilinear inverse problems via
  linear least squares,'' \emph{SIAM J. Imag. Sci.}, vol.~11, no.~1, pp.
  252--292, 2018.

\bibitem{li2018blind}
Y.~Li, K.~Lee, and Y.~Bresler, ``Blind gain and phase calibration via sparse
  spectral methods,'' \emph{{IEEE} Trans. Inform. Theory}, 2018.

\bibitem{romberg2013multichannel}
J.~Romberg, N.~Tian, and K.~Sabra, ``Multichannel blind deconvolution using low
  rank recovery,'' in \emph{Independent Component Analyses, Compressive
  Sampling, Wavelets, Neural Net, Biosystems, and Nanoengineering XI}, vol.
  8750.\hskip 1em plus 0.5em minus 0.4em\relax Int'l Soc. Opt. Photon., 2013,
  p. 87500E.

\bibitem{li2016optimal}
Y.~Li, K.~Lee, and Y.~Bresler, ``Optimal sample complexity for blind gain and
  phase calibration.'' \emph{IEEE Trans. Signal Processing}, vol.~64, no.~21,
  pp. 5549--5556, 2016.

\bibitem{wang2016blind}
L.~Wang and Y.~Chi, ``Blind deconvolution from multiple sparse inputs,''
  \emph{IEEE Signal Process. Letters}, vol.~23, no.~10, pp. 1384--1388, 2016.

\bibitem{li2018global}
Y.~Li and Y.~Bresler, ``Global geometry of multichannel sparse blind
  deconvolution on the sphere,'' in \emph{Advances Neural Inform. Process.
  Syst.}, 2018, pp. 1140--1151.

\bibitem{cosse2017note}
A.~Cosse, ``A note on the blind deconvolution of multiple sparse signals from
  unknown subspaces,'' in \emph{Wavelets and Sparsity XVII}, vol. 10394.\hskip
  1em plus 0.5em minus 0.4em\relax International Society for Optics and
  Photonics, 2017, p. 103941N.

\bibitem{tong1995blind}
L.~Tong, G.~Xu, B.~Hassibi, and T.~Kailath, ``Blind channel identification
  based on second-order statistics: A frequency-domain approach,'' \emph{{IEEE}
  Trans. Inform. Theory}, vol.~41, no.~1, pp. 329--334, 1995.

\bibitem{tong1994blindtime}
L.~Tong, G.~Xu, and T.~Kailath, ``Blind identification and equalization based
  on second-order statistics: A time domain approach,'' \emph{{IEEE} Trans.
  Inform. Theory}, vol.~40, no.~2, pp. 340--349, 1994.

\bibitem{campisi2016blind}
P.~Campisi and K.~Egiazarian, \emph{Blind image deconvolution: theory and
  applications}.\hskip 1em plus 0.5em minus 0.4em\relax CRC press, 2016.

\bibitem{tong1998multichannel}
L.~Tong and S.~Perreau, ``Multichannel blind identification: From subspace to
  maximum likelihood methods,'' \emph{Proc. IEEE}, vol.~86, no.~10, pp.
  1951--1968, 1998.

\bibitem{baraniuk2008simple}
R.~Baraniuk, M.~Davenport, R.~DeVore, and M.~Wakin, ``A simple proof of the
  restricted isometry property for random matrices,'' \emph{Constructive
  Approximation}, vol.~28, no.~3, pp. 253--263, 2008.

\bibitem{krahmer2014suprema}
F.~Krahmer, S.~Mendelson, and H.~Rauhut, ``Suprema of chaos processes and the
  restricted isometry property,'' \emph{Commun. Pure Appl. Math.}, 2014.

\bibitem{dudley1967sizes}
R.~M. Dudley, ``The sizes of compact subsets of hilbert space and continuity of
  gaussian processes,'' \emph{J. Funct. Analysis}, vol.~1, no.~3, pp. 290--330,
  1967.

\bibitem{rauhut2010compressive}
H.~Rauhut, ``Compressive sensing and structured random matrices,''
  \emph{Theoretical foundations and numerical methods for sparse recovery},
  vol.~9, pp. 1--92, 2010.

\bibitem{tropp12us}
J.~Tropp, ``User-friendly tail bounds for sums of random matrices,''
  \emph{Found. Comput. Math.}, vol.~12, no.~4, pp. 389--434, 2012.

\bibitem{escalante2011alternating}
R.~Escalante and M.~Raydan, \emph{Alternating projection methods}.\hskip 1em
  plus 0.5em minus 0.4em\relax SIAM, 2011, vol.~8.

\end{thebibliography}
\newpage.
\appendix
We now complete the proofs of lemmas not covered in the main body of the paper due to space constraints. We begin with the local RIP proofs that constitute our main technical contribution, and rely on the chaining arguments. 

\subsection{Proof of Lemma \ref{lem:local-RIP}}\label{sec:local-RIP}

Recall that $\setA$ in \eqref{eq:linear-map} maps the unknowns $\vh_0\vx_0^*$ to the noiseless convolution measurements in the Fourier domain. Using the isometry of the DFT matrix $\mF$ (an $L \times L$ normalized DFT matrix), we have 
\begin{align*}
&\setA(\vh_0\vx_0^*) = \\
&\sum_{n=1}^N \|\vh_0\circledast \mR_n\mC\vx_{0,n}\|_2^2= \sum_{n=1}^N\|\text{circ}(\vh_0)\text{diag}(\mC\vx_{0,n})\vr_n\|_2^2,
\end{align*}
where as before $\vx_0 = \text{vec}([\vx_{0,n}])$, and $\mR_n = \text{diag}(\vr_n)$.  Similar calculation shows that
\begin{align}\label{eq:2nd-order-chaos-process}
&\|\setA(\vh\vx^*-\vh_0\vx_0^*)\|_2^2 \notag\\
&= \sum_{n=1}^N \left\|\text{circ}(\vh)\text{diag}(\mC\vx_n)\vr_n - \text{circ}(\vh_0) \text{diag}(\mC\vx_{0,n})\vr_n\right\|_2^2\notag\\
& = \|(\mH_{\vh}\mX_{\vx}-\mH_{\vh_0}\mX_{\vx_0})\vr\|_F^2,
\end{align}
where $\vr = \text{vec}([\vr_n])$, and the block diagonal matrices $\mH_{\vh} \in \C^{LN \times QN}$, and $\mX_{\vx} \in \C^{QN \times QN}$ are defined as
\begin{align}\label{eq:Hh-Xx-def}
\mH_{\vh}:= [\text{circ}(\vh)]^{\otimes N}, \ \text{and} \ \mX_{\vx} :=  [\text{diag}\left(\mC\vx_{n}\right)]^{\otimes N}. 
\end{align}
The empirical process in \eqref{eq:2nd-order-chaos-process} is known as 2nd order chaos process, where $\vr$  is a standard Rademacher $QN$-vector, and $(\mH_{\vh}\mX_{\vx}-\mH_{\vh_0}\mX_{\vx_0})$ is a deterministic matrix. The expected value of this random quantity is simply 
\begin{align}\label{eq:expected-value}
\E \|(\mH_{\vh}\mX_{\vx}-\mH_{\vh_0}\mX_{\vx_0})\vr\|_2^2 = \|(\mH_{\vh}\mX_{\vx}-\mH_{\vh_0}\mX_{\vx_0})\|_F^2.
\end{align}
Recall that $\circledast$ denotes $L$-point circular convolution, and therefore, $\text{circ}(\vh) = \mF^*\text{diag}(\hat{\vh})\mF_Q \in \C^{L \times Q}$, where $\hat{\vh} = \sqrt{L}\mF_M\vh$. Note a simple identity $ \|\mH_{\vh}\mX_{\vx}-\mH_{\vh_0}\mX_{\vx_0}\|_F^2 = \|\vh\vx^*-\vh_0\vx_0^*\|_F^2$; its proof follows from the couple of simple steps below
\begin{align}\label{eq:distance-vectors-to-matrices}
&\|\mH_{\vh}\mX_{\vx}-\mH_{\vh_0}\mX_{\vx_0}\|_F^2 = \notag\\
&\sum_{n} \|\mF^*\text{diag}(\hat{\vh})\mF_Q \text{diag}(\mC\vx_{n})- \mF^*\text{diag}(\hat{\vh}_0)\mF_Q\text{diag}(\mC\vx_{0,n})\|_F^2\notag\\
& = \sum_{n} \|\text{diag}(\hat{\vh})\mF_Q \text{diag}(\mC\vx_{n})- \text{diag}(\hat{\vh}_0)\mF_Q\text{diag}(\mC\vx_{0,n})\|_F^2\notag\\
& = \sum_{n} \| \mF_Q\odot( \hat{\vh} \vx_n^*- \hat{\vh}_0 \vx_{0,n}^*)\|_F^2 = \|\vh\vx^*-\vh_0\vx_0^*\|_F^2,
\end{align}
where the last two equalities follow from the fact the $\mC^*\mC = \mI_K$,  $(\mF_M)^*\mF_M = \mI_M$, and that the entries of the DFT matrix $\sqrt{L}\mF_Q$ have unit magnitude.  Define the sets $\setH$, and $\setX$ indexed by $\vh$, and $\vx$, respectively, as below
\begin{align}\label{eq:setX-setH}
\setH:= \{\mH_{\vh} \vert  \vh \in \setN_{d_0} \cap \setN_{\mu}\}, \ \setX: = \{ \mX_{\vx} \vert    \vx \in \setN_{d_0} \cap \setN_\nu\}. 
\end{align}
Using \eqref{eq:2nd-order-chaos-process}, \eqref{eq:expected-value}, and the identity \eqref{eq:distance-vectors-to-matrices}, the local-RIP over all $(\vh,\vx)$ such that $\|\vh\vx^*-\vh_0\vx_0^*\|_F = \|(\mH_{\vh}\mX_{\vx}-\mH_{\vh_0}\mX_{\vx_0})\|_F =  \delta d_0$  as stated in Lemma \ref{lem:local-RIP} can be restated as 
\begin{align*}
& \sup_{\mH_{\vh} \in \setH}\sup_{\mX_{\vx} \in \setX}\Big| \|(\mH_{\vh}\mX_{\vx}-\mH_{\vh_0}\mX_{\vx_0})\vr\|_2^2 - \notag \\
&\qquad\qquad  \E \|(\mH_{\vh}\mX_{\vx}-\mH_{\vh_0}\mX_{\vx_0})\vr\|_2^2 \Big| \leq \xi \delta^2 d_0^2
\end{align*}
holds with high probability for a $\xi \in (0,1)$. 

This bound above depends on the notion of geometrical complexity of both the sets $\setX$, and $\setH$. The definition of this complexity is subtle and is measured in terms of the Talagrand's $\gamma_2$-functional \cite{talagrand2005generic} for these sets relative to two different distance metrics. Given a set $\setS$, and a distance defined by a norm $\|\cdot\|$, the $\gamma_2$-functional quantifies that how well $\setS$ can be approximated at different scales. 

The $\gamma_2$-functional can be directly related to the rate at which the size of the best $\epsilon$-cover of the set $\setS$ grows as $\epsilon$ decreases. Although this is a purely geometric characteristics of set $\setS$, the $\gamma_2$-functional gives a tight bound on the supremum of a Gaussian process. For example, if $\mG$ is an $M_1 \times M_2$ random matrix whose entries are independent and distributed $\text{Normal}(0,1)$, then $\sup_{\mS \in \setS} \<\mS,\mG\> \sim \gamma_2(\setS,\|\cdot\|_F).$ 

Along with $\gamma_2$, the other geometrical quantities that appear in the final bound are the diameters $d_{2\rightarrow 2}(\setS) := \sup_{\mS \in \setS} \|\mS\|_{2 \rightarrow 2}$, and $d_{F}(\setS):= \sup_{\mS \in \setS} \|\mS\|_F$ of the set $\setS$ with respect to the matrix operator, and Frobenius (sum of squares) norms, respectively.

Since the random quantity $\|(\mH_{\vh}\mX_{\vx}-\mH_{\vh_0}\mX_{\vx_0})\vr\|_2^2$ is a second-order-chaos process, we now present a result \cite{krahmer2014suprema} that controls the deviation of a general second-order-chaos process from its mean in terms of the geometrical quantities introduced above.

\begin{thm}[Theorem 3.1 in \cite{krahmer2014suprema}]\label{thm:Mendelson}
	Let $\setS$ be a set of matrices, and $\vr$ be a random vector whose entries $r_j$ are  independent mean zero, variance 1, and $\alpha$-subgaussian random variables. Let $d_F(\setS)$, and $d_{2 \rightarrow 2}(\setS)$ denote the diameters of $\setS$ under $\|\cdot\|_F$, and $\|\cdot\|_{2 \rightarrow 2}$ norms. Set 
	\begin{align*}
	E &= \gamma_2(\setS,\|\cdot\|_{2 \rightarrow 2})(\gamma_2(\setS,\|\cdot\|_{2 \rightarrow 2})+d_{F}(\setS)) + d_{F}(\setS) d_{2\rightarrow 2}(\setS)\\
	V &= d_{2 \rightarrow 2}(\setS)(\gamma_2(\setS, \|\cdot\|_{2\rightarrow 2})+d_F(\setS)), \ \text{and} \ U = d_{2 \rightarrow 2}^2 (\setS).
	\end{align*}
	Then for $t \geq 0$, 
	\begin{align*}
	&\PP\big(\sup_{\mS \in \setS}\left| \|\mS\vr\|_2^2 - \E \|\mS\vr\|_2^2 \right| \geq c_1 E +t  \big) \leq \\
	& \qquad \qquad  2 \exp\Big( -c_2\min\Big\{ \tfrac{t^2}{V^2},\tfrac{t}{U}\Big\}\Big).
	\end{align*}
	The constants $c_1$, and $c_2$ depend only on $\alpha$. 
\end{thm}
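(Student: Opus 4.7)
The plan is to prove Theorem \ref{thm:Mendelson} by combining a decoupling step with Talagrand's generic chaining applied to second-order chaos processes. Observe that
\begin{align*}
\|\mS\vr\|_2^2 - \E\|\mS\vr\|_2^2 = \sum_{i,j}(\mS^*\mS)_{i,j}\bigl(r_i r_j - \E r_i r_j\bigr)
\end{align*}
is a centered quadratic form of a subgaussian vector. First, I would apply a standard decoupling inequality (in the style of de la Pe\~{n}a--Gin\'{e}) to pass, at the price of an absolute constant, to the decoupled analogue $\vr^*(\mS^*\mS)\vr^{\prime}$ with $\vr^{\prime}$ an independent copy of $\vr$. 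Conditionally on $\vr^{\prime}$, this is a linear subgaussian form in $\vr$, which makes the subsequent probability estimates considerably more tractable.

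Second, I would fix an admissible sequence $(\setS_n)_{n \geq 0}$ for $\setS$ nearly optimizing the operator-norm $\gamma_2$-functional, with $\setS_0$ a singleton and $|\setS_n| \leq 2^{2^n}$ for $n \geq 1$. Let $\pi_n(\mS)$ denote the closest point to $\mS$ in $\setS_n$ with respect to $\|\cdot\|_{2 \rightarrow 2}$. Writing $f_{\mS}(\vr) := \|\mS\vr\|_2^2 - \E\|\mS\vr\|_2^2$, I would use the telescoping identity
\begin{align*}
f_{\mS}(\vr) = f_{\pi_0(\mS)}(\vr) + \sum_{n \geq 1}\bigl(f_{\pi_n(\mS)}(\vr) - f_{\pi_{n-1}(\mS)}(\vr)\bigr)
\end{align*}
and control each chaining increment after decoupling. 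The key algebraic identity $\mS_1^*\mS_1 - \mS_2^*\mS_2 = (\mS_1 - \mS_2)^*\mS_1 + \mS_2^*(\mS_1 - \mS_2)$ bounds the operator norm of each increment kernel by $\|\mS_1 - \mS_2\|_{2 \rightarrow 2} \cdot d_{2 \rightarrow 2}(\setS)$ and its Frobenius norm by $\|\mS_1 - \mS_2\|_{2 \rightarrow 2} \cdot d_F(\setS)$, tying the chaining metric to the two diameters appearing in the statement.

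Third, for each increment a Hanson--Wright-type Bernstein inequality for subgaussian quadratic forms yields a tail of the form $\exp\bigl(-c\min(t^2/\|K\|_F^2, t/\|K\|_{2 \rightarrow 2})\bigr)$ for kernel $K$. Summing over chaining levels with the standard weights $2^{n/2}$ dictated by the admissibility bound and applying the union bound at each scale produces, after the usual Talagrand-style bookkeeping,
\begin{align*}
\sum_{n \geq 0} 2^{n/2}\,\|\pi_n(\mS) - \pi_{n-1}(\mS)\|_{2 \rightarrow 2} \leq c\,\gamma_2(\setS,\|\cdot\|_{2 \rightarrow 2}).
\end{align*}
The subgaussian regime contribution is then controlled by $V = d_{2 \rightarrow 2}(\setS)(\gamma_2(\setS,\|\cdot\|_{2\rightarrow 2})+d_F(\setS))$ and the subexponential regime by $U = d_{2 \rightarrow 2}^2(\setS)$, matching the statement. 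The centering constant $c_1 E$ arises from an analogous Dudley-type bound on $\E\sup_{\mS \in \setS}|f_{\mS}(\vr)|$, in which the product $\gamma_2(\setS,\|\cdot\|_{2 \rightarrow 2})\cdot d_F(\setS)$ appears through the asymmetric cross term between the two independent copies after decoupling.

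The main obstacle is that the quadratic form naturally lives in a product metric, and single-norm chaining under either $\|\cdot\|_{2 \rightarrow 2}$ or $\|\cdot\|_F$ alone loses either the subexponential tail or the correct $\gamma_2$ scaling. The resolution, due to Talagrand and implemented in \cite{krahmer2014suprema}, is a double-indexed partitioning scheme: at each chaining level one first discretizes by an operator-norm ball (to control the subexponential kernel scale) and then refines by a Frobenius-norm ball (to control the variance), after which Hanson--Wright is applied. Ensuring that the counting at both scales combines so that the final tail constants line up with exactly $E$, $V$, and $U$ as defined is the delicate bookkeeping that constitutes the technical heart of the proof.
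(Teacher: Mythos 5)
First, note that the paper does not prove this statement at all: Theorem \ref{thm:Mendelson} is imported verbatim as Theorem 3.1 of \cite{krahmer2014suprema} and used as a black box in the proofs of Lemmas \ref{lem:local-RIP} and \ref{lem:local-Delh-Delx-RIP}, so there is no in-paper proof to compare against; the relevant comparison is with the argument in \cite{krahmer2014suprema} itself.

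Measured against that, your sketch has a genuine gap at its core step. If you decouple, chain with respect to $\|\cdot\|_{2\to 2}$, and apply a Hanson--Wright bound to each increment kernel $K_n = \pi_n(\mS)^*\pi_n(\mS)-\pi_{n-1}(\mS)^*\pi_{n-1}(\mS)$, the subexponential branch of the tail $\exp\bigl(-c\min(t^2/\|K_n\|_F^2,\, t/\|K_n\|_{2\to2})\bigr)$ forces you to run that branch of the chaining with weights of order $2^{n}$ rather than $2^{n/2}$ (to beat the union bound over $2^{2^n}$ points at level $n$ you need the exponent to be $\gtrsim 2^n$, and in the subexponential regime the exponent is linear in $t$). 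Summing $2^{n}\|\pi_n(\mS)-\pi_{n-1}(\mS)\|_{2\to2}$ produces $\gamma_1(\setS,\|\cdot\|_{2\to2})\cdot d_{2\to2}(\setS)$, not the term $\gamma_2(\setS,\|\cdot\|_{2\to2})^2$ appearing in $E$; and $\gamma_1$ can be far larger than $\gamma_2^2$. Your closing paragraph correctly senses that something extra is needed, but the ``double-indexed partitioning'' you invoke is essentially Talagrand's older route to chaos processes and does not, as described, eliminate the $\gamma_1$ term. The actual mechanism in \cite{krahmer2014suprema} is different and avoids both decoupling and Hanson--Wright on increments: one chains on the \emph{degree-one} process $\mS\mapsto\|\mS\vr\|_2$, using $\bigl|\|\mS_1\vr\|_2-\|\mS_2\vr\|_2\bigr|\le\|(\mS_1-\mS_2)\vr\|_2$ together with the subgaussian concentration of $\|(\mS_1-\mS_2)\vr\|_2$ around $\|\mS_1-\mS_2\|_F$ at scale $\|\mS_1-\mS_2\|_{2\to2}$; this yields $\sup_{\mS}\|\mS\vr\|_2\lesssim \gamma_2(\setS,\|\cdot\|_{2\to2})+d_F(\setS)$ with a purely subgaussian chaining, and the quadratic statement follows by writing $\|\mS\vr\|_2^2-\E\|\mS\vr\|_2^2=(\|\mS\vr\|_2-\|\mS\|_F)(\|\mS\vr\|_2+\|\mS\|_F)$ and multiplying the chained deviation by the bound on the magnitude factor. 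That factorization is precisely where the product structure $\gamma_2(\gamma_2+d_F)$ in $E$ and the quantities $V$ and $U$ come from; without it your argument proves a weaker (and for the application in Lemma \ref{lem:local-RIP}, insufficient) inequality.
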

The proof of the local restricted isometry property is an application of the above result. For a fixed $\mH_{\vh_0} \in \setH$, and $\mX_{\vx_0} \in \setX$, we start by defining the set of matrices as $\setS :=\{\mH_{\vh}\mX_{\vx}-\mH_{\vh_0}\mX_{\vx_0}| \mH_{\vh} \in \setH, \ \mX_{\vx} \in \setX \}$.
Recalling that $\text{circ}(\vh) = \mF^*\text{diag}(\hat{\vh}) \mF_Q$ is an $L \times Q$ circulant matrix, and again $\mF$ is an $L \times L$ normalized DFT matrix. Using \eqref{eq:setX-setH}, it is now easy to see that $
\sup_{\mH_{\vh} \in \setH}\|\mH_{\vh}\|_{2 \rightarrow 2} = \|\mF^*\text{diag}(\hat{\vh}) \mF_Q\|_{2 \rightarrow 2} = \|\text{diag}(\hat{\vh}) \mF_Q\|_{2 \rightarrow 2}  = \sqrt{L}\|\mF_M\vh\|_\infty  \leq 4\sqrt{d_0}\mu$, and likewise  $\|\mH_{\vh_0}\|_F = \mu\sqrt{d_0}$.  Similarly, for $\mX_{\vx} = [\text{diag}( \mC\vx_{n})]^{\otimes N}$, we have
\begin{align*}
\sup_{\mX_{\vx} \in \setX} &\|\mX_{\vx}\|_{2 \rightarrow 2} = \|\mC^{\otimes N}\vx\|_\infty \leq \tfrac{4\nu \sqrt{d_0}}{\sqrt{QN}}, \\
&\|\mX_{\vx_0}\|_{2 \rightarrow 2} = \|\mC^{\otimes N}\vx_{0}\|_\infty  = \tfrac{\nu \sqrt{d_0}}{\sqrt{QN}}.
\end{align*}
An upper bound on the diameter $d_{2 \rightarrow 2}(\setS)$ can then be obtained as 
\begin{align}\label{eq:d2-tight-bound}
& d^2_{2 \rightarrow 2}(\setS) = \sup_{\mH_{\vh}\in \setH} \sup_{\mX_{\vx}\in \setX} \|\mH_{\vh}\mX_{\vx} - \mH_{\vh_0}\mX_{\vx_0}\|^2_{2 \rightarrow 2} \leq \notag\\
& \left(\sup_{\mH_{\vh}\in \setH} \|\mH_{\vh}\|_{2 \rightarrow 2}\cdot \sup_{\mX_{\vx}\in \setH} \|\mX_{\vx}\|_{2 \rightarrow 2} + \|\mH_{\vh_0}\|_{2 \rightarrow 2}\|\mX_{\vx_0}\|_{2 \rightarrow 2}\right)\notag \\
&\leq \frac{1}{\sqrt{QN}}\left(4\mu\sqrt{d_0} \cdot 4\nu\sqrt{d_0}+\mu\sqrt{d_0}\cdot \nu \sqrt{d_0}\right) = \frac{17\mu\nu d_0}{\sqrt{QN}}
\end{align}

Since we only consider all $(\vh,\vx)$ such that $\|\mH_{\vh}\mX_{\vx}-\mH_{\vh_0}\mX_{\vx_0}\|_F = \delta d_0$, the Frobenius diameter is then simply 
\begin{align}\label{eq:dF-bound}
& d_F(\setS) = \sup_{\mH_{\vh} \in \setH} \sup_{\mX_{\vx} \in \setX} \|\mH_{\vh}\mX_{\vx}-\mH_{\vh_0}\mX_{\vx_0}\|_F   = \delta d_0.
\end{align}

The $\gamma_2$-functional can be directly related to the complexity of the space under consideration. To make this precise, we need to introduce a covering set. A set $\setC$ is an $\epsilon$-cover of the set $\setS$ in the norm $\|\cdot\|$ if every point in the (infinite) set is within an $\epsilon$ of the finite set $\setC$:
\begin{align*}
\sup_{\mS\in \setS} \sup_{\mC \in \setC} \|\mC-\mS\| \leq \epsilon.
\end{align*}
The covering number $N(\setS, \|\cdot\|, \epsilon)$ is the size of the smallest $\epsilon$-cover of $\setS$. 

We can bound the $\gamma_2$-functional in terms of covering numbers using Dudley's integral \cite{dudley1967sizes,talagrand2005generic}
\begin{align}\label{eq:Dudley-Integral}
\gamma_2(\setS,\|\cdot\|_{2 \rightarrow 2}) \leq c\int_0^{d_{2\rightarrow 2}(\setS)} N(\setS,\|\cdot\|_{2 \rightarrow 2},\epsilon)d\epsilon,
\end{align}
where $c$ is a known constant, and $d_{2 \rightarrow 2}(\setS)$ is the diameter of $\setS$ in the operator norm $\|\cdot\|_{2 \rightarrow 2}$. 
The distance between $\mH_{\tilde{\vh}}\mX_{\tilde{\vx}}-\mH_{\vh_0}\mX_{\vx_0}  \in \setC \subseteq \setS$, and $\mH_{\vh}\mX_{\vx}-\mH_{\vh_0}\mX_{\vx_0} \in \setS$ is  
\begin{align}\label{eq:distance}
& \|(\mH_{\tilde{\vh}}\mX_{\tilde{\vx}}-\mH_{\vh_0}\mX_{\vx_0}) - (\mH_{\vh}\mX_{\vx}-\mH_{\vh_0}\mX_{\vx_0})\|_{2 \rightarrow 2} \notag \\
& = \|\mH_{\tilde{\vh}}\mX_{\tilde{\vx}}-\mH_{\vh}\mX_{\vx}\|_{2 \rightarrow 2}\notag \\
&\leq \|\mH_{\tilde{\vh}}\|_{2 \rightarrow 2}\|\mX_{\tilde{\vx}}-\mX_{\vx}\|_{2 \rightarrow 2} + \|\mX_{\vx}\|_{2 \rightarrow 2}\|\mH_{\tilde{\vh}}-\mH_{\vh}\|_{2 \rightarrow 2} = \notag\\
&\sqrt{L} \left(\|\mF_M \tilde{\vh}\|_\infty \|\mC^{\otimes N}( \tilde{\vx}-\vx)\|_\infty +  \|\mC^{\otimes N}\vx\|_\infty\|\mF_M (\tilde{\vh}-\vh)\|_\infty\right)\notag \\
&\leq \frac{4\mu\sqrt{d_0}}{\sqrt{Q}} \|\tilde{\vx}-\vx\|_c + \frac{4\nu \sqrt{d_0} }{\sqrt{QN}} \|\tilde{\vh}-\vh\|_f.
\end{align}
The last inequality follows from the fact that $\mX_{\tilde{\vx}} \in \setX$, $\mH_{\vh} \in \setH$ implying that $\tilde{\vx} \in \set{N}_{\mu} \cap \setN_{d_0}$, and $\vh \in \setN_{\nu} \cap \setN_{d_0}$. From the distance measure in \eqref{eq:distance}, it is clear that setting the following norms to 
\begin{align}\label{eq:exotic-norms}
&\|\tilde{\vx}-\vx\|_{c}:=\sqrt{Q}\|\mC^{\otimes N}(\tilde{\vx}-\vx)\|_\infty \leq \frac{\epsilon}{2} \cdot \frac{\sqrt{Q}}{4\mu\sqrt{d_0}}, \notag\\
&\|\tilde{\vh}-\vh\|_{f} := \sqrt{L} \|\mF_M(\tilde{\vh}-\vh)\|_\infty \leq \frac{\epsilon}{2}\cdot \frac{\sqrt{QN}}{4\nu \sqrt{d_0}}
\end{align}
gives $\|\mH_{\tilde{\vh}}\mX_{\tilde{\vx}}-\mH_{\vh}\mX_{\vx}\|_{2 \rightarrow 2} \leq \epsilon.$ Precisely, if for every point $\mH_{\tilde{\vh}}\mX_{\tilde{\vx}}-\mH_{\vh_0}\mX_{\vx_0} \in \setS$, where $\mH_{\vh} \in \setH$, and $\mX_{\vx} \in \setX$, there exists an $\mX_{\tilde{\vx}} \in \setC_{\setX}$ such that $\|\mX_{\vx}- \mX_{\tilde{\vx}}\|_{2 \rightarrow 2} =  \|[\mC(\vx_n-\tilde{\vx}_n)]\|_{\infty} \leq \epsilon\sqrt{Q}/8\mu\sqrt{d_0}$ ($\setC_{\setX}$ is an $\epsilon\sqrt{Q}/8\mu\sqrt{d_0}$-cover of $\setX$ in $\|\cdot\|_c$ norm), and an $\mH_{\tilde{\vh}} \in \setC_{\setH}$ such that $\|\mH_{\vh} - \mH_{\tilde{\vh}}\|_{2 \rightarrow 2} = \sqrt{L} \|\mF(\vh-\tilde{\vh})\|_\infty \leq \epsilon\sqrt{QN}/8\nu\sqrt{d_0}$ ($\setC_{\setH}$ is an $\epsilon\sqrt{QN}/8\nu\sqrt{d_0}$-cover of $\setH$ in $\|\cdot\|_f$ norm) then from \eqref{eq:distance}, it is clear that the point $\mH_{\tilde{\vh}}\mX_{\tilde{\vx}}-\mH_{\vh_0}\mX_{\vx_0}$ obeys $\| (\mH_{\tilde{\vh}}\mX_{\tilde{\vx}}-\mH_{\vh_0}\mX_{\vx_0}) - (\mH_{\vh}\mX_{\vx}-\mH_{\vh_0}\mX_{\vx_0})\|_{2 \rightarrow 2}\leq \epsilon$. This implies that $\setC := \{\mH_{\tilde{\vh}}\mX_{\tilde{\vx}}-\mH_{\vh_0}\mX_{\vx_0} :  \mH_{\tilde{\vh}} \in \setC_{\setH}, \mX_{\tilde{\vx}} \in \setC_{\setX}\}$ is an $\epsilon$-cover of $\setS$ in $\|\cdot\|_{2 \rightarrow 2}$ norm, and 
\begin{align*}
& N(\setS,\|\cdot\|_{2 \rightarrow 2}, \epsilon) \leq \\
&  N\bigg(\setH, \|\cdot\|_c,\frac{\epsilon \sqrt{Q}}{8\mu\sqrt{d_0}}\bigg)\cdot N\bigg(\setX, \|\cdot\|_f,\frac{\epsilon\sqrt{QN}}{8\nu\sqrt{d_0}}\bigg) \leq \\
&  N\bigg(2\sqrt{d_0}B_2^{KN}, \|\cdot\|_c,\frac{\epsilon \sqrt{Q}}{8\mu\sqrt{d_0}}\bigg)\cdot N\bigg(2\sqrt{d_0}B_2^M, \|\cdot\|_f,\frac{\epsilon\sqrt{QN}}{8\nu\sqrt{d_0}}\bigg) \\
& = N\bigg(B_2^{KN}, \|\cdot\|_c,\frac{\epsilon \sqrt{Q}}{16\mu d_0}\bigg)\cdot N\bigg(B_2^M, \|\cdot\|_f,\frac{\epsilon\sqrt{QN}}{16\nu d_0}\bigg).
\end{align*}
We evaluate the Dudley integral as follows
\begin{align}\label{eq:gamma2-bound}
&\int_0^{d_{2\rightarrow 2}(\setS)} \sqrt{\log N(\setS,\|\cdot\|_{2 \rightarrow 2}, \epsilon)}d\epsilon \notag\\
&\leq
\int_0^{\tfrac{17\mu\nu d_0}{\sqrt{QN}}} \Bigg(\sqrt{\log N\bigg(B_2^{KN}, \|\cdot\|_c,\frac{\epsilon \sqrt{Q}}{16\mu d_0}\bigg)}\notag \\
&\qquad\qquad \qquad\qquad\qquad  + \sqrt{\log N\bigg(B_2^M, \|\cdot\|_f,\frac{\epsilon\sqrt{QN}}{16\nu d_0}\bigg)}\Bigg) d\epsilon\notag\\
&= \frac{16\mu d_0}{\sqrt{Q}} \int_{0}^{\frac{17\nu}{16\sqrt{N}}} \sqrt{\log N(B_2^{KN},\|\cdot\|_c,\epsilon)}d\epsilon +\notag \\
& \qquad \qquad\qquad \frac{16\nu d_0}{\sqrt{QN}} \int_0^{\frac{17}{16}\mu} \sqrt{\log N(B_2^M,\|\cdot\|_f,\epsilon)}d\epsilon\notag\\
&\leq \frac{16\mu d_0}{\sqrt{Q}} \sqrt{KN}\int_{0}^{\frac{2\nu}{\sqrt{N}}} \sqrt{\log N(B_1^{KN},\|\cdot\|_c,\epsilon)}d\epsilon + \notag\\
& \qquad \qquad \qquad \frac{16\nu d_0}{\sqrt{QN}} \sqrt{M}\int_0^{2\mu} \sqrt{\log N(B_1^M,\|\cdot\|_f,\epsilon)}d\epsilon\notag\\
& \qquad \lesssim \frac{\mu \nu_{\max}  d_0}{\sqrt{Q}} \sqrt{KN \log^4 (QN)} + \frac{\nu d_0}{\sqrt{QN}} \sqrt{M \log^4 L},
\end{align}
where the second last inequality follows from $B_2^{KN} \subseteq \sqrt{KN} B_1^{KN}$, and $B_2^L \subseteq \sqrt{L} B_1^L$, and finally the last inequality is the result of by now standard entropy calculations that can be found in, for example, \cite{krahmer2014suprema}, and Section 8.4 in \cite{rauhut2010compressive}. Combining this result with the Dudley's integral in \eqref{eq:Dudley-Integral} gives a bound on the $\gamma_2$ functional. We now have all the ingredients required in Theorem \ref{thm:Mendelson}. Recall that $\delta = \tfrac{\|\vh\vx^*-\vh_0\vx_0\|_F}{d_0} = \tfrac{\|\mH_{\vh}\mX_{\vx}-\mH_{\vh_0}\mX_{\vx_0}\|_F}{d_0}$. Observe that 
\begin{align*}
\nu^2 &= QN \frac{\|\mC^{\otimes N} \vx\|_\infty^2}{\|\vx\|_2^2} \leq \frac{Q \|\mC^{\otimes N} \|_\infty^2 \cdot N \|\vx\|_1^2}{\|\vx\|_2^2} \\
&\leq \frac{Q \|\mC^{\otimes N} \|_\infty^2 \cdot KN^2 \|\vx\|_2^2}{\|\vx\|_2^2} = \nu_{\max}^2 KN^2.
\end{align*} 
Using this fact, we have 
\begin{align}\label{eq:d2-lose-bound}
d^2_{2 \rightarrow 2}(\setS) \leq  \frac{17\mu^2\nu^2d_0^2}{QN} \leq \frac{\mu^2\nu_{\max}^2 KN^2}{QN}.
\end{align}
Upper bounds in \eqref{eq:dF-bound},\eqref{eq:gamma2-bound}, and \eqref{eq:d2-lose-bound} produce
\begin{align*}
& E \lesssim  d_0^2\Bigg[\left(\mu^2 \nu_{\max}^2\frac{ KN}{Q} + \nu^2\frac{M}{QN} \right) \log^4(QN+L) +\\
& \sqrt{\delta^2\left(\mu^2 \nu_{\max}^2\frac{ KN}{Q} + \nu^2\frac{M}{QN} \right) \log^4(QN+L)} +\delta \sqrt{\frac{\mu^2\nu^2_{\max}KN^2}{QN}}\Bigg], 
\end{align*}
Similarly, the \eqref{eq:dF-bound},\eqref{eq:gamma2-bound}, and \eqref{eq:d2-tight-bound} give
\begin{align*}
U \lesssim \frac{\mu^2\nu^2 d_0^2}{QN},  V &\lesssim \frac{\mu\nu d_0}{\sqrt{QN}} \Bigg( \frac{\mu\nu_{\max} d_0}{\sqrt{Q}} \sqrt{KN\log^4 (QN)} \\
&\qquad\qquad + \frac{\nu d_0}{\sqrt{QN}}\sqrt{M \log^4 L} + \delta d_0\Bigg), 
\end{align*}
Using the fact that $L \geq Q$, and choosing $QN$ as in \eqref{eq:sample-complexity}, and $t = \frac{1}{2} \xi d_0^2\delta^2$, the tail bound in Theorem \ref{thm:Mendelson} now gives
\begin{align*}
& \mathbb{P} \Bigg(\sup_{\mH_{\vh}\in \setH}\sup_{\mX_{\vx}\in \setX}\Bigg| \|(\mH_{\vh}\mX_{\vx}-\mH_{\vh_0}\mX_{\vx_0})\vr\|_2^2 -\\ &\|\mH_{\vh}\mX_{\vx}-\mH_{\vh_0}\mX_{\vx_0}\|_{F}^2 \Bigg| \geq \xi\delta^2\d_0^2  \Bigg)\leq 2 \exp\left(-c \xi^2\delta^2\frac{QN}{\mu^2\nu^2}\right),
\end{align*}
which completes the proof.

\subsection{Proof of Lemma \ref{lem:local-Delh-Delx-RIP}}\label{sec:local-Delh-Delx-RIP}
Just as $\Delta \vh$, and $\Delta \vx$ in \eqref{eq:Deltah-Deltax}, we define $\Delta \mH_{\vh} = \mH_{\vh} - \alpha \mH_{\vh_0}$, and $\Delta \mX_{\vx} = \mX_{\vx}-\bar{\alpha}^{-1}\mX_{\vx_0}$. Similar to \eqref{eq:distance-vectors-to-matrices}, one can also show that $\|\Delta \mH_{\vh} \mX_{\vx} + \mH_{\vh} \Delta \mX_{\vx}\|_F^2 = \|\Delta \vh \vx^* + \vh \Delta \vx^*\|_F^2 \leq 1.2\delta d_0$, where the last inequality is already shown in \eqref{eq:Dhx+hDx-fronorm}, and holds for $\delta \leq \varepsilon \leq 1/15$, where $\delta = \|\vh\vx^*-\vh_0\vx_0^*\|_F/d_0$. 

Using similar steps as laid out in the proof of Lemma \ref{lem:local-RIP}, the local-RIP in Lemma \ref{lem:local-Delh-Delx-RIP} reduces to showing that for a $0 < \xi < 1$, the following holds
\begin{align*}
& \sup_{\mH_{\vh} \in \setH} \sup_{\mX_{\vx} \in \setX}\Big| \|(\Delta \mH_{\vh} \mX_{\vx} + \mH_{\vh} \Delta \mX_{\vx})\vr\|_2^2 - \\
&\qquad\E \|(\Delta \mH_{\vh} \mX_{\vx} + \mH_{\vh} \Delta \mX_{\vx})\vr\|_2^2\Big| \leq \xi \delta^2d_0^2
\end{align*}
with high probability. Define 
\begin{align}\label{eq:setS-2}
\setS =\{\Delta \mH_{\vh} \mX_{\vx} + \mH_{\vh} \Delta \mX_{\vx} \ | \ \mH_{\vh}\in \setH, \mX_{\vx}\in \setX, (\vh, \vx) \in \setN_{\varepsilon}\},
\end{align}
where $\setH$, and $\setX$ are already defined in \eqref{eq:setX-setH}. Let $(\Delta \mH_{\vh} \mX_{\vx} + \mH_{\vh} \Delta \mX_{\vx}),$ and  $(\Delta \mH_{\tilde{\vh}} \mX_{\tilde{\vx}} + \mH_{\tilde{\vh}} \Delta \mX_{\tilde{\vx}})$ be the elements of  $\setS$, and observe that  $(\Delta \mH_{\vh} \mX_{\vx} + \mH_{\vh} \Delta \mX_{\vx})-(\Delta \mH_{\tilde{\vh}} \mX_{\tilde{\vx}} + \mH_{\tilde{\vh}} \Delta \mX_{\tilde{\vx}}) = (\Delta \mH_{\vh} - \Delta \mH_{\tilde{\vh}})\mX_{\tilde{\vx}} + \Delta \mH_{\vh} (\mX_{\vx}-\mX_{\tilde{\vx}}) + (\mH_{\vh}-\mH_{\tilde{\vh}})\Delta \mX_{\vx} + \mH_{\tilde{\vh}}(\Delta\mX_{\vx}-\Delta\mX_{\tilde{\vx}})$, which gives 
\begin{align*}
&\|(\Delta \mH_{\vh} \mX_{\vx} + \mH_{\vh} \Delta \mX_{\vx})-(\Delta \mH_{\tilde{\vh}} \mX_{\tilde{\vh}} + \mH_{\tilde{\vh}} \Delta \mX_{\tilde{\vx}})\|_{2 \rightarrow 2} \\
&\leq \|\mH_{\vh}-\mH_{\tilde{\vh}}\|_{2 \rightarrow 2} \|\mX_{\tilde{\vx}}\|_{2 \rightarrow 2}  + \|\Delta \mH_{\vh}\|_{2 \rightarrow 2}  \|\mX_{\vx}-\mX_{\tilde{\vx}}\|_{2 \rightarrow 2}\\
& \qquad +\|\mH_{\vh}-\mH_{\tilde{\vh}}\|_{2 \rightarrow 2} \|\Delta \mX_{\vx}\|_{2 \rightarrow 2}  + \|\mH_{\tilde{\vh}}\|_{2 \rightarrow 2}  \|\mX_{\vx}-\mX_{\tilde{\vx}}\|_{2 \rightarrow 2} = \\
&\sqrt{L}\big(\|\mF_M(\vh-\tilde{\vh})\|_\infty \|\mC^{\otimes N}\tilde{\vx}\|_\infty + \|\mF_M(\Delta \vh)\|_\infty  \|\mC^{\otimes N}(\vx-\tilde{\vx})\|_\infty\\
&+\|\mF_M(\vh-\tilde{\vh})\|_\infty \|\mC^{\otimes N} (\Delta\vx)\|_\infty +\|\mF_M\tilde{\vh}\|_\infty  \|\mC^{\otimes N}(\vx-\tilde{\vx})\|_\infty\big).
\end{align*}
 As it is clear form the definition of set $\setS$ that the index vectors $(\vh,\vx)$, and $(\tilde{\vh},\tilde{\vx})$ of the elements of $\setS$ lie in  $\setN_{d_0} \cap \setN_\mu \cap \setN_{\nu} \cap \setN_{\varepsilon}$, and by assumption $\varepsilon \leq 1/15$, therefore, we have $\sqrt{L}\|\mF_M\Delta\vh\|_\infty \leq 6 \mu \sqrt{d_0}$, and $\sqrt{QN}\|\mC^{\otimes N} \Delta \vx\|_\infty \leq 6 \nu \sqrt{d_0}$ using Lemma \ref{lem:local-regulaity-Del-norm-bounds}. This results in 
\begin{align*}
&\|(\Delta \mH_{\vh} \mX_{\vx} + \mH_{\vh} \Delta \mX_{\vx})-(\Delta \mH_{\tilde{\vh}} \mX_{\tilde{\vh}} + \mH_{\tilde{\vh}} \Delta \mX_{\tilde{\vx}})\|_{2 \rightarrow 2}  \\
&\leq  \frac{4\nu\sqrt{d_0}}{\sqrt{QN}} \sqrt{L}\|\mF_M(\vh-\tilde{\vh})\|_\infty+  6\mu\sqrt{d_0}\|\mC^{\otimes N}(\vx-\tilde{\vx})\|_\infty\\
& + \frac{6\nu\sqrt{d_0}}{\sqrt{QN}}  \sqrt{L}\|\mF_M(\vh-\tilde{\vh})\|_\infty +4\mu\sqrt{d_0} \|\mC^{\otimes N}(\vx-\tilde{\vx})\|_\infty \\
&=   \frac{10\nu\sqrt{d_0}}{\sqrt{QN}} \|\vh-\tilde{\vh}\|_f+ \frac{10\mu\sqrt{d_0}}{\sqrt{Q}} \|\vx-\tilde{\vx}\|_c,
\end{align*}
where the last equality follows by using the $\|\cdot\|_c$, $\|\cdot\|_f$ norms, defined earlier in \eqref{eq:exotic-norms}. Similar to the discussion before, this means that the $\epsilon$-cover of $\setS$ in \eqref{eq:setS-2} is obtained by the $\epsilon\sqrt{QN}/20\sqrt{d_0}\nu$-cover of $\setH$ in $\|\cdot\|_f$ norm, and $\epsilon\sqrt{Q}/20\mu\sqrt{d_0}$-cover of $\setX$ in $\|\cdot\|_c$ norm. With this fact in place the rest of the proof follows exactly the same outline as the proof of Lemma \ref{lem:local-RIP}. 

\subsection{Proof of Lemma \ref{lem:local-regulaity-Del-norm-bounds}}\label{sec:local-regulaity-Del-norm-bounds}
Recall that $\alpha_1 = \vh^*\vh_0/d_0$, which directly gives $|\alpha_1| \leq \|\vh\|\|\vh_0\|/d_0 \leq 2$ by using the Cauchy-Schwartz inequality, and the fact that $\vh \in \setN_{d_0}$.  In a similar manner, we can also show that $|\alpha_2| \leq 2$. Expand $\|\vh\vx^*-\vh_0\vx_0^*\|_F^2 = \delta^2d_0^2$ using \eqref{eq:difference-expansion} to obtain
	\[
	\delta^2d_0^2 = (\alpha_1\bar{\alpha}_2-1)^2d_0^2 +|\bar{\alpha}_2|^2 \|\tilde{\vh}\|_2^2 d_0 + |\alpha_1|^2\|\tilde{\vx}\|_2^2 d_0 + \|\tilde{\vh}\|_2^2\|\tilde{\vx}\|_2^2,
	\]
	which implies $|\alpha_1\bar{\alpha}_2-1| \leq \delta$. 
	 
	 The identities $\|\Delta \vh\|_2^2 \leq 6.1\delta^2d_0$, $\|\Delta \vx\|_2^2 \leq 6.1\delta^2d_0$, $\|\Delta \vh\|_2^2 \|\Delta \vx\|_2^2 \leq 8.4 \delta^4 d_0^2$, and $\sqrt{L}\|\mF_M\Delta\vh\|_\infty \leq 6 \mu \sqrt{d_0}$ are proved in Lemma 5.15 in \cite{li2016rapid}. We now prove that $\sqrt{QN} \max_n\|\mC (\Delta \vx_n) \|_\infty \leq 6\nu \sqrt{d_0}$.\\
	
	\textbf{Case 1}: $\|\vh\|_2 \geq \|\vx\|_2$, and $\alpha = (1-\delta_0)\alpha_1$. Observe that in this case 
	\begin{align*}
	|\alpha_2| &\leq \frac{\|\vx\|_2\|\vx_0\|_2}{d_0} \leq \frac{1}{\sqrt{d_0}}\sqrt{\|\vh\|_2\|\vx\|_2} \\
	&\leq \frac{1}{\sqrt{d_0}} \sqrt{\|\vh\vx^* - \vh_0\vx_0^*\|_F + \|\vh_0\vx_0^*\|_F} = \sqrt{1+\delta},
	\end{align*}
	where we used the fact that $\|\vh\vx^*-\vh_0\vx_0^*\|_F = \delta d_0$, and $\|\vh_0\|_2 = \|\vx_0\|_2 = \sqrt{d_0}$. Therefore, $\tfrac{1}{|(1-\delta_0)\alpha_1|} = \tfrac{|\alpha_2|}{|(1-\delta_0)\bar{\alpha}_2\alpha_1|} \leq \tfrac{\sqrt{1+\delta}}{|1-\delta_0||1-\delta|} \leq 2$, where the last inequality follows using our choice $\delta \leq \varepsilon \leq 1/15$, and $\delta_0 = \delta/10$. This gives us
	\begin{align*}
	&\max_n\sqrt{QN} \|\mC(\Delta\vx_n)\|_\infty \\
	&\leq \max_n\sqrt{QN}\|\mC \vx_n\|_\infty + \tfrac{1}{(1-\delta_0)|\alpha_1|} \max_n\sqrt{QN} \|\mC\vx_{n,0}\|_\infty\\
	&\leq 4\nu\sqrt{d_0}+2\nu \sqrt{d_0} \leq 6\nu \sqrt{d_0}.
	\end{align*}
	
	\textbf{Case 2}: $\|\vh\|_2 < \|\vx\|_2$, and $\alpha = \frac{1}{(1-\delta_0)\bar{\alpha}_2}$. Since $|\alpha_2| \leq 2$, we have
	\begin{align*}
	&\max_n\sqrt{QN}\|\mC(\Delta\vx_n)\|_\infty \\
	&\leq \max_n\sqrt{QN}\|\mC \vx_n\|_\infty + (1-\delta_0)|\alpha_2| \max_n\sqrt{QN} \|\mC\vx_{n,0}\|_\infty\\
	&\leq 4\nu\sqrt{d_0}+2(1-\delta_0)\nu \sqrt{d_0} \leq 6\nu \sqrt{d_0}.
	\end{align*}
	This completes the proof. 
	
\subsection{Proof of Lemma \ref{lem:local-regularity-G}}\label{sec:local-regularity-G}
The proof is adapted from Lemma 5.17 in \cite{li2016rapid}.

\textbf{Case 1}: $\|\vh\|_2 \geq \|\vx\|_2$, and $\alpha = (1-\delta_0) \alpha_1.$ Using $\delta \leq \varepsilon \leq 1/15$, we have the following easily verifiable (Lemma 5.17 in \cite{li2016rapid})  identities $\<\vh,\Delta \vh\> \geq \delta_0 \|\vh\|_2^2, ~ \|\vx\|_2^2 < 2d$, and also 
\begin{align}\label{eq:inner-prod-bound-case1}
&\Re{\<\vf_{\ell}\vf_{\ell}^*\vh,\Delta \vh\>} \geq \frac{2d\mu^2}{L}   \ \text{when}  \ L \frac{|\vf_{\ell}^*\vh|^2}{8d\mu^2} > 1, \notag\\
& \Re{\<\vc_{q,n}\vc_{q,n}^*\vx,\Delta \vx\>} \geq \frac{d\nu^2}{QN} \ \text{when} \ QN \frac{|\vc_{q,n}^*\vx|^2}{8d\nu^2} > 1.
\end{align}
For example, the last identity an simply be proven as follows
\begin{align*}
&\Re{\<\vc_{q,n}\vc_{q,n}^*\vx, \vx-\bar{\alpha}^{-1}\vx_0\>}\\
&\geq |\vc_{q,n}^*\vx|^2 - \frac{1}{(1-\delta_0)|\alpha_1|}|\vc_{q,n}^*\vx||\vc_{q,n}^*\vx_0|\\
&= |\vc_{q,n}^*\vx|^2 - \frac{|\alpha_2|}{(1-\delta_0)|\alpha_1\bar{\alpha}_2|}|\vc_{q,n}^*\vx||\vc_{q,n}^*\vx_0|.
\end{align*}
Using Lemma \ref{lem:local-regulaity-Del-norm-bounds}, we have that $|\alpha_2| \leq 2$, $|\alpha_1\bar{\alpha_2} -1 | \leq \delta$, and the fact that $(\vh,\vx) \in \setN_\mu \cap \setN_\nu \cap \setN_{d_0}$, we further obtain 
\begin{align*}
&\text{Re}(\<\vc_{q,n}\vc_{q,n}^*\vx, \vx-\bar{\alpha}^{-1}\vx_0\>)\\
&\geq |\vc_{q,n}^*\vx|^2  - \frac{2}{(1-\delta)(1-\delta_0)}|\vc_{q,n}^*\vx||\vc_{q,n}^*\vx_0|\\
& \geq \frac{8d\nu^2}{QN} - \frac{2}{(1-\delta)(1-\delta_0)}\sqrt{\frac{8d\nu^2}{QN}}\cdot\sqrt{\frac{10d\nu^2 }{9QN}}\geq \frac{d\nu^2}{QN},
\end{align*}
where the last inequality is obtained by using $|\vc_{q,n}^*\vx_0| = \tfrac{\nu \sqrt{d_0}}{\sqrt{QN}}$, and $0.9d_0 \leq d \leq 1.1 d_0$. 

\textbf{Case 2}: $\|\vh\|_2 < \|\vx\|_2$, $\alpha = \tfrac{1}{(1-\delta_0)\bar{\alpha}_2}$. Given $\delta \leq \varepsilon \leq 1/15$, one can show (Lemma 5.17 in \cite{li2016rapid}) that $\<\vx,\Delta \vx\> \geq \delta_0 \|\vx\|_2^2, ~ \|\vh\|_2^2 < 2d$, and also 
\begin{align}\label{eq:inner-prod-bound-case2}
&\text{Re}(\<\vf_{\ell}\vf_{\ell}^*\vh,\Delta \vh\>) \geq \frac{d\mu^2}{L}  \ \text{when} \ L \frac{|\vf_{\ell}^*\vh|^2}{8d\mu^2} > 1,\notag\\
 &\Re{\<\vc_{q,n}\vc_{q,n}^*\vx,\Delta \vx\>}\geq \frac{2d\nu^2}{QN} \ \text{when} \ QN\frac{ |\vc_{q,n}^*\vx|^2}{8d\nu^2} > 1.
\end{align}
Expanding gradients, it is easy to see that 
\begin{align}\label{eq:gGh-gGx-inner-prod}
&\Re{\<\nabla G_{\vh}, \Delta \vh \> + \<\nabla G_{\vx}, \Delta \vx \>} = \frac{\rho}{d} \Bigg( G_0^\prime\Big(\frac{\|\vh\|_2^2}{2d}\Big) \Re{\< \vh,\Delta \vh\>}+ \notag\\
& G_0^\prime\Big(\frac{\|\vx\|_2^2}{2d}\Big) \Re{\< \vx,\Delta \vx\>} + G_0^\prime \Big( \frac{L|\vf_{\ell}^*\vh|^2}{8d\mu^2} \Big) \frac{L}{4\mu^2} \Re{\<\vf_{\ell}\vf_{\ell}^*\vh, \Delta \vh\>}\notag\\
& + G_0^\prime \Big( \frac{QN|\vc_{q,n}^*\vx|^2}{8d\nu^2} \Big)\frac{QN}{4\nu^2} \Re{\<\vc_{q,n}\vc_{q,n}^*\vx, \Delta \vx\>} \Bigg). 
\end{align}
We can now conclude the following inequality for both of the above cases
\begin{align*}
G_0^\prime \left(\frac{\|\vh\|_2^2}{2d}\right) \< \vh,\Delta \vh\> \geq \frac{\delta d}{5}G_0^\prime \left(\frac{\|\vh\|_2^2}{2d}\right).
\end{align*}
To see this, note that it holds trivially when $\|\vh\|_2^2 < 2d$, and in the contrary scenario when $\|\vh\|_2^2 \geq 2d$, Case-2 is not possible, and Case-1 always has $\<\vh,\Delta\vh\> \geq \delta_0\|\vh\|_2^2$, and hence $\<\vh,\Delta\vh\> \geq \delta d /5$ shows that the inequality holds. Similarly, we can also argue that 
\begin{align*}
 ~ G_0^\prime \left(\frac{\|\vx\|_2^2}{2d}\right) \< \vx,\Delta \vx\> \geq \frac{\delta d}{5}G_0^\prime \left(\frac{\|\vx\|_2^2}{2d}\right).
\end{align*}
Moreover, the following inequalities 
\begin{align}
& G_0^\prime \left( \frac{L|\vf_{\ell}^*\vh|^2}{8d\mu^2} \right) \frac{L}{4\mu^2} \text{Re}(\<\vf_{\ell}\vf_{\ell}^*\vh, \Delta \vh\> )\geq \frac{d}{4} G_0^\prime \left( \frac{L|\vf_{\ell}^*\vh|^2}{8d\mu^2} \right), \label{eq:G0-h-coherence-bound}\\
& G_0^\prime \left( \frac{QN|\vc_{q,n}^*\vx|^2}{8d\nu^2} \right) \frac{QN}{4\nu^2} \text{Re}(\<\vc_{q,n}\vc_{q,n}^*\vx, \Delta \vx\> ) \notag \\
& \qquad\qquad \qquad\geq \frac{d}{4} G_0^\prime \left( \frac{QN|\vc_{q,n}^*\vx|^2}{8d\nu^2} \right)\label{eq:G0-x-coherence-bound}
\end{align}
hold in general. Again to see this, note that both hold trivially when $QN|\vc_{q,n}^*\vx|^2 > 8d\nu^2$, and $L |\vf_\ell^*\vh|^2 > 8d\mu^2$, and in the contrary case, we have from the bounds \eqref{eq:inner-prod-bound-case1}, and \eqref{eq:inner-prod-bound-case2} that the \eqref{eq:G0-h-coherence-bound}, and \eqref{eq:G0-x-coherence-bound} above hold in Case 1 and 2. Plugging these results in \eqref{eq:gGh-gGx-inner-prod} proves the lemma. 
\subsection{Proof of Lemma \ref{lem:smoothness-CL}}\label{sec:smoothness-CL}
Given that $\vz = (\vh,\vx), \ \vz + \vw = (\vh+\vu, \vx+\vv) \in \setN_{\tf} \cap \setN_{\varepsilon}$, and lemma below, it follows that $\vz+\vw \in \setN_{d_0}\cap \setN_{\mu}\cap \setN_{\nu}$.  
	\begin{lem}
		There holds $\setN_{\tf} \subset \setN_{d_0} \cap \setN_{\mu}\cap \setN_{\nu}$ under local-RIP, and noise robustness lemmas in Section \ref{sec:Key-conditions}.
	\end{lem}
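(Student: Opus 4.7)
The strategy is direct. Since $F(\vh,\vx) = \|\setA(\vh\vx^*) - \hat\vy\|_2^2 \geq 0$ everywhere, on $\setN_{\tf}$ the regularizer satisfies
\[
G(\vh,\vx) \;=\; \tf(\vh,\vx) - F(\vh,\vx) \;\leq\; \tfrac{1}{3}\varepsilon^2 d_0^2 + \|\ve\|_2^2.
\]
Because $G$ is a sum of four nonnegative summands, each summand is bounded by the same quantity; dividing by $\rho \geq d^2 + \|\ve\|_2^2$ produces a pointwise upper bound on each $G_0(\cdot)$ argument, and the three neighborhood inclusions then follow by inverting the clipping function $G_0(z) = \max\{z-1,0\}^2$ (namely, $G_0(z) \leq c$ forces $z \leq 1+\sqrt{c}$).

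For the magnitude constraints, the first and second summands give
\[
G_0\!\left(\tfrac{\|\vh\|_2^2}{2d}\right) \;\leq\; \frac{\tfrac{1}{3}\varepsilon^2 d_0^2 + \|\ve\|_2^2}{\rho},
\]
and the analogous bound for $\|\vx\|_2^2/(2d)$. Combined with $\rho \geq d^2 + \|\ve\|_2^2$, $\varepsilon \leq 1/15$, and $0.9 d_0 \leq d \leq 1.1 d_0$, the right-hand side sits below $(2d_0/d - 1)^2$; inversion yields $\|\vh\|_2^2 \leq 4 d_0$ and $\|\vx\|_2^2 \leq 4 d_0$, i.e. $(\vh,\vx) \in \setN_{d_0}$.

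The coherence inclusions are identical in spirit. For every $\ell \in [L]$ the third summand gives $G_0(L|\vf_\ell^*\vh|^2/(8d\mu^2)) \leq \rho^{-1}(\tfrac{1}{3}\varepsilon^2 d_0^2 + \|\ve\|_2^2)$, and inversion turns this into $\sqrt{L}|\vf_\ell^*\vh| \leq 4\mu\sqrt{d_0}$ for every $\ell$, hence $\sqrt{L}\|\mF_M\vh\|_\infty \leq 4\mu\sqrt{d_0}$ by taking a supremum; this is $(\vh,\vx) \in \setN_\mu$. The fourth summand, running over $(q,n) \in [Q]\times[N]$, symmetrically yields $\sqrt{QN}\|\mC^{\otimes N}\vx\|_\infty \leq 4\nu\sqrt{d_0}$, i.e. $(\vh,\vx) \in \setN_\nu$.

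The main obstacle is constant bookkeeping: the ratio $(\tfrac{1}{3}\varepsilon^2 d_0^2 + \|\ve\|_2^2)/\rho$ must fit below $(2d_0/d - 1)^2 \geq (9/11)^2$ in the magnitude case (and below a comparable constant in the coherence case), and one checks that this is guaranteed by $\varepsilon \leq 1/15$ together with the noise level permitted by Theorem \ref{thm:convergence}, under which $\|\ve\|_2^2 = \setO(\sigma^2 d_0^2)$ with $\sigma$ of order $1$. Local-RIP and noise robustness are invoked only as background: they justify the admissible choice $\rho \geq d^2 + \|\ve\|_2^2$ and the smallness of $\|\setA^*(\ve)\|_{2\rightarrow 2}$ used elsewhere in the proof; the inclusion itself is a direct algebraic consequence of $F \geq 0$ and the clipping structure of $G_0$.
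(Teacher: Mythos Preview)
Your approach is correct and is exactly the argument the paper defers to (Lemma~5.5 in \cite{li2016rapid}): bound $G$ by $\tf - F \leq \tfrac{1}{3}\varepsilon^2 d_0^2 + \|\ve\|_2^2$ using $F\geq 0$, divide through by $\rho$, and invert $G_0$ termwise.

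Two small clarifications. First, the reason local-RIP and noise robustness appear in the hypothesis is that they underlie the conclusion $0.9d_0\leq d\leq 1.1d_0$ of Theorem~\ref{thm:initialization}, which you rely on in the constant checks; the inequality $\rho \geq d^2 + \|\ve\|_2^2$ is a design choice, not something to be justified. Second, your bookkeeping should use the stronger bound $\rho \geq d^2 + 2\|\ve\|_2^2$ (as stated in Lemmas~\ref{lem:smoothness-CL} and~\ref{lem:local-regularity-G}) rather than $d^2+\|\ve\|_2^2$: with the factor $2$ the ratio $(\tfrac{1}{3}\varepsilon^2 d_0^2+\|\ve\|_2^2)/\rho$ stays below $1/2<(2d_0/d-1)^2$ uniformly in $\|\ve\|_2^2$, so no side assumption that ``$\sigma$ is of order $1$'' is needed---the paper places no such constraint on $\sigma$.
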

\noindent Proof of this lemma follows from exact same line of reasoning as the proof of Lemma 5.5 in \cite{li2016rapid}. 
	
	Using the gradient $\nabla F_{\vh}$ expansion in \eqref{eq:gF-def}, we estimate the upper bound of $\|\nabla F_{\vh}(\vz+\vw)-\nabla F_{\vh} (\vz)\|_2$. A straight forward calculation gives 
		\begin{align*}
		&\gfh(\vz+\vw)-\gfh(\vz)  = \setA^*\setA(\vu\vx^*+\vh\vv^*+\vu\vv^*)\vx+\\
		&\qquad \quad\setA^*\setA((\vh+\vu)(\vx+\vv)^*-\vh_0\vx_0^*)\vv-\setA^*(\ve)\vv.
		\end{align*}
		Note that $\vz,\vz+\vw \in \setN_{d_0}$ directly implies $
		\|\vu\vx^*+\vh\vv^*+\vu\vv^*\|_{F} \leq \|\vu\|_2\|\vx\|_2+\|\vh+\vu\|_2\|\vv\|_2 \leq 2 \sqrt{d_0}(\|\vu\|_2+\|\vv\|_2),$ where $\|\vh+\vu\|_2 \leq 2 \sqrt{d_0}$. Moreover, $\vz+\vw \in \setN_{\varepsilon}$ implies $\|(\vh+\vu)(\vx+\vv)^*-\vh_0\vx_0^*\|_F \leq \varepsilon d_0.$ Using $\|\vx\|_2 \leq 2 \sqrt{d_0}$ together with $\|\setA^*(\ve)\|_{2 \rightarrow 2}\leq \varepsilon d_0$, which follows from Lemma \ref{lem:noise-stability}, gives
		\begin{align}\label{eq:grad-Fh-zw-z}
		&\|\gfh(\vz+\vw)-\gfh(\vz)\|_2 \leq 4d_0\|\setA\|_{2 \rightarrow 2}^2 (\|\vu\|_2+\|\vv\|_2) + \notag \\
		& \varepsilon d_0 \|\setA\|_{2 \rightarrow 2}^2 \|\vv\|_2 + \varepsilon d_0 \|\vv\|_2\leq 5d_0\|\setA\|_{2 \rightarrow 2}^2 (\|\vu\|_2+\|\vv\|_2).
		\end{align}
		In a similar manner, we can show that 
		\begin{align}\label{eq:grad-Fx-zw-z}
		\|\nabla F_{\vx}(\vz+\vw)-\nabla F_{\vx} (\vz)\|_2\leq 5d_0 \|\setA\|_{2 \rightarrow 2}^2 (\|\vu\|_2+\|\vv\|_2).
		\end{align}
		 Plugging in the gradient expressions from \eqref{eq:ghG-def}, we have
		\begin{align}\label{eq:grad-G-zw-z}
		&\|\nabla G_{\vh} (\vz+\vw) - \nabla G_{\vh}(\vz)\|_2\leq \notag \\
		& \frac{\rho}{2d}\left| G_0^\prime \left(\frac{\|\vh+\vu\|_2^2}{2d}\right)-G_0^\prime \left(\frac{\|\vh\|_2^2}{2d}\right)\right|\|\vh+\vu\|_2+\notag\\
		&\frac{\rho}{2d}\left| G_0^\prime \left(\frac{\|\vh\|_2^2}{2d}\right)\right|\|\vu\|_2 + \frac{L\rho}{8d\mu ^2 }\Bigg\| \sum_{\ell} \notag\\
		&\underbrace{\Big[G_0^\prime \left( \frac{L |\vf_{\ell}^*(\vh+\vu)|^2}{8d \mu^2}\right)\vf_{\ell}^* (\vh+\vu)-G_0^\prime \left( \frac{L |\vf_{\ell}^*\vh|^2}{8d \mu^2}\right) \vf_{\ell}^* \vh\Big]}_{\alpha_\ell}\vf_{\ell}\Bigg\|_2.
		\end{align}
		Begin by noting that $G_0^\prime(\vz) \leq 2|\vz|$, and for any $\vz_1, \vz_2 \in \R$, it holds that $|G_0^\prime(\vz_1)-G_0^\prime(\vz_2)| \leq 2|\vz_1-\vz_2|$; moreover,  $(\vh+\vu) \in \setN_{d_0}$, and simplifying using the triangle inequality, we obtain
		\begin{align}\label{eq:grad-G-hu-h}
		&\left| G_0^\prime \left( \frac{\|\vh+\vu\|_2^2}{2d}\right) - G_0^\prime \left(\frac{\|\vh\|_2^2}{2d}\right)\right| \leq \frac{\|\vh+\vu\|_2+\|\vx\|_2}{d}\|\vu\|_2 \leq\notag\\
		& \qquad \frac{4 \sqrt{d_0}}{d}\|\vu\|_2, ~ \text{and}~ G_0^\prime \left(\frac{\|\vh\|_2^2}{2d}\right) \leq 2 \frac{\|\vh\|_2^2}{2d} \leq 4\frac{d_0}{d}.
		\end{align}
		Using same identities as above, and $\vh,\vh+\vu \in \setN_{\mu}$,  we have
		\begin{align*}
		& G_0^\prime \left(\frac{L|\vf_\ell^*(\vh+\vu)|^2}{8d\mu^2}\right)-G_0^\prime \left(\frac{L|\vf_\ell^*\vh|^2}{8d\mu^2}\right) \leq  \frac{L}{4d\mu^2} (|\vf_\ell^*(\vh+\vu)|+\\
		& \qquad |\vf_\ell^*\vh|)|\vf_\ell^*\vu| \leq \frac{L}{4d \mu^2} \cdot \frac{8\mu\sqrt{d_0}}{\sqrt{L}}  |\vf_\ell^*\vu|, 
		\end{align*}
		and 
		\[
		G_0^\prime \left(\frac{L |\vf_\ell^*(\vh+\vu)|^2}{8d\mu^2}\right) \leq 2 \frac{L |\vf_\ell^*(\vh+\vu)|^2}{8d\mu^2} \leq 4\frac{d_0}{d}.
		\]
		We can now use above two displays to obtain $\alpha_\ell = \pm \frac{L}{4d \mu^2}(\vf_\ell^* (\vh+\vu) + \vf_\ell^*\vu) \vf_\ell^*\vh \vf_\ell^*\vu + 4\frac{d_0}{d} \vf_\ell^*\vu$, which eventually gives 
		\begin{align}\label{eq:sum-alpha-ell-f-ell}
		& \big\|\sum_{\ell}\alpha_\ell \vf_\ell \big\|_2 \leq \notag\\
		& \max_\ell\left(\frac{L}{4d\mu^2}\left|(\vf_\ell^* (\vh+\vu) + \vf_\ell^*\vu) \vf_\ell^*\vh\right|+4\frac{d_0}{d}\right) \left\|\sum_{\ell} \vf_\ell^*\vu \vf_\ell\right\|_2\notag \\
		& = \frac{12d_0}{d} \|\vu\|_2,
		\end{align}
		 where we used the fact that $\sum_{\ell} \vf_\ell\vf_\ell^* = \mI$. Finally, using $\vh+\vu \in \setN_{d_0}$, and plugging \eqref{eq:sum-alpha-ell-f-ell}, and \eqref{eq:grad-G-hu-h} in \eqref{eq:grad-G-zw-z} gives us
		\begin{align}\label{eq:grad-Gh-zw-z}
		\| \nabla G_{\vh} (\vz+\vw) - \nabla G_{\vh}(\vz)\|_2 \leq 5 \rho\frac{d_0}{d^2}\|\vu\|_2 + \frac{3d_0L\rho}{2d^2\mu^2}  \|\vu\|_2.
		\end{align}
		In an exactly similar manner, we have 
		\begin{align*}
		&\|\nabla G_{\vx} (\vz+\vw) - \nabla G_{\vx}(\vz)\|_2\leq \frac{\rho}{2d} \Bigg(\\
		& \left| G_0^\prime \left(\frac{\|\vx+\vv\|_2^2}{2d}\right)-G_0^\prime \left(\frac{\|\vx\|_2^2}{2d}\right)\right|\|\vx+\vv\|_2+\left| G_0^\prime \left(\frac{\|\vx\|^2}{2d}\right)\right|\|\vv\|_2\Bigg)\\
		&\qquad + \frac{QN\rho}{8d\nu ^2 }\Bigg\| \sum_{q,n}\beta_{q,n}\vc_{q,n}\Bigg\|_2,
		\end{align*}
		where
		\begin{align*}
		\beta_{q,n} &=  \Bigg[G_0^\prime \left( \frac{QN |\vc_{q,n}^*(\vx+\vv)|^2}{8d \nu^2}\right)\vc_{q,n}^* (\vx+\vv)-\\
		& \qquad\qquad G_0^\prime \left( \frac{QN|\vc_{q,n}^*\vx|^2}{8d \nu^2}\right) \vc_{q,n}^* \vx\Bigg],
		\end{align*}
		and one can show using the facts that $\vx,\vx+\vv$ are the members of the set $\setN_{d_0} \cap \setN_{\nu}$, $\underset{q,n}{\sum} \vc_{q,n}\vc_{q,n}^* = \mI$, and the approach similar to obtain the bound \eqref{eq:grad-Gh-zw-z} that 
		\begin{align}\label{eq:grad-Gx-zw-z}
		\| \nabla G_{\vx} (\vz+\vw) - \nabla G_{\vx}(\vz)\|_2 \leq 5 \rho\frac{d_0}{d^2}\|\vv\|_2 + \frac{3d_0QN\rho}{2d^2\nu^2}  \|\vv\|_2.
		\end{align}
	 Using \eqref{eq:grad-Fh-zw-z}, \eqref{eq:grad-Fx-zw-z}, \eqref{eq:grad-Gh-zw-z}, and \eqref{eq:grad-Gx-zw-z} together with the fact that $\nabla \tf(\vz) = (\nabla F_{\vh} (\vz) + \nabla G_{\vh}(\vz), \nabla F_{\vx} (\vz) + \nabla G_{\vx}(\vz)),$ and using $\|\vu\|_2+\|\vv\|_2 \leq \sqrt{2}\|\vw\|_2$, we obtain 
	 \begin{align*}
	 &\|\gtf(\vz+\vw)-\gtf(\vz)\|_2 \leq \\
	 & \sqrt{2}d_0\Big[10\|\setA\|_{2 \rightarrow 2}^2+\frac{\rho}{d^2}\Big( 5+ \frac{3L}{2\mu^2} + \frac{3QN}{2\nu^2} \Big)\Big]\|\vw\|_2.
	 \end{align*}

\subsection{Proof of Lemma \ref{lem:noise-stability}}\label{sec:noise-stability}
We begin by controlling the operator norm of the linear map $\setA$ in \eqref{eq:linear-map}, where $\vf_\ell^*$, $\hat{\vc}_{\ell,n}^*$ are the rows of $\mF_M$, and $\sqrt{L}(\mF_Q\mR_n \mC)^{\otimes N}$, respectively. It is easy to see that $\|\setA\| = \max_{\ell,n} \|\vf_\ell\hat{\vc}_{\ell,n}^*\|_F $, which follows from the fact that $\<\vf_\ell\hat{\vc}_{\ell,n}^*,\vf_{\ell^\prime}\hat{\vc}_{\ell^\prime,n^\prime}^*\> = 0$ whenever $\ell \neq \ell^\prime$ or $n \neq n^\prime$. Since $\|\vf_\ell\hat{\vc}_{\ell,n}^*\|_F = \|\hat{\vc}_{\ell,n}\|_2$, we only require an upper bound on $\|\hat{\vc}_{\ell,n}\|_2$. 
	
As introduced earlier, $\mR_n = \text{diag}(\vr_n)$, where $\vr_n$ is a $Q$-vector of Rademacher random variables. Defining $\vc_{q,n}^*$ as the rows of $\mC^{\otimes N}$, we can write $\hat{\vc}_{\ell,n}^*$ as the random sum
	\begin{align*}
	\hat{\vc}_{\ell,n} = \sqrt{L}\sum_{q=1}^Q f_{\ell}[q]r_n[q]\vc_{q,n},
	\end{align*}
	where $r_n[q]$ is the $q$th entry of $\vr_n$. The upper bound on $\|\hat{\vc}_{\ell,n}\|_2$ can now be obtained by an application of Proposition \ref{prop:conc_ineq} below. The sequence $\{\mZ_k\}$ in the statement of Proposition \ref{prop:conc_ineq} in this case is $\sqrt{L}\{f_{\ell}[q]r_n[q]\vc_{q,n}\}_{q=1}^Q$. Using the identities, $\sum_{q=1}^Q \vc_{q,n}\vc_{q,n}^* = \mI$, and $\sum_{q=1}^Q \|\vc_{q,n}\|_2^2 = K$, which follows from the fact that $Q \times K$ matrix $\mC$ has orthonormal columns, i.e., $\mC^*\mC = \mI$, a simple calculation shows that the variance $\sigma^2_Z$ in \eqref{eq:variance-conc-ineq} is $\sigma_Z^2 \leq K+1$. Choosing $ t^2 = \alpha K \log(LN)$, and using the bound in \eqref{eq:conc-ineq-bound} results in 
	\begin{align}\label{eq:hatbln-2norm-bound}
	\max_{\ell,n}\|\hat{\vc}_{\ell,n}\|_2  \leq \sqrt{\alpha K\log(LN)}
	\end{align}
	with probability at least $1-\setO(LN^{-\alpha})$, where $\alpha \geq 1$ is a free parameter. This proves the first claim in the statement of the lemma. 
	
	As for the second claim, we begin by writing the vector $\setA^*(\ve)$ as a sum of random matrices
	\begin{align*}
	\setA^*(\ve) = \sum_{\ell=1}^L \sum_{n=1}^N  \hat{e}_n[\ell]\hat{\vc}_{\ell,n}\vf^*_\ell =  \frac{\sigma d_0}{\sqrt{LN}} \sum_{\ell=1}^L\sum_{n=1}^N g_n[\ell] \hat{\vc}_{\ell,n}\vf^*_\ell,
	\end{align*}
	where the second equality follows by rewriting the Gaussian random variables $\hat{e}_n[\ell]$ as a scaling of the standard Gaussian random variables $g_n[\ell] \sim \text{Normal}(0,\tfrac{1}{2}) + \iota \text{Normal}(0,\tfrac{1}{2})$. 
	We employ the matrix concentration inequality in Proposition \ref{prop:conc_ineq} to control the operator norm of the random matrix above. The summand matrices $\{\mZ_k\}$ in Proposition \ref{prop:conc_ineq} in this case are simply $\{\hat{\vc}_{\ell,n}\vf^*_\ell\}_{\ell,n}$. The computation of  variance in \eqref{eq:variance-conc-ineq} now reduces to 
	\begin{align*}
	\sigma^2_Z &= \frac{\sigma^2 d_0^2}{LN} \max\Bigg\{\left\|\sum_{\ell=1}^L \sum_{n=1}^N \|\vf_\ell\|_2^2 \hat{\vc}_{\ell,n} \hat{\vc}_{\ell,n}^*\right\|_{2 \rightarrow 2},\\
	&\qquad \qquad\qquad \qquad \qquad  \left\|\sum_{\ell=1}^L \sum_{n=1}^N \|\hat{\vc}_{\ell,n}\|_2^2 \vf_{\ell} \vf_{\ell}^*\right\|_{2 \rightarrow 2} \Bigg\}.
	\end{align*}
	Recall that $\vf_\ell^*$, $\hat{\vc}_{\ell,n}^*$ are the rows of $\mF_M$, and $\sqrt{L}(\mF_Q\mR_n \mC)^{\otimes N}$, respectively, therefore, 
	\begin{align*}
	&\|\vf_\ell\|_2^2 = \frac{M}{L}, \ \sum_{\ell=1}^L \vf_\ell\vf_\ell^* = \mI,\\
	&\sum_{\ell=1}^L \sum_{n=1}^N \hat{\vc}_{\ell,n} \hat{\vc}_{\ell,n}^* = L (\mC^*\mR_n^*(\mF_Q)^*\mF_Q \mR_n\mC)^{\otimes N} = L \mI_{KN\times KN}. 
	\end{align*}
	Using the above display together with \eqref{eq:hatbln-2norm-bound}, the variance in this case is upper bounded as 
	\begin{align*}
	\sigma^2_Z \leq \frac{\sigma^2d_0^2}{LN} \max \left( M, \alpha KN \log(LN) \right). 
	\end{align*}
	Choosing $t = \tfrac{2\varepsilon d_0}{50}$, and 
	\[
	LN \geq \frac{\sigma^2}{\varepsilon^2} c^\prime_{\alpha} \max(M,\alpha KN \log(LN))\log(LN)
	\]
	 and using the inequality \eqref{eq:conc-ineq-bound} in Proposition \ref{prop:conc_ineq} proves the claim. 

\begin{prop}
	[Corollary 4.2 in \cite{tropp12us}]\label{prop:conc_ineq} Consider a finite sequence $\{\mZ_k\}$ of fixed matrices with dimensions $d_1 \times d_2$, and let $\{g_k\}$ be a finite sequence of independent Gaussian or Rademacher random variables. Define the variance 
	\begin{align}\label{eq:variance-conc-ineq}
	\sigma_Z^2 := \max \left\{ \left\|\sum_k \mZ_k\mZ_k^*\right\|_{2 \rightarrow 2}, \left\|\sum_k \mZ_k^*\mZ_k\right\|_{2 \rightarrow 2} \right\}. 
	\end{align}
	Then for all $t \geq 0$
	\begin{align}\label{eq:conc-ineq-bound}
	\mathbb{P} \left(\left\| \sum_k g_k \mZ_k\right\|_{2 \rightarrow 2} \geq \ \ t \right) \leq (d_1+d_2)\mathrm{e}^{-t^2/2\sigma_Z^2}.
	\end{align}
\end{prop}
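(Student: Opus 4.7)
The statement is the classical matrix Rademacher/Gaussian series bound, and my plan is to prove it via the Hermitian dilation together with the matrix Laplace transform method, as developed by Ahlswede--Winter and refined by Tropp.

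My first step is to reduce to the Hermitian case. Define the dilation
\[
\mathcal{H}(\mZ_k) \;=\; \begin{bmatrix} \mathbf{0} & \mZ_k \\ \mZ_k^* & \mathbf{0} \end{bmatrix} \;\in\; \C^{(d_1+d_2)\times(d_1+d_2)},
\]
so that $\mathcal{H}(\mZ_k)$ is Hermitian and $\|\mathcal{H}(\mZ_k)\|_{2\to 2} = \lambda_{\max}(\mathcal{H}(\mZ_k)) = \|\mZ_k\|_{2\to 2}$. Writing $\mS := \sum_k g_k \mZ_k$, the event $\{\|\mS\|_{2\to 2}\geq t\}$ coincides with $\{\lambda_{\max}(\sum_k g_k \mathcal{H}(\mZ_k))\geq t\}$, so it suffices to control the top eigenvalue of the Hermitian sum $\mY := \sum_k g_k \mathcal{H}(\mZ_k)$.

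Next I apply the matrix Chernoff/Laplace transform bound: for any $\theta>0$,
\[
\PP(\lambda_{\max}(\mY)\geq t) \;\leq\; e^{-\theta t}\,\E\trace\exp(\theta \mY).
\]
The main tool to bound the right side is Lieb's concavity theorem, which, applied through Tropp's master inequality, yields
\[
\E\trace\exp(\theta \mY) \;\leq\; \trace\exp\!\Bigl(\sum_k \log \E e^{\theta g_k \mathcal{H}(\mZ_k)}\Bigr).
\]
For both Rademacher and standard Gaussian $g_k$, a one-line calculation (Taylor expansion and $\cosh(x)\leq e^{x^2/2}$ in the Rademacher case, a direct Gaussian MGF in the other) gives the matrix subgaussian bound $\E e^{\theta g_k \mathcal{H}(\mZ_k)} \preceq \exp(\tfrac{\theta^2}{2}\mathcal{H}(\mZ_k)^2)$, so by monotonicity of $\trace\exp$ I obtain
\[
\E\trace\exp(\theta\mY) \;\leq\; \trace\exp\!\Bigl(\tfrac{\theta^2}{2}\sum_k \mathcal{H}(\mZ_k)^2\Bigr).
\]
A direct block computation shows $\mathcal{H}(\mZ_k)^2 = \mathrm{diag}(\mZ_k \mZ_k^*,\ \mZ_k^* \mZ_k)$, so $\sum_k \mathcal{H}(\mZ_k)^2$ is block diagonal with operator norm exactly $\sigma_Z^2$, and hence its trace exponential is at most $(d_1+d_2)e^{\theta^2 \sigma_Z^2/2}$. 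Plugging in and optimizing over $\theta$ with $\theta = t/\sigma_Z^2$ gives the advertised bound $(d_1+d_2)\exp(-t^2/2\sigma_Z^2)$.

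The main obstacle in carrying this out is the non-commutative step: one cannot naively diagonalize the summands simultaneously, so controlling $\E\trace\exp(\theta\mY)$ requires Lieb's concavity theorem to obtain subadditivity of matrix cumulant generating functions. Everything else (the dilation trick, the subgaussian MGF bound, the block computation of $\sum_k\mathcal{H}(\mZ_k)^2$, and the scalar optimization over $\theta$) is essentially mechanical once that inequality is in hand.
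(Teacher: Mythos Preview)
Your argument is correct and is precisely the standard proof of this matrix Rademacher/Gaussian series bound via Hermitian dilation plus the matrix Laplace transform and Lieb's concavity theorem. The paper itself does not prove this proposition at all---it is quoted verbatim as Corollary~4.2 of Tropp's user-friendly tail bounds paper and used as a black box---so there is no ``paper's own proof'' to compare against; your outline simply reproduces the proof from the cited reference.
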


\subsection{Proof of Theorem \ref{thm:initialization}}\label{sec:Theorem2-Proof}
We now give the proof of Theorem \ref{thm:initialization} by explicitly constructing a good initial guess: $(\vu_0,\vv_0) \in \frac{1}{\sqrt{3}}\setN_{d_0} \cap \frac{1}{\sqrt{3}}\setN_{\mu}  \cap \frac{1}{\sqrt{3}}\setN_{\nu}\cap \setN_{\frac{2}{5}\varepsilon}$, from the measurements $\vy$, and the knowledge of the model $\setA$. 

\begin{proof} 
	Recall that $\delta d_0 = \|\vh\vx^*-\vh_0\vx_0^*\|_F $, and under $(\vh,\vx) = (\boldsymbol{0},\boldsymbol{0})$, this implies that $\delta d_0 = \|\vh_0\vx_0^*\|_F = d_0$ giving 
	$\delta = 1$.  In this scenario, one can conclude from Lemma \ref{lem:local-RIP} that $\|\vh_0\vx_0^*\|_F^2 - \xi d_0^2 \leq \|\setA(\vh_0\vx_0^*)\|_2^2 \leq  \|\vh_0\vx_0^*\|_F^2 + \xi d_0^2$ with probability at least $
	1-2\exp\left(-c\xi^2QN/\mu^2\nu^2\right)$ whenever
	\begin{align}\label{eq:sample-complexity-initialization}
	QN \geq \frac{c}{\xi^2}\left( \mu^2\nu_{\max}^2 KN^2 + \nu^2 M\right)\log^4(LN).
	\end{align}
    This implies that 
\begin{align*}
|\< (\setA^*\setA-\setI)(\vh_0\vx_0^*), \vh_0\vx_0^*\>| 
&\leq \xi \|\vh_0\vx_0^*\|_F^2,
\end{align*}
and hence $\|\setA^*\setA(\vh_0\vx_0^*) - \vh_0\vx_0^*\|_{2 \rightarrow 2} \leq \xi d_0.$ Using triangle inequality, and \eqref{eq:linear-map}, we obtain 
\begin{align}\label{eq:A*(y)-h0x0-operator-norm}
\|\setA^*(\hat{\vy})-\vh_0\vx_0^*\|_{2\rightarrow 2} &\leq \|\setA^*\setA(\vh_0\vx_0^*)-\vh_0\vx_0^*\|_{2 \rightarrow 2} + \|\setA^*(\ve)\|_{2 \rightarrow 2}\notag\\
& \leq \xi d_0 +  \frac{2\varepsilon d_0}{50} \leq  \frac{3\varepsilon d_0}{50} := \gamma d_0,
\end{align}
where the last display follows from Lemma \ref{lem:noise-stability}, and choosing $\xi = \tfrac{\varepsilon}{50}$. Recall from Algorithm \ref{algo:initialization} that $d$, $\hat{\vh}_0$, and $\hat{\vx}_0$ denote the highest singular value of $\setA^*(\hat{\vy})$, the corresponding left, and right singular vectors, respectively. Assuming without loss of generality that $d_0 = 1$ gives $|d-1| \leq \tfrac{3\varepsilon}{50}$. Use this together with $\varepsilon \leq \tfrac{1}{15}$ to conclude that $0.9 d_0 \leq d \leq 1.1d_0$. 

The initializer $\vv_0$ of $\vx_0$ computed by solving a minimization program in Algorithm \ref{algo:initialization} is basically a projection of $\sqrt{d} \vx_0$ onto the convex set $\setZ = \{\vz | \sqrt{QN} \|\mC^{\otimes N}\vz\|_\infty \leq 2 \sqrt{d} \nu \}$. Now $\vv_0 \in \setZ$ implies that $\sqrt{QN}\|\mC^{\otimes N}\vv_0\|_\infty \leq 2 \sqrt{d} \nu \leq \frac{4\nu}{\sqrt{3}}$, and hence $\vv_0 \in \frac{1}{\sqrt{3}} \setN_{\nu}$. In addition, we have 
\begin{align}\label{eq:projection-lemma-result}
&\|\sqrt{d}\hat{\vx}_0- \vp\|_2^2 = \notag\\
&\|\sqrt{d}\hat{\vx}_0 - \vv_0\|_2^2 + 2\Re{\<\hat{\vx}_0-\vv_0,\vv_0-\vp\>}+\|\vv_0-\vp\|_2^2\notag\\
&\geq \|\sqrt{d}\hat{\vx}_0 - \vv_0\|_2^2+\|\vv_0-\vp\|_2^2
\end{align}
for all $\vp \in \setZ$, where the last inequality is the result of Lemma \ref{lem:convex-set-projection} on the inner product. A specific choice of $\vp = \boldsymbol{0} \in \setZ$ in the above inequality gives $\|\vv_0\| \leq \sqrt{d} \leq \frac{2}{\sqrt{3}}$, and hence $\vv_0 \in \frac{1}{\sqrt{3}}\setN_{d_0}$. We have thus shown that $\vv_0 \in \frac{1}{\sqrt{3}}\setN_{d_0} \cap \frac{1}{\sqrt{3}}\setN_{\nu}$. In an exactly similar manner, we can show that $\vu_0 \in \frac{1}{\sqrt{3}}\setN_{d_0} \cap \frac{1}{\sqrt{3}}\setN_{\mu}$. 

It remains now to show that $(\vu_0,\vv_0) \in \setN_{\frac{2}{5}\varepsilon}$. Begin by noting that $\|\setA^*(\hat{\vy})-\vh_0\vx_0^*\|\leq \gamma$ implies that $\sigma_i(\setA^*(\hat{\vy})) \leq \gamma$ for $i \geq 2$, where $\sigma_i(\setA^*(\hat{\vy}))$ denotes the $i$th largest singular value of the matrix $\setA^*(\hat{\vy})$. This implies using triangle inequality, and \eqref{eq:A*(y)-h0x0-operator-norm} that 
\begin{align}\label{eq:distnace-initialization}
&\|d \hat{\vh}_0\hat{\vx}_0^* - \vh_0 \vx_0^*\|_{2 \rightarrow 2} \leq \|\setA^*(\hat{\vy}) - d \hat{\vh}_0\hat{\vx}_0^*\|_{2 \rightarrow 2} +\notag \\
& \qquad\qquad \|\setA^*(\hat{\vy}) - \vh_0\vx_0^*\|_{2 \rightarrow 2} \leq 2 \gamma,
\end{align}
where  $d$ is the highest singular value of $\setA^*(\vy)$, and $\hat{\vh}_0$ and $\hat{\vx}_0$ are the corresponding singular vectors already introduced in Algorithm \ref{algo:initialization}. We also have 
\begin{align*}
& \|\hat{\vx}_0^*(\mI-\vx_0\vx_0^*)\|_2 = \|(\hat{\vx}_0\hat{\vh}_0^* \hat{\vh}_0 \hat{\vx}_0^*)(\mI-\vx_0\vx_0^*) \|_F \\
& = \|\hat{\vx}_0 \hat{\vh}_0^* (\setA^*(\vy) - d\hat{\vh}_0\hat{\vx}_0^* + \hat{\vh}_0\hat{\vx}_0^* - \vh_0\vx_0^*) (\mI-\vx_0\vx_0^*)\|_F\\
& \leq \|\hat{\vx}_0\hat{\vh}_0^*(\setA^*(\vy)-\vh_0\vx_0^*)(\mI-\vx_0\vx_0^*)\|_F + |d-1| \leq 2\gamma,
\end{align*}
where the second equality follows from $\vh_0\vx_0^*(\mI-\vx_0\vx_0^*) = \mathbf{0}$, and $\hat{\vx}_0\hat{\vh}_0^*(\setA^*(\vy)-d\hat{\vh}_0\hat{\vx}_0^*) = \mathbf{0}$. Denoting $\beta_0 = \sqrt{d}\hat{\vx}_0^*\vx_0$, the above inequality can be equivalently written as 
\begin{align}\label{eq:initialization-ineq}
\|\sqrt{d} \hat{\vx}_0 - \beta_0 \vx_0\|_2 \leq 2 \sqrt{d} \gamma.
\end{align}
Observe that $\vp = \beta_0\vx_0 \in \setZ$, which follows from $\sqrt{QN}|\beta_0|\|\mC^{\otimes N} \vx_0\|_\infty = |\beta_0| \nu \leq \sqrt{d} \nu < 2\sqrt{d}\nu$. Therefore, using $\vp = \beta_0\vx_0 \in \setZ $ in \eqref{eq:projection-lemma-result} gives $\|\sqrt{d} \hat{\vx}_0 - \alpha_0\vx_0\|_2 \geq \|\vv_0 - \beta_0\vx_0\|_2$, which combined with \eqref{eq:initialization-ineq} yields 
\begin{align}\label{eq:initialization-ineq-final}
\|\vv_0-\beta_0\vx_0\|_2 \leq 2 \sqrt{d} \gamma. 
\end{align}
In an exactly similar manner, one can also show that 
\begin{align}
\|\vu_0 - \alpha_0 \vh_0\|_2 \leq 2 \sqrt{d} \gamma,
\end{align}
where $\alpha_0 = \sqrt{d} \vh_0^*\hat{\vh}_0$. Finally, 
\begin{align*}
&\|\vu_0\vv_0^* - \vh_0 \vx_0^*\|_F \leq \|\vu_0\vv_0^* - \alpha_0 \vu_0 \vx_0^*\| _F \\
& \qquad\qquad + \|\alpha_0 \vu_0 \vx_0^*  - \alpha_0\beta_0 \vh_0 \vx_0^*\|_F + \|\alpha_0\beta_0 \vh_0\vx_0^* - \vh_0\vx_0^*\|_F \\
&\leq \|\vu_0\|_2\|\vv_0-\alpha_0\vx_0\|_2 + |\alpha_0|\|\vx_0\|_2 \| \vu_0 - \beta_0 \vh_0\|_2 \\
&\qquad + \|d\vh_0\vh_0^*\hat{\vh}_0\hat{\vx}_0^* \vx_0\vx_0^*  - \vh_0 \vx_0^*\|_F\\
& = \|\vu_0\|_2 \|\vv_0-\alpha_0\vx_0\|_2 +  |\alpha_0| \| \vu_0 - \beta_0 \vh_0\|_2 \\ 
&\qquad\qquad  + \|d\hat{\vh}_0\hat{\vx}_0^* - \vh_0 \vx_0^*\|_F \\
& \leq \frac{2}{\sqrt{3}} \cdot 2\sqrt{d} \gamma + \sqrt{d} \cdot 2 \sqrt{d} \gamma + 2\gamma < \frac{20}{3} \gamma,
\end{align*}
which amounts to showing that $\|\vu_0\vv_0^*-\vh_0\vx_0^*\|_F \leq \frac{2}{5}\varepsilon$ using $\gamma$ defined in \eqref{eq:A*(y)-h0x0-operator-norm}, and $d_0 =1$ as before. This shows that $(\vu_0,\vv_0) \in \setN_{\frac{2}{5}\varepsilon}$. Plugging the choice $\xi = \tfrac{\varepsilon}{50}$ in $
1-2\exp\left(-c\xi^2QN/\mu^2\nu^2\right)$, computed above, and in \eqref{eq:sample-complexity-initialization} gives the claimed probability, and sample complexity bound, respectively. 
\end{proof}
\begin{lem}[Theorem 2.8 in \cite{escalante2011alternating}]\label{lem:convex-set-projection} Let $\setZ$ be a closed nonempty convex set. There holds
	\begin{align*}
	\Re{\<\vq-\setP_{\setZ}(\vq),\vz - \setP_{\setZ}(\vq)\>} \leq 0, \ \forall \vz \in \setZ, \vq \in \C^W,
	\end{align*}
	where $\setP_{\setZ}(\vq)$ is the projection of $\vq$ onto $\setZ$. 
\end{lem}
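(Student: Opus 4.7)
The plan is to derive the obtuse-angle inequality from the defining variational property of the projection. Since $\setZ$ is closed, nonempty, and convex, the strictly convex and coercive function $\vw \mapsto \|\vw-\vq\|_2^2$ attains a unique minimizer on $\setZ$, namely $\setP_{\setZ}(\vq)$. The idea is then to perturb $\setP_{\setZ}(\vq)$ infinitesimally toward an arbitrary point $\vz\in\setZ$ along a chord lying inside $\setZ$ and to extract a first-order optimality condition from the fact that the squared distance to $\vq$ cannot decrease.

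Concretely, I would fix $\vz\in\setZ$ and set $\vw_t := (1-t)\setP_{\setZ}(\vq) + t\vz$ for $t\in[0,1]$. Convexity of $\setZ$ guarantees $\vw_t\in\setZ$, so minimality of $\setP_{\setZ}(\vq)$ forces $\|\vw_t-\vq\|_2^2 \geq \|\setP_{\setZ}(\vq)-\vq\|_2^2$. Writing $\vw_t-\vq = (\setP_{\setZ}(\vq)-\vq) + t(\vz-\setP_{\setZ}(\vq))$ and expanding the squared norm over $\C$ would give
\[
\|\setP_{\setZ}(\vq)-\vq\|_2^2 + 2t\,\Re{\<\setP_{\setZ}(\vq)-\vq,\,\vz-\setP_{\setZ}(\vq)\>} + t^2\|\vz-\setP_{\setZ}(\vq)\|_2^2 \;\geq\; \|\setP_{\setZ}(\vq)-\vq\|_2^2.
\]
Cancelling the common term, dividing by $t>0$, and letting $t\to 0^{+}$ would kill the quadratic remainder and leave $\Re{\<\setP_{\setZ}(\vq)-\vq,\,\vz-\setP_{\setZ}(\vq)\>}\geq 0$, which is exactly the claim $\Re{\<\vq-\setP_{\setZ}(\vq),\,\vz-\setP_{\setZ}(\vq)\>}\leq 0$ after a sign flip.

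There is essentially no obstacle here; this is the textbook first-order variational characterization of the metric projection onto a closed convex set. The one mild subtlety worth flagging is that we are working over $\C^{W}$ rather than $\R^{W}$, so inner products are complex-valued and the cross term produced when expanding $\|\vw_t-\vq\|_2^2$ equals $2t\,\Re{\<\cdot,\cdot\>}$ rather than $2t\<\cdot,\cdot\>$. Taking real parts throughout handles this cleanly and is precisely why the statement of the lemma only asserts the inequality for the real part of the Hermitian inner product, not for the full complex quantity.
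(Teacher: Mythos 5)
Your argument is correct and complete: the paper itself gives no proof of this lemma, citing it directly as Theorem 2.8 of \cite{escalante2011alternating}, so there is nothing to compare against. Your derivation is the standard first-order variational characterization of the metric projection onto a closed convex set, and your handling of the complex inner product (the cross term in the expansion of $\|\vw_t-\vq\|_2^2$ being $2t\,\Re{\<\setP_{\setZ}(\vq)-\vq,\,\vz-\setP_{\setZ}(\vq)\>}$, which is exactly why only the real part appears in the statement) is the right observation.
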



\end{document}